\definecolor{darkblue}{rgb}{0,0.08,0.45}
\definecolor{darkred}{RGB}{139,0,0}
\definecolor{Darkblue}{RGB}{0,0,139}
\definecolor{forestgreen}{RGB}{34,139,34}
 \definecolor{darkgreen}{RGB}{0,100,0}
\newcommand{\cmark}{\ding{51}}%
\newcommand{\xmark}{\ding{55}}%
\let\oldv\verbatim
\let\oldendv\endverbatim
\def\verbatim{\par\setbox0\vbox\bgroup\oldv}
\def\endverbatim{\oldendv\egroup\fboxsep0pt \noindent\colorbox[gray]{0.96}{\usebox0}\par}
\DeclareMathOperator{\E}{\mathbbm{E}}
\DeclareMathOperator{\Var}{\mathrm{Var}}
\DeclareMathOperator{\KL}{\mathrm{KL}}
\DeclareMathOperator{\R}{\mathbb{R}}
\DeclareMathOperator{\w}{\mathbf{w}}
\DeclarePairedDelimiter\floor{\lfloor}{\rfloor}
\newcolumntype{b}{X}
\newcolumntype{s}{>{\hsize=.75\hsize}X}
\begin{document}\sloppy
\title{Stacking for Non-mixing Bayesian Computations: \\The Curse and Blessing of Multimodal Posteriors}
\author{\name Yuling Yao \email yyao@flatironinstitute.org\\
	\addr Flatiron Institute\\
	New York, NY 10010, USA
	\AND
	\name Aki Vehtari \email aki.vehtari@aalto.fi\\
	\addr Department of Computer Science, Aalto University\\
	00076 Aalto, Finland
       \AND
\name Andrew Gelman \email gelman@stat.columbia.edu\\
	\addr Department of Statistics and of Political Science,
	Columbia University\\
New York, NY 10027, USA}

 \editor{}

\maketitle

\begin{abstract}
When working with multimodal Bayesian posterior distributions, Markov chain Monte Carlo (MCMC) algorithms have difficulty moving between modes, and default variational or mode-based approximate inferences will understate posterior uncertainty.  And, even if the most important modes can be found, it is difficult to evaluate their relative weights in the posterior.
Here we propose an approach using parallel runs of MCMC, variational, or mode-based inference to hit as many modes or separated regions as possible and then combine these using Bayesian stacking, a scalable method for constructing a weighted average of distributions.
The result from stacking efficiently samples from multimodal posterior distribution, minimizes cross validation prediction error, and represents the posterior uncertainty better than variational inference, but it is not necessarily equivalent, even asymptotically, to fully Bayesian inference.
We present theoretical consistency with an example where the stacked inference approximates the true data generating process from the misspecified model and a non-mixing sampler, from which the predictive performance is better than full Bayesian inference, hence the multimodality can be considered a blessing rather than a curse under model misspecification.
We demonstrate practical implementation in several model families: latent Dirichlet allocation,  Gaussian process regression, hierarchical regression, horseshoe variable selection, and neural networks.
\end{abstract}

\begin{keywords}
	Bayesian stacking,  Markov chain Monte Carlo, model misspecification,   multimodal posterior, parallel computation, postprocessing.
\end{keywords}

\section{Introduction}\label{sec_intro}

Bayesian computation becomes difficult when posterior
distributions are multimodal or more generally metastable, that is, with high-probability regions separated by regions of low probability. Such pathology commonly arises with mixtures \citep{stephens2000dealing},  hierarchical models \citep{liu2003posterior}, and overparametrized models \citep{izmailov2021bayesian}.
It is impossible in general to compute moments analytically or to directly draw simulations,  variational and mode-based approximations can yield poor fits to the posterior \citep{yao2018yes}, and general-purpose Markov chain Monte Carlo algorithms can have problems moving between modes \citep{rudoy2006monte}. For example, an optimally tuned Hamiltonian Monte Carlo sampler for a bimodal density mixes as poorly as a random-walk Metropolis sampler \citep{mangoubi2018does}.

The extra challenge is that problems in sampling and modeling are confounded. Even if we can sample from truly multimodal distributions, the posterior multimodality signifies that the true data are unlikely to have been generated  from any single parameter in the model, so that the Bayesian posterior itself, which has to concentrate somewhere in the limit, may not be appropriate.

One way to explore a multimodal space is to run many chains of MCMC or variational inference from dispersed starting points, but then the question arises of how to combine non-mixing inferences. 
Even if all modes are found, it is difficult to compute their relative weights in the posterior distribution, as this requires integration over the posterior density within each mode.
Consider $K$ chains of parameter vectors, where the $k$-th chain contains $S_k$ draws, $(\theta_{k1}, \dots, \theta_{kS_k})$. We  consider a generalized form of Monte Carlo estimate for any integral function $h(\theta)$ from a chainwise weight $\w=( w_{1}, w_2, \ldots, w_{K})$:
\begin{equation}\label{eq_MC_final}
	\E( h(\theta) | \w)\approx  \sum_{k=1}^K \sum_{s=1}^{S_k}  w_k S_k^{-1} h(\theta_{ks}).	\end{equation}
The usual Monte Carlo estimate corresponds to 	
\emph{uniform weighting}: $ w_{k}= 1/K, 1 \leq k \leq K$. Even for non-mixing MCMC, averaging using equal weights can outperform using any single chain \citep[e.g.,][]{hoffman2020}, but it should be possible to do better. Equal weighting is convenient, but is not in general justified, and the result can strongly depend on starting points.

The present paper provides a practical and scalable solution to the problem of representing multimodal posterior distributions through sampling when all that is available are non-mixing chains.
We propose to \emph{stack} them and compute the optimal weights in order to minimize the  prediction loss. 
Stacking \citep{wolpert1992stacked, breiman1996stacked,leblanc1996combining,  clarke2003comparing} and its Bayesian variants \citep{clyde2013bayesian,  le2017bayes, yao2018using, yao2021bayesian} are model averaging techniques for combining a discrete set of fitted models in the setting where we have data $y=(y_i)_{i=1}^n$ and models $M_1, \dots, M_K$, each having its own parameter vector $\theta_k \in \Theta_k$, likelihood, and prior. When using stacking to combine Bayesian models, we first fit each model to  obtain its posterior  distribution $p(\theta_k|y, M_k)$, and we then maximize the leave-one-out log predictive density of the combined model, $$\max_{\w}\sum_{i=1}^n \log \left(\sum_{k=1}^K w_k \int_{\Theta_k}  p (y_i|\theta_k,  M_k) p\left(\theta_k| M_k,  \{y_{i^\prime}  : {i^{\prime}\neq i}\}  \right) d\theta_k  \right),$$ where $\w$ is a simplex vector of model weights.
In this paper, we extend stacking to combine multiple chains fitting the same model. The idea is simple: We explore modes using many runs of parallel inferences and random initialization, evaluate the predictive performance of each mode using cross validation, seek the weights such that the combined-chain-inference provides the optimal posterior predictions, and plug this optimal chain weight into the weighted Monte Carlo form \eqref{eq_MC_final}.
Nevertheless, directly applying Bayesian stacking for non-mixing computations involves two challenges:
\begin{itemize}
    \item The computational challenge comes from cross validation: the exact ``cross validation of modes'' is not only expensive, but also not well defined because data split (leave-data-out) can move, merge, or create posterior modes.  We  propose an importance sampling scheme to avoid the cost and ambiguity of cross validation. 
      \item  The conceptual challenge is that our goal of minimizing prediction loss is not the same as the typical goal of multimodal sampling (to approximate the exact Bayesian posterior distribution); hence the stacked-chain inference differs from  Bayesian inference. We argue that, in the presence of posterior multimodality, predictive performance is the more relevant goal.
\end{itemize}

The rest of the paper is organized as follows.
Section \ref{sec:method} details our method and practical implementation to deal with non-mixing chains for Bayesian computation.  In Section \ref{sec_toyexample}, we provide intuition by discussing various types of posterior multimodality their relations to model misspecification in a simple example.
We review related methods in Section \ref{sec_related}.
In Section \ref{sec_theory} we show the asymptotic optimality of the proposed method via a theoretical example in which the stacked-chain inference fits data better than the exact posterior density even asymptotically. The effectiveness of stacking is best demonstrated by applying it to a series of challenging problems that represent different sorts of posterior distributions that arise in applied statistics. Therefore, Section \ref{sec:example} uses chain-stacking to address posterior multimodality and slow mixing in several challenging classes of model: latent Dirichlet allocation, Gaussian process regression,  variational inference in horseshoe regression, and Bayesian neural networks.

\section{An Approach to Inference From Non-mixed Computation: Parallel Approximation and Stacking}\label{sec:method}
	
	\subsection{Analyzing and Reweighting Simulations from Multiple Chains}
	
	We are working with the general setting of data  $y=\{y_{1},\dots, y_n\}$,  model $p(y, \theta)$, and the goal of posterior inference on $p(\theta|y)$. To start, we assume we have some existing computer program that attempts to draw samples from   $p(\theta|y)$ but might get trapped in a single mode or, more generally, a small part of the distribution.  For the present paper, all that is necessary is that the algorithm produces \emph{some} set of posterior draws, which can be obtained by generic Markov chain Monte Carlo sampling such as from Stan \citep{stan2020},
	variational inference \citep{blei2017variational}, or mode-based approximation such as Laplace's method or expectation propagation \citep{vehtari2014expectation}.

	\paragraph{Step 1: Parallel evaluation.}  We run our program $M$ times from different starting points to have a chance to explore many modes or areas of the target distribution.  We also recommend an overdispersed initialization.  Using multiple starting points is not a new idea in statistical computation, but we emphasize that our goal here is \emph{exploration}, without the expectation that the chains will mix with each other, nor that all modes and separated regions are reached.  It could, for example, make sense to run the simulation algorithm in parallel on a large number of processors in a cluster.
	
	In MCMC methods, it is often easier to achieve within-chain mixing than between-chain mixing. This is especially true for distributions with isolated modes. To monitor within-chain mixing, we use split-$\widehat{R}$  \citep{vehtari2019rank}. That is, given each individual chain, we start by discarding the simulations from the warmup or adaptation phase, then we split the saved iterations into two halves (to enable detection of nonstationarity when the first and second half of a chain are discordant). 
	For each scalar parameter $x$, we denote these two halves by  $x^{(1)}, \cdots, x^{(S/2)}$ and $x^{(1+S/2)}, \cdots, x^{(S)}$.  We compute the half-wise mean $\bar x^{(1)}= \frac{2}{S} \sum_{s=1}^{S/2} x^{(s)} $ and  $\bar x^{(2)}= \frac{2}{S} \sum_{s=1+S/2}^{S} x^{(s)}$,  and the chain-wise mean $\bar x= \frac{1}{S} \sum_{s=1}^{S} x^{(s)}$.  We then compute the between- and within-half variances,
	$$B = S \sum_{m=1}^2 (\bar x^{(m)}- \bar x)^2,~W= \frac{1}{S-2} \sum_{m=1}^2   \sum_{s=1}^{S/2} ( x^{\left(s+ S(m-1)/2\right)} -\bar x^{(m)})^2.
	$$
	We define split-$\widehat{R} =  \sqrt {\frac{S-2}{S} + \frac{2B}{SW}}$. In most simulations we experimented, it is fairly easy to have split-$\widehat{R} <1.05$ for most chains, indicating good within-chain mixing.

	\paragraph{Step 2 (optional): Clustering.} 
	We can use a between-chain mixing measure such as $\widehat{R}$ \citep{gelman1992inference, vehtari2019rank} to partition the $M$ parallel simulations into $K$ clusters, each of which  approximately captures the same part of the target distribution. Label the simulations from cluster $k$ as $(\theta_{ki}, i=1,\ldots,S_k)$, with the total number of draws being $S=\sum_{k=1}^K S_k$.  This step is optional and recommended if the number of parallel runs $M$ is large.
	
	To keep notation coherent, when the clustering step is skipped, we denote $K=M$ and $\theta_{ks}$ as the $s$-th sample in the $k$-th chain.  Throughout the paper, we use $1\leq i \leq n$ to index outcome observations, $1\leq k \leq K$ to index clusters (chains, optimization runs), and $1\leq s \leq S$ to  index posterior draws.

	\paragraph{Step 3: Reweighing non-mixing chains using stacking.}   
	
	From the previous two steps, we assume $\theta_{ks}$ come from a stationary distribution $p_k(\theta|y)$, which in general do not mix, nor do they match the exact posterior $p(\theta|y)$.
	
	
	We seek an optimal weight in \eqref{eq_MC_final} that maximizes the leave-one-out cross validation performance of the distribution formed from the weighted average of the simulation draws.  This first requires estimation of the pointwise leave-one-out (loo) log predictive density \citep[lpd,][]{gelman2014understanding} from the $k$-th cluster (chain):
	\begin{equation}\label{eq_elpd}
	\log p_k(y_i|y_{-i})= \log  \int_{\theta\in \Theta}  p(y_i | \theta )  p_k(\theta| y_{1, \ldots, i-1, i+1,  \ldots, n} )d \theta, \quad i=1, \ldots, n,~ k=1,\ldots, K.
	\end{equation}
	Second, we solve
	\begin{equation}\label{eq_stacking_objective}
	\w_{1, \ldots, K}^*=\arg\max_{\w\in \mathbbm{S}(K)} \sum_{i=1}^n \log \sum_{k=1}^K w_k p_k(y_i|y_{-i}) + \log p_{\mathrm{prior}}(\w), 
	\end{equation}
	where $\mathbbm{S}(K)$ is the space of $K$-dimensional simplex 
	\begin{equation*}
		\mathbbm{S}(K)=\{w: 0 \leq w_{k} \leq 1, \forall 1\leq k\leq K; ~ \sum_{k=1}^K w_k =1 \},
	\end{equation*}
	and  $p_{\mathrm{prior}}(\w)$ is prior regularization.
	
	In Section \ref{sec_detail_implement},  we explain how to approximate  the $\log p_k(y_i|y_{-i})$ terms by importance sampling---it suffices to fit all the full data once in each chain. We will also discuss the choice of priors $p_{\mathrm{prior}}(\w)$.

	Finally,  
	plugging the stacking wights $w_1^*,\ldots ,w_K^*$ into \eqref{eq_MC_final} yields the chain-weighted Monte Carlo estimates. The resulting approximation of the target distribution uses $\sum_{k=1}^K {S_k}$ draws,  with each $\theta_{ks}$ having weight $w_k^*/S_k$.
	
	
	\paragraph{Step 4: Monitoring convergence.}
	After $K$ parallel runs, we cannot exclude the possibility that another local mode or separated posterior region has been overlooked.  
	When there is a discrete combinatorial explosion, it is essentially impossible to capture the full support of the distribution.  So we are implicitly assuming that we have a rough sense of the support of most of the posterior mass, or, conversely, that we were previously willing to approximate the target distribution using a single mode, in which case we would hope a multimodal average to be an improvement. 
	
	On the other hand, there is no need to capture all modes that are predictively identical. We monitor the weighted log predictive density as a function of how many components are added in stacking. Ideally, we should test it over  an independent hold-out test data set,
	and stop when the log predictive density  of the stacked posterior reaches the maximum. Alternatively, we can use cross validation.  For each $K^\prime \leq K$, obtain stacking weights $w^{K^\prime}_k$ from chains $1, \dots, K^\prime$, and monitor the stacked log predictive density as a function of the number of chains $K^{\prime}$, which typically monotonically increases:
	\begin{equation}\label{eq_monitor}
	\mathrm{lpd}_{\mathrm{loo}} (K^\prime) =  \sum_{i=1}^n \log \sum_{k=1}^{K^\prime} w^{K^\prime}_k p_k(y_i|y_{-i}),  ~1\leq K^\prime \leq K.
	\end{equation}
	We terminate if $\mathrm{lpd}_{\mathrm{loo}} (K^\prime)$ becomes relatively stable. Otherwise, we sample extra chains  and repeat steps 1--4 on all chains.

	\subsection{Practical Implementation}\label{sec_detail_implement}
	\paragraph{Leave-one-out posterior distributions.}
	Let $p_k(\theta|y)$ be the stationary distribution from which the $k$-th cluster (chain) is drawn. Working with the exact leave-one-out distributions $p_k(\theta|y_{-i})$ in $\eqref{eq_elpd}$
	is not only computationally intensive (requiring the model to be fit $n$ times) but also conceptually ambiguous:  Using full data and given initialization, the sampler obtains $\theta_{k1}, \dots, \theta_{kS_k}$ from the $k$-th region.  After $y_i$ is removed,  what if the sampler from the same initialization reaches  another mode, or what if there is a phase transition and there are no longer $K$ clusters?

	We avoid the ambiguity  by defining  $p_k(\theta|y_{-i})$ to be
	\begin{equation}\label{eq_loo}
	p_k(\theta|y_{-i}) \coloneqq  \frac{p_k(\theta|y)   / p(y_i|\theta)} {\int_{\theta\in \Theta} p_k(\theta|y)   / p(y_i|\theta)} .
	\end{equation}  	This definition is backward compatible with the usual cross validation of models, in which the leave-one-out posterior density (of a model) is
	$
	p(\theta|y_{-i}) \propto  p(\theta|y_{-i}) p(\theta) =  p(\theta|y) / p(y_i | \theta).
	$ 

	\paragraph{Efficient approximation of leave-one-out distributions.}
	We use Pareto smoothed importance sampling \citep[PSIS,][]{  vehtari2015pareto} to compute the defined LOO posterior \eqref{eq_loo}.  It suffices to only fit the full model once per chain.  For each chain $k$, we obtain the raw leave-one-out importance ratios $1/p(y_{i} |\theta_{ks} ), \, i=1,\dots,n$ and stabilize these by replacing the largest ratios by the expected order statistics in a fitted generalized Pareto distribution and followed by right truncation.  Labeling the Pareto-smoothed  importance ratio as ${r_{iks} }$, we approximate $p_k(y_i|y_{-i})$ by 
	\begin{equation}\label{eq_LOO_approx}
	p_k(y_i|y_{-i})  \approx  \frac{ \sum_{s=1}^{S_k} p_k(y_i|\theta_{ks}) r_{iks} }{ \sum_{s=1}^{S_k}  {r_{iks} }}, ~ k=1, \dots, K, ~ i=1, \dots, n .
	\end{equation}
	This is asymptotically ($S_k\to \infty$) unbiased and consistent to the definition \eqref{eq_loo}.  The finite-sample reliability and convergence rate can be assessed using the estimated shape parameter $\hat k$ of the fitted generalized Pareto distribution.  We refer to \citet{vehtari2017practical, Vehtari2019loo} and Appendix B of this paper for detailed algorithms and software implementation.
	
	In summary, after parallel sampling, the extra computation costs of stacking only involve summations in \eqref{eq_LOO_approx} and a length-$K$-vector optimization in \eqref{eq_stacking_objective}, which are negligible compared with the cost of sampling.
	
	\paragraph{Prior on stacking weights.}
	Extra priors beyond a simplex constraint in model averaging have been considered \citep{le2017bayes, yao2018using} but seldom applied in practice. 
	Under a flat prior $p_{\mathrm{prior}}(\w)= 1$, the optimum in \eqref{eq_stacking_objective} is nonidentified and numerically unstable if two simplexes  $w^\prime\neq w^{\prime \prime}$ entail the identical prediction $\sum_k w^\prime_k p_k(\cdot|y)= \sum_k w^{\prime \prime}_k p_k(\cdot|y).$   We need an informative prior for the \emph{predictive power} versus \emph{Monte Carlo error} tradeoff. 
	
	If all chains are distributed identically, and within chain sampling is independent, the variance of \eqref{eq_MC_final} will be  
	$\Var  \left(    \sum_{k=1}^K \sum_{s=1}^{S_k} {w_k}{S_k}^{-1}  h(\theta_{ks})  \right) =      \sum_{k=1}^K    {w_k}^{2}{S_k}^{-1}    \Var \left(h(\theta)  \right).$ As a function of simplex $\w$, this variance is minimized when  $w_k= S_k/\sum_{k^\prime} S_{k^ \prime}$. This justifies the uniform weights $1/K$ in the usual multi-chain Monte Carlo scheme where, after complete mixing, any weighting yields unbiased estimates.
	
	Further, when the $k$-th chain has an effective sample size $S_{\mathrm{eff}, k}$ \citep{vehtari2019rank}, we approximate the variance of the Monte Carlo estimate \eqref{eq_MC_final} to be
	$\Var  \left(    \sum_{k=1}^K \sum_{s=1}^{S_k} {w_k}{S_k}^{-1}  h(\theta_{ks})  \right) =      \sum_{k=1}^K    {w_k}^{2}{S_{\mathrm{eff}, k}}^{-1}    \Var \left(h(\theta)  \right), $
	whose minimum will be attained at $w_k= S_{\mathrm{eff}, k}/\sum_{k^\prime}S_{\mathrm{eff}, k^\prime}$. This also suggests we can estimate the  the effective sample size of $\w$-weighted samples by: 
	$$\hat S_{\mathrm{eff}}:= \left( \sum_{k=1}^K {w_k}^{2}{S_{\mathrm{eff}, k}}^{-1}\right) ^{-1}.$$
	
	To reduce Monte Carlo error, we partially pool stacking weights using a Dirichlet prior with a tuning scale parameter $\lambda>0$ that controls the amount of partial pooling, 
 \begin{equation}\label{eq_prior}
p_\mathrm{{prior}} (w_{1, \dots, K})  = \mathrm{Dirichlet} \left(   \frac{\lambda  S_{\mathrm{eff}, 1}}  {  \sum_{k^\prime=1}^K   S_{\mathrm{eff}, {k^\prime}   }}, \dots,   \frac{\lambda  S_{\mathrm{eff}, K}} { \sum_{k^\prime=1}^K   S_{\mathrm{eff}, k^{\prime}}} \right).
\end{equation}
	We add this regularization term into \eqref{eq_stacking_objective}.  If  $\lambda=1$ and $S_{\mathrm{eff}, k}$ is equal for all $k$, \eqref{eq_stacking_objective} becomes the unregularized  Bayesian stacking. 
	For any $\lambda>1$, the optimization is strictly convex on $\w$. 
	If $\lambda \to \infty$ and $S_{\mathrm{eff}, k} \propto S_k$, \eqref{eq_stacking_objective} results in the usual Monte Carlo estimate $w_{k}/S_{k}=1/S$. 	
	Ideally, $\lambda$ can be further tuned using hold-out data or extra cross validation. In later experiments of this paper, we simply use $\lambda=1.001$ as a rule of thumb.

	\paragraph{Thinning and importance resampling.}
	For settings where it is inconvenient to work with weighted simulation draws, we can perform thinning to obtain a set of $S_{\mathrm{thin}}$ simulation draws approximating the weighted mixture of $K$ distributions. We further adopt quasi Monte Carlo strategy to reduce variance.  Given weights $\{w_k\}_{k=1}^K$ for $K$ clustered simulation draws $\{\theta_{ks}\}_{k=1, s=1}^{K,~~ S_k}$, and an integer $S_{\mathrm{thin}} \leq \inf_k (S_k / w_k)$,  we first draw a fixed-sized $S_k^*=  \floor{S_{\mathrm{thin}} w_k}$ sample randomly without replacement from the $k$-th cluster, and then sample the remaining $S_{\mathrm{thin}}- \sum_{k=1}^K S_k^*$ without replacement with the probability proportional to  $  \left( w_k- S_k^* /S_{\mathrm{thin}}\right)$ from  cluster $k$.  
	
We  implement  related functions together to facilitate chain-stacking in an \texttt{R} package \texttt{loo} that works seamlessly with Stan. See Appendix B for an example.

\section{Modes:  The Good, the Bad, and the Ugly}\label{sec_toyexample}
Before more theoretical discussions,  we first develop intuition by considering variations on a theme:  four examples of mixture models that demonstrate  different types of posterior multimodality: where do the modes come from and when do they matter. 
	\begin{figure}[!ht] 
		\centering
		\vspace{0em}
		\includegraphics[width=1\linewidth]{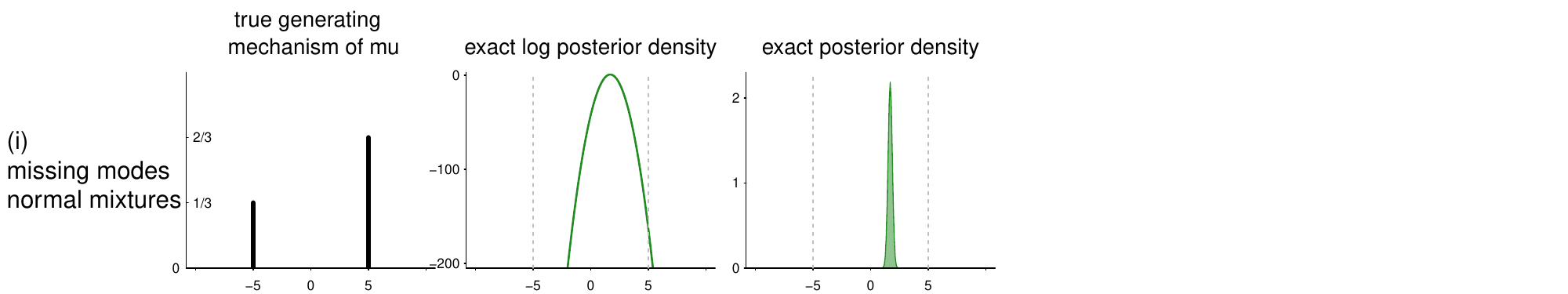}
		\includegraphics[width=1\linewidth]{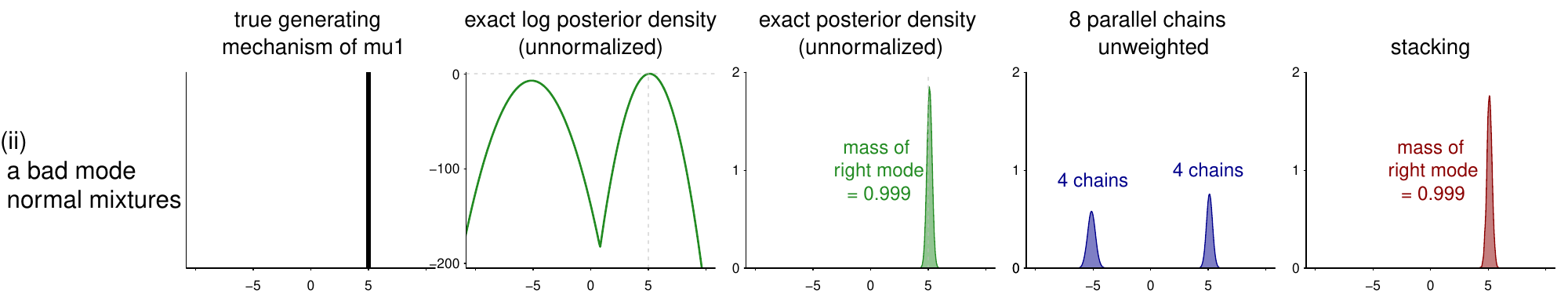}
		\includegraphics[width=1\linewidth]{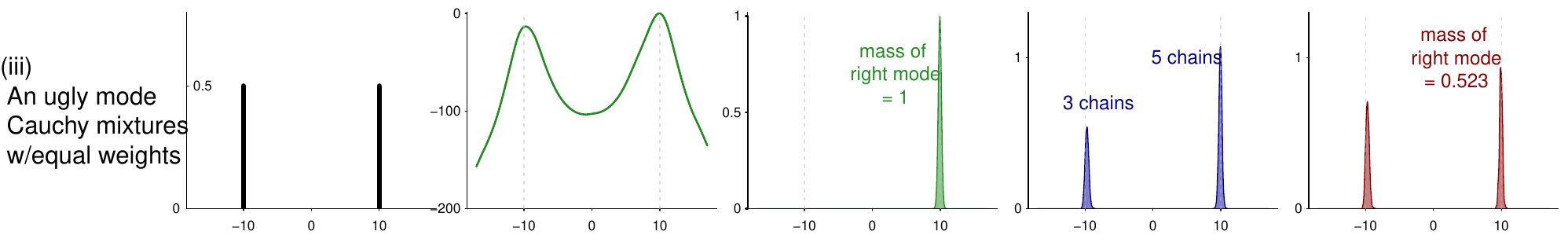}
		\includegraphics[width=1\linewidth]{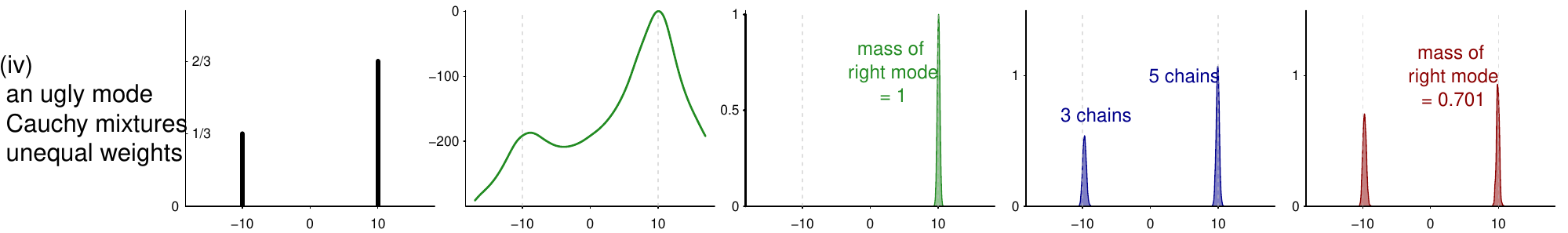}
		{\footnotesize
			\centering	
			\vspace{3em}
			\begin{tabularx}{ \textwidth}{s s s  b b b }
				{\color{black} {\bf Summary:}   example} &  {\color{black} \#   modes \newline in  DG} &{\color{forestgreen} \#   modes in  posterior }&{\color{forestgreen}posterior $\to$    DG as $n\to \infty$?}  & {\color{Darkblue}unweighted chains  approx.  DG?}& {\color{darkred}stacking  approx. DG?}\\ 
				\hline 
				i & 2 & 1 & \xmark   & \xmark & \xmark \\ 
				ii & 1 & 2 & \cmark  & \xmark & \cmark \\ 
				iii & 2 & 2 & \xmark  & \xmark & \cmark \\ 
				iv  & 2 & 2 & \xmark  & \xmark & \cmark 
			\end{tabularx} 
		}	
		\vspace{-3em}
		 \caption{\em Under a multimodal data generating mechanism, the exact Bayesian posterior can miss  the modes (row (i)) or over-concentrate at one mode (rows (iii)--(iv)).    
			Stacking, our proposed method, approximates the data generating process well in (ii)--(iv).  
			The sample size is $n=30$ in (i)--(ii) and $n=100$ in (iii)--(iv). 
		}
		\label{fig_mode1}
	\end{figure}

\begin{enumerate}[label=(\roman{enumi}),  leftmargin=1.5em]
	\item A missing mode:
	We draw $n$ points $y=(y_i,\dots,y_n)$ independently from the mixture, $\frac{2}{3}\mbox{normal}(5,1) +\frac {1}{3}\mbox{normal}(-5,1)$.  We  
	fit the model $y_i | \mu \sim \mbox{iid~} \mbox{normal}(\mu,1)$ with a flat prior on $\mu$. The true data generating process (DG) is expressed by $\mu \sim \frac{2}{3} \delta(5) +  \frac{1}{3} \delta(-5)$, that is, a mixture of point masses at $\mu=5$ and $\mu=-5$ with mixing probabilities 2/3 and 1/3. 
	But the Bayesian posterior density, $p(\mu|y) = \mbox{normal}(\bar y, 1/\sqrt{n})$, is unimodally concentrated at $\mu=\bar y \approx 5/3$ and cannot catch the two modes in data.
	
	\item A bad mode: 
	With the same data $y$ above, now we fit a two-component normal model $y\sim \frac{2}{3}\mbox{normal}(\mu_1,1)+ \frac{1}{3}\mbox{normal}(\mu_2,1)$ with known mixture probability and a flat prior on $\mu_1, \mu_2$. The model is identifiable, but the resulting posterior is bimodal, centered around $(\mu_1, \mu_2) = (5, -5)$ and $(-5, 5)$ respectively. Asymptotically ($n\to \infty$) the posterior converges to the first mode, thereby the data generating process, but the existence of a second artifact mode both challenges the sampling and compromises the prediction with finite data sample size.  In Figure \ref{fig_mode1} we simulate $n=30$ data points and run eight parallel chains. Four chains are trapped in the ``wrong'' mode. 
	
	\item  An ugly mode:
	We generate data $y_1, \dots, y_{n}$ iid from $\frac{1}{2}\mathrm{Cauchy}(10,1) + \frac{1}{2}\mathrm{Cauchy}(-10,1)$. We fit a one-component model $y\sim \mathrm{Cauchy}(\mu,1)$ with a flat prior.  The true data generating process is expressed by $\mu \sim \frac{1}{2} \delta(10) +  \frac{1}{2} \delta(-10)$. In the limit ($n\to \infty$), the posterior density will be  concentrated at one of two points $\mu \approx \pm 9.8$.  In the simulation with $n=100$, the right-side posterior mode contains almost 100\% mass (up to the precision $10^{-6}$). The induced predictive model then only describes half of the data.  Stacking, as implemented in this paper, assigns a weight of $0.52$ to the right-side mode, achieving a much better prediction compared to the data generating process.
	
	\item Another ugly mode:
	We draw $n$ data points $y_i,\dots,y_n$ independently from the mixture model, $\frac{2}{3}\mathrm{Cauchy}(10,1) + \frac{1}{3}\mathrm{Cauchy}(-10,1)$ and again fit a one-component model $y\sim \mathrm{Cauchy}(\mu,1)$ with a flat prior.  The posterior $p(\theta|y)$  carries almost all masses on the right-side mode $\theta \approx 10$, while our proposed method still approximates the true data generating process. 
	\end {enumerate}

Figure \ref{fig_mode1} illustrates the true distributions of  $\mu$, the unnormalized log Bayesian posterior density $\log p(\mu|y)$, the  unnormalized Bayesian posterior density $p(\mu|y)$,  the distribution of uniformly weighted chains (aggregate 8 parallel chains without adjustment), and the stacked-chain inference of $\mu$. Each row is one example above. 
In example (ii), one of the modes is purely an artifact: Not drawing a posterior sample around it improves finite-sample predictions.  Such artifact-type modes are found in cases of prior-data conflict, label-switching, aliasing \citep{bafumi2005practical}, mixture and cluster-based models \citep{stephens2000dealing, blei2003latent},   and hierarchical models \citep{liu2003posterior}. But in other examples, the data generating process (DG) can be expressed via a bimodal distribution on $\mu$. In example (i), the Bayesian posterior $p(\mu|y)$  converges to some middle point. In (iii)--(iv), the posterior overconfidently concentrates at one of the modes and ignores the other, even when data are truly generated from these two modal points with equal probability (iii).

Our proposed approach of weighting modes using stacking is suitable for all these scenarios (except example (i) as there is only one mode).   We will revisit this Cauchy mixture in Section \ref{sec_theory} and prove its limiting behavior analytically, in which Bayesian inference almost surely overconfidently concentrates, while our proposed method recovers the \emph{true} data generating process from the \emph{wrong} model and \emph{wrong} inference.
	
	

	\section{Related Work}\label{sec_related}
Our work is mostly motivated by model averaging methods.
Under an ideal assumption that all regions of the posterior distributions have been fully explored, chain/cluster $k$ samples from   a local distribution $p_k(\theta | y)$ on an  attraction regions $\Theta_{k}$,  and these regions are well separated; that is,	\begin{equation}\label{eq_importance_weight_assumption}
	\Theta=\bigcup_{k=1}^K \Theta_k;    \quad  \forall  k^{\prime}\neq k, ~p_k \left( \Theta_{k^\prime}\right) \approx 0;   \quad \mathrm{s.t.} ~ \forall \theta \in \Theta_k, ~   p(\theta|y) \approx  \alpha_k p_k (\theta|y ). 
	\end{equation}
 Using weights $w_k= \alpha_k$ in the weighted Monte Carlo expression matches the usual Monte Carlo computation from the exact posterior draws. This $\alpha_k$ weighting can be interpreted as   Bayesian model averaging \citep[BMA; ][]{madigan1996bayesian, hoeting1999bayesian} on a discrete model space where model $k$ has posterior density $p_{k}(\theta|y)$. The marginal  likelihood of cluster $k$ is thus
 $\int_{\Theta} p(y, \theta) \mathbbm{1}(\Theta_k) d\theta  \approx  \alpha_k$. To evaluate this integral has the usual difficulty as in marginal likelihood computation.

As a computationally-easier alternative to BMA,  \citet{yao2018using}  introduced pseudo-BMA weighting for model averaging. Applying to our context,  the  pseudo-BMA  weight for cluster $k$ is
	$$\alpha_k^{\mathrm{(pseudo\!-\!BMA)}}  \propto   \exp\left(  \sum_{i=1}^n \log p(y_{i} |y_{-i} ,\Theta_k) \right) 
	\approx  \exp\left( \sum_{i=1}^n \log \sum_{s=1}^{S_k}  \frac{r_{iks}   p(y_{i} |\theta_{ks})}{r_{iks}}   \right) ,
	$$
	where $r_{iks}$ is the same  leave-one-out importance ratio  in \eqref{eq_LOO_approx}. To stabilizes the weights,  
	\citet{yao2018using}  further  recommended the Bayesian bootstrap. \citet{fong2019scalable} adopted a similar strategy to tackle multimodal sampling. 
	
	In comparison, BMA is fully Bayesian under assumption \eqref{eq_importance_weight_assumption} and the correct model specification.  However, in many approximate inferences, the local approximation  $p_k(\cdot|y)$ is underdispersed and BMA loses mass. When using multi-chain MCMC, $\Theta_k$ are often duplicate (without clustering) or overlapped, making BMA weighting sensitive to the distribution of starting points of chains. Furthermore, \citet{yao2018using} noted that  BMA and pseudo-BMA can overweight ``bad'' modes when they are oversampled.  This is related to the discussion by  \cite{geyer1992practical} that a simple unweighted average over non-mixing chains only helps when the starting distribution is close to the target density---the scenario in which other naive methods will work, too. In contrast, our proposed stacking approach is invariant to chain duplication and not sensitive to chain initialization other than the requirement that all relevant modes are explored by random starting points. This is	because the optimization \eqref{eq_stacking_objective}  only depends on the set of distinct densities $p_k(\theta|y)$, not the proportion of how many chains are trapped to these densities. 

 Our approach uses a divide-and-conquer strategy that is embarrassingly parallelizable and eliminates between-chain communication, which often dominates the budget of parallel computations \citep{scott2016bayes}.  Because of the fast mixing  rate of  Hamiltonian Monte Carlo (HMC) in log-concave distributions \citep{beskos2013optimal}, the bottleneck of modern Bayesian computation is often not the input dimension, but the slow mixing rate arising from awkward geometry of metastable distributions. In general, Bayesian inference can be more scalable in the advent of  parallel distributed computation.  Various subsampling methods have been introduced that distribute data batches to parallel nodes and aggregate the resulting inference \citep{huang2005sampling, welling2011bayesian,angelino2016patterns,    mesquita2019embarrassingly, quiroz2019speeding}.   These methods typically rely on approximations to rescale the subsampled posteriors, and can work poorly with posterior multimodality.  
	
It is not a new idea to use random starting points.
\citet{gelman1992inference} used multiple sequences and importance resampling to approximate the posterior distribution, where each individual chain was iteratively constructed from a local Student-$t$ approximation at posterior mode. However, a poor initial point can still lead to slow convergence \citep{geyer1992practical} because of the use of importance sampling.  In our approach, we are less concerned about starting points and only prefer them to be overdispersed.   \citet{raftery1992many, raftery1992} suggested abandoning poor initial points coming with slow convergence rate and high autocorrelation by restarting.  In the context of multimodality, it is hard to tell if this represents a poor initialization (that sits near the boundary of an attraction region) or a bad mode. A restart may lose the chance to explore some posterior regions.
	
	Our convergence criteria in Section  \ref{sec_detail_implement} are similar to the early approaches on stochastic optimization stopping rules  following the capture-recapture model \citep{good1953population, robbins1968estimating,  finch1989probabilistic}. Those analyses were focused on the convergence in parameter space, while ours are directly targeted at the outcome space and are thereby more applicable to models with a large number of disjoint but functionally identical modes.

The stacking strategy is applicable  to multiple runs of approximate inference; see examples in Section \ref{sec_example_vi}. Using mixture distributions to enrich the expressiveness of variational Bayes is not new. Earlier works have used a mixture of  mean-field approximations to  match the posterior   \citep{bishop1998approximating,jaakkola1998improving,  gershman2012nonparametric,  ranganath2016hierarchical,   gal2016dropout, miller2017variational,  chang2019ensemble}. However, a direct application of mixture variational methods can be prohibitively expensive in large models, and weights are often fixed to ease the cost.
	Stacking does not need to specify either the parametric form of the mixture or the number of mixture components, both of which adapt to data and prevent extra model misspecification.

Lastly, there is rich literature on MCMC techniques that attempt to sample from  a multimodal density. In some cases it is possible to collapse multimodality using reparameterization \citep{papaspiliopoulos2007general, johnson2012variable, betancourt2015hamiltonian, gorinova2019automatic}, but this cannot be automated for general problems. Several schemes have been proposed for sampling from  distributions with isolated modes by adding an auxiliary temperature parameter to enhance the transition probability between modes; these methods include annealing \citep{kirkpatrick1983optimization, bertsimas1993simulated}, parallel tempering \citep{hansmann1997parallel, earl2005parallel}, simulated tempering \citep{marinari1992simulated, neal1993probabilistic}, auxiliary variables \citep{wang2001efficient},  and path sampling \citep{gelman1998simulating, yao2020adaptive}.  These methods involve sampling from a tempered distribution $p(\theta | \lambda) \propto p(\theta|y )^\lambda$, with the idea that this distribution is flatter conditional at a smaller $\lambda$ and easy to sample from.
Although tempering-based methods are popular in statistical physics and molecular dynamics problems, they are sensitive to implementation and tuning, which can make them less appropriate for general statistical computation, and their theoretical mixing rates drop quickly in high dimensions \citep{bhatnagar2004torpid}.
Moreover, the metastability of sampling  comes from both the energetic (two modes are distinct) and entropic (two regions are connected through a narrow neck) barriers. Increasing the temperature does not ease the entropic barrier, which is a common problem with hierarchical models.
For all these reasons, tempering methods appear unlikely to work in large statistical models with the scale that we consider in the present paper. 	
To confirm, \citet{yao2020adaptive} reported the failure of simulated tempering 
when applied to multimodal posterior distribution from our latent Dirichlet allocation example (Section \ref{sec_lda}).

	\section{Asymptotic Analysis in a Theoretical Example}\label{sec_theory}
	In this section, we analyze the asymptotic behavior of  stacked-chain inference.   We first show the overconfidence of Bayesian inference in the existence of posterior multimodality, 
	while in contrast, the proposed method is no worse than chain-picking. Then we derive a closed-form solution in a theoretical example to show that with model-misspecification and multimodal posterior,
	chain-stacking can be predictively better than the  exact inference.     Proofs and related lemmas are in Appendix A.

		\subsection{Overconfidence of Bayesian Inference}\label{sec_multimodal_dg}
	Given data $y_{1},\dots, y_n$ generated independently and identically distributed from an unknown data generating process: $p_{\mathrm{true}}(y)$, and a potentially misspecified model $y| \theta\sim f(y|\theta)$ and prior $p(\theta),  \theta\in \Theta$, when the  sample size $n$ goes to infinity and regularization conditions apply, the limiting Bayesian posterior density will be almost surely supported on the set of global modes \citep[][]{berk1966limiting}:
	$
	\mathcal{A}=\Bigl\{\theta^* \in \Theta: \E_{\tilde y \sim p_{\mathrm{true}}}  \log f( \tilde y |\theta^*) = \max_{\theta\in \Theta} \E_{\tilde y \sim p_{\mathrm{true}}}  \log f( \tilde y |\theta)     \Bigr\}.
	$

		\begin{figure}
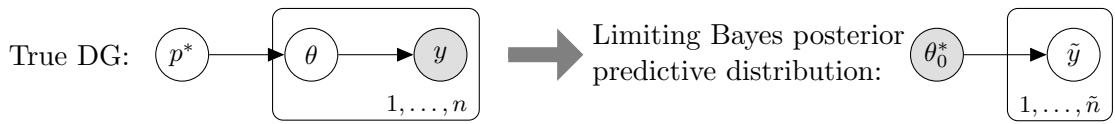

		\centering
		\tikz{
			\node[obs] (y) {$y$};%
			\node[latent,left=of y,xshift=0cm,fill] (s) {$\theta$}; %
			\node[latent,left=of s,xshift=0cm,fill] (ss) {$p^{*}$}; %
			\plate [inner sep=.15cm,yshift=.1cm] {plate1} {(y)(s)} {$1, \dots, n$}; %
			\node[left of=ss, xshift=-0.5cm]   { True DG:};
			
			\node[latent, right=of y,xshift=6.7cm,fill] (y2) {$\tilde y$};%
			\node[obs,left=of y2,xshift=-0.1cm,fill] (s2) {$ \theta_0^*$}; %
			\plate [inner sep=.15cm,yshift=.1cm] {plate1} {(y2)} {$1, \dots, \tilde n$}; %
			\node[left of=s2, xshift=-1.5cm](l)   {  
				\begin{tabular}{l}
					Limiting Bayes posterior  \\
					predictive distribution: \\
				\end{tabular}
			};
			\draw[
			-triangle 90,
			line width=0.9mm,
			gray,
			postaction={draw, line width=2mm, shorten >=2mm, gray, -}
			] (0.9,0) -- (1.9,0);
			\edge {ss} {s}
			\edge {s} {y}
			\edge {s2} {y2}
			
		}\caption{\em When the parameter $\theta$ is randomly drawn from a distribution $p^*$ in the data generating process \eqref{eq_what_is_true_model}, the limiting posterior inference $p(\theta|y)$ almost surely converges to a point estimate $\theta_0^*$.}\label{fig_graph}
	\end{figure}

	Such limiting behavior restricts the expressiveness of posterior predictions. When data are generated from one parameter $\theta_0$ in the model (an $\mathcal{M}$-closed view), $p_{\mathrm{true}}=f( \cdot | \theta_0)$, the posterior will be asymptotically concentrated at  $\theta_0$. But otherwise, the limiting predictive distribution seeks the closest distribution to data generating process in terms of Kullback–Leibler (KL) divergence, as we can rewrite the set  $\mathcal{A}$ as 
	\begin{equation}\label{eq_limiting}
	\mathcal{A}=  \arg\min_{\theta\in \Theta} \mathrm{KL}\Bigl(p_{\mathrm{true}}(\cdot) ~ ||  ~ f( \cdot |\theta)   \Bigr),   \quad \forall \eta>0,~ \mathrm{Pr}_{\mathrm{Bayes}}( ||\theta -  \mathcal{A}||<\eta   \mid y_{1 \dots, n})  \xrightarrow{n\to\infty, ~a.s.}1,
	\end{equation}
	The asymptotic predictive distribution is from some \emph{point} estimate $\theta^*\in  \mathcal{A}$. But ideally we would fully use the expressiveness of the model and find the optimal \emph{probabilistic} inference $p_\mathrm{optimal}(\theta)$ from some space $\mathcal{F}$ that renders the best prediction for future unseen data
	\begin{equation}\label{eq_limiting_goal}
	p_\mathrm{optimal} = \arg\min_{\tilde p \in \mathcal{F}}  \mathrm{KL}\Bigl(p_{\mathrm{true}}(\cdot) ~ ||   ~   \int_{\Theta}  f( \cdot |\theta)  \tilde  p (\theta) d \theta \Bigr).
	\end{equation}
	In particular, if the model is expressive enough such that  there is a density $p^*(\cdot) \in \mathcal{F}$ generates the data by 
	\begin{equation}\label{eq_what_is_true_model}
	p_{\mathrm{true}}(\tilde{y})= \int_{\Theta} f(\tilde y| \theta) p^*(\theta) d\theta,
	\end{equation}
	then this $p^*(\cdot)$ is  one solution to  \eqref{eq_limiting_goal} because the log score is proper.

When the posterior distribution is multimodal,   even though the multimodality suggests that the true data are unlikely generated to have been from any single  parameter in the model,   the Bayesian posterior still concentrates to one of the modes in the limit, so  that the density family $\mathcal{F}$ is a set of  Dirac delta functions: $\mathcal{F}=\{\delta(\theta_0) \mid  \theta_0 \in \Theta  \}.$  In contrast, stacking solves \eqref{eq_limiting_goal}  with a bigger density space, 	$
	\mathcal{F}= \left\{ \sum_{k=1}^Kw_k p_k(\theta|y):  \w\in \mathbbm{S}(K)  \right\},
	$ that is constructed from sampled  clusters, whose  solution  generally  not concentrates on a point.


	\subsection {Optimality of the Stacked Predictive Distribution}\label{sec_theory_1}
    The stacking weights are {\em not} the same as  posterior masses of each mode.  Even asymptotically, minimizing cross validation errors is different from integrating the target distribution.  Corollary \ref{them_op} affirms that the stacked inference is optimal---it asymptotically maximizes the expected log  predictive densities (elpd) among all linearly weighted combinations of  parallel chains of form \eqref{eq_MC_final}. This corollary is a consequence of Theorem 2.4 of \cite{le2017bayes}, with the difference that we have redefined cross validation via  \eqref{eq_loo}.
	
	\begin{corollary} \label{them_op} Assuming we draw $S$ posterior samples in each chain from their stationary distribution $p_k$, and we approximate the leave-one-out distribution by PSIS as in \eqref{eq_LOO_approx}, $p^S_{k, -i}(y_i)=  \sum_{s=1}^{S_k} p_k(y_i|\theta_{ks}) r_{iks}  / \sum_{s=1}^{S_k}  {r_{iks} },$ then for a fixed number of chains $K$ and a fixed weight vector $\w$, when in the limit of both the size of observations $n$ and  number of posterior draws $S$,  under regularities conditions (see Appendix), the objective function in stacking  converges to stacked elpd:
		$$ \frac{1}{n}\sum_{i=1}^n \log \sum_{k=1}^K w_k p^S_{k, -i}(y_i)  -  \mathrm{E}_{\tilde y| y_{1:n} }  \log \sum_{k=1}^K w_k p_k(\tilde y| y_{1:n} )    \xlongrightarrow{\text{$L_2$}}   0, \quad n\to \infty, ~S\to \infty .$$
	\end{corollary}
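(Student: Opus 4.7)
My plan is to decompose the left-hand side into two pieces corresponding to the two ways the approximation deviates from the target, and control each separately before combining.

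The first step is to introduce the exact (unsampled) leave-one-out predictive density $p_k(y_i\mid y_{-i})$ as defined in \eqref{eq_loo} and write
$$\frac{1}{n}\sum_{i=1}^n \log \sum_{k=1}^K w_k p^S_{k,-i}(y_i) = \underbrace{\frac{1}{n}\sum_{i=1}^n \log \sum_{k=1}^K w_k p_k(y_i\mid y_{-i})}_{A_n} + \underbrace{\frac{1}{n}\sum_{i=1}^n \Delta_{i,S}}_{B_{n,S}},$$
where $\Delta_{i,S} = \log \sum_k w_k p^S_{k,-i}(y_i) - \log \sum_k w_k p_k(y_i\mid y_{-i})$. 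I would then show $B_{n,S}\to 0$ in $L_2$ as $S\to\infty$ by invoking the consistency of Pareto smoothed importance sampling \citep{vehtari2015pareto,vehtari2017practical}: under the regularity conditions that the Pareto shape $\hat k < 1/2$ (so the importance ratios have finite variance after smoothing), $p^S_{k,-i}(y_i)\to p_k(y_i\mid y_{-i})$ in probability with a $\sqrt{S}$-rate, and the continuous mapping theorem transfers this to the logarithm; uniform integrability provided by a lower bound $p^S_{k,-i}(y_i)\ge\varepsilon>0$ (inherited from the truncated-at-right PSIS weights) upgrades the mode of convergence to $L_2$.

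The second step is to handle $A_n$, the cross-validated score based on the exact LOO densities. Since $y_1,\dots,y_n$ are i.i.d.\ and, under the regularity conditions of \cite{le2017bayes}, each $p_k(\cdot\mid y_{-i})$ converges almost surely to a limiting predictive $p_k^\infty$, the sequence $\log\sum_k w_k p_k(y_i\mid y_{-i})$ is asymptotically stationary and ergodic in $i$ with finite mean. The same argument applied to a fresh draw $\tilde y\sim p(\tilde y\mid y_{1:n})$ yields
$$\mathrm{E}_{\tilde y\mid y_{1:n}}\log \sum_k w_k p_k(\tilde y\mid y_{1:n}) \xrightarrow{L_2}\mathrm{E}_{\tilde y\sim p_{\mathrm{true}}}\log\sum_k w_k p_k^\infty(\tilde y),$$
and an ergodic/law-of-large-numbers argument applied to $A_n$ gives the same limit. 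The difference therefore vanishes in $L_2$. This is essentially the conclusion of Theorem~2.4 of \cite{le2017bayes}, transplanted to our definition \eqref{eq_loo} of the leave-one-out posterior, which is what the corollary claims.

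The main technical obstacle I expect is the joint $(n,S)\to\infty$ regime: I need a uniform (in $n$) bound on the tail of the PSIS error so that the $L_2$ limit in $S$ survives the averaging over $i=1,\dots,n$, and symmetrically, a uniform (in $S$) control on $A_n$ so that the ergodic limit in $n$ is not disrupted by residual Monte Carlo noise. A clean way to proceed is to establish that, uniformly in $i$, $\mathrm{E}[\Delta_{i,S}^2]\le C/S$ for a constant $C$ depending only on the tail-index bound on the importance ratios and a uniform lower bound on $\sum_k w_k p_k(y_i\mid y_{-i})$ over typical $y_i$; then $\mathrm{E}[B_{n,S}^2]\le C/S$ and the joint limit follows by a standard triangle inequality combining the two $L_2$ bounds. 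The regularity conditions needed are therefore (i) finite variance of the PSIS-smoothed ratios (quantified by the Pareto $\hat k$), (ii) the mixture predictive is bounded away from zero on the support of $p_{\mathrm{true}}$, and (iii) the ergodicity/consistency hypotheses of Le and Clarke's theorem that make $A_n$ converge.
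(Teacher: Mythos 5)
Your decomposition—an exact-LOO term handled by Le and Clarke's consistency theorem as $n\to\infty$, plus a PSIS-error term handled by the Vehtari et al.\ consistency results as $S\to\infty$, combined via a triangle inequality—is exactly the route the paper takes in its Appendix A.1 proof. The only difference is that you make explicit the uniform-in-$i$ control on the Monte Carlo error needed for the joint $(n,S)$ limit, a point the paper glosses over when it simply states that the two consistency results are ``combined.''
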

	
	
	
	
	
	
	
	
	\subsection{Cauchy Example Revisited: When Can Stacking Outperform Exact Bayes?}\label{sec_DG_converge}
	Let's revisit the Cauchy mixture example in Section \ref{sec_toyexample} and Figure \ref{fig_mode1}. Consider univariate observations $y_{1,\dots n}$  iid from the data generating process,
	$$\mathrm{DG}: ~y_i \sim  \mathrm{Cauchy}\left( (2 z_i -1) a, 1\right) ,   \quad  z_i\sim  \mathrm{Bernoulli} (p_0), \quad  i=1, 2, \dots, n.$$
	In other words, $y$ is either Cauchy$(a, 1)$ or  Cauchy$(-a, 1)$ with probabilities $p_0$ and $1-p_0$,  where the location $a>0$ and probability $p_0 \in[0.5, 1]$ are unknown constants (the $0\leq p_0 < 0.5$  counterpart is symmetric and hence omitted). We denote the density of this data generating process by $p_{\mathrm{true}}(y)$.
	
	We now fit $y$ with the iid  Cauchy likelihood with unknown parameter $\mu$ and a prior $p_0(\mu)$ that has full support on $\R$,
	$$\mathrm{Model}: ~ y_i \sim  \mathrm{Cauchy}(\mu, 1), \quad  \mu \sim p_0(\mu), \quad \mu\in \R.$$
The data generating process can be expressed from this model if an inference $\theta$ is given by  a mixture of two points, $$\mathrm{express~DG~in~Model}: ~ \mu \sim p_0 \delta(a)+ (1-p_0) \delta(-a).$$ 

	The following theorems characterize the behavior of  modes and the concentration of exact full Bayesian inference in the limit of large $n$. 
	
	\begin{theorem}\label{thm_post_mode} There exists a deterministic function $\xi(a)$   (see Lemma \ref{lemma_h}), with $\xi(2) =0.5$ and $\xi(\infty) =1$, such that the  modality of posterior density $p(\mu | y_1, \dots, y_n)$ has a closed form determination: 
		
		(a) For any $a>2$, and  $p_0\geq  \xi(a)$, there exists a large $N$, such that for all $n>N$, the posterior is unimodal.  The peak is near $\mu=a$ for a large $a$.
		
		(b) For any $a>2$, and   $0.5 \leq p_0 <  \xi(a)$, there exists a large $N$, such that for all $n>N$, the posterior is bimodal. The two local maximums  are near  $(-a, a)$ for  a large $a$.
		
		(c) For any $0<a<2$, there exists a large $N$, such that for all $n>N$, the posterior is unimodal with global maximum between 0 and $a$.   If further $p_0=0.5$, the maximum is at 0.
		
		(e) When $a>2, p_0=0.5$ and equipped a symmetric prior $p(\mu)=p(-\mu)$, there exists a large $N$ such that, for all $n>N$, the posterior is always bimodal with two maximums, which  asymptotically ($n\to \infty$) converge to $\mu=\pm \sqrt{a^2-4}$. 
	\end{theorem}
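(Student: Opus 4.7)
The plan is to replace the finite-sample log-posterior by its almost-sure deterministic limit and then read off modality from a cubic equation. Set $\ell_n(\mu) := \sum_{i=1}^n \log(1+(y_i-\mu)^2)$, so that $\log p(\mu\mid y_{1:n}) = -\ell_n(\mu) + \log p(\mu) + \mathrm{const}$. First I would show that $\frac{1}{n}\ell_n \to G$ almost surely, uniformly on every compact $K\subset\R$, where
\[
G(\mu) := \E_{y\sim p_{\mathrm{true}}} \log\bigl(1+(y-\mu)^2\bigr).
\]
Pointwise a.s.\ convergence is the strong law of large numbers ($\log(1+(y-\mu)^2)$ has finite Cauchy expectation), and since $|\partial_\mu \log(1+(y-\mu)^2)| = |2(\mu-y)/(1+(y-\mu)^2)|\le 1$, the family $\{\log(1+(y-\cdot)^2):y\in\R\}$ is equi-Lipschitz, so a standard $\varepsilon$-net argument upgrades pointwise to uniform convergence on $K$; the same argument yields $\frac{1}{n}\ell_n'\to G'$ uniformly on $K$. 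The log-prior is $O(1)$ and asymptotically negligible. A tail bound $\log(1+(y-\mu)^2)\ge 2\log|\mu|-C$ for $|\mu|\gg\max_i|y_i|$ confines every critical point of $\ell_n$ to a common compact for all large $n$.

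Second, compute $G$ explicitly via the Cauchy convolution identity
\[
\int_\R \frac{\log(1+(y-\mu)^2)}{\pi(1+(y-c)^2)}\,dy = \log\bigl(4+(c-\mu)^2\bigr),
\]
which is routine (differentiate in $\mu$ and reduce to a rational integral, or close a contour). This gives
\[
G(\mu) = p_0\log\bigl(4+(a-\mu)^2\bigr) + (1-p_0)\log\bigl(4+(a+\mu)^2\bigr).
\]
Setting $G'(\mu)=0$, clearing denominators, and writing $b := (2p_0-1)a\ge 0$, the stationary points solve the cubic
\[
q(\mu) := \mu^3 + b\mu^2 - (a^2-4)\mu - (a^2+4)b = 0.
\]
At $p_0=1/2$ (so $b=0$), $q(\mu)=\mu(\mu^2-(a^2-4))$: if $a<2$ the only real root is $\mu=0$; if $a>2$ the real roots are $0,\pm\sqrt{a^2-4}$, and the identity $(4+(a-\mu)^2)(4+(a+\mu)^2)\big|_{\mu=\sqrt{a^2-4}}=16a^2$ gives $G(\pm\sqrt{a^2-4})=\log(4a)<\log(4+a^2)=G(0)$ (since $(a-2)^2>0$), so $\pm\sqrt{a^2-4}$ are the posterior modes, proving (e); the symmetric prior is used only to make the posterior exactly symmetric at finite $n$. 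At $p_0=1$ (so $b=a$), $q$ factors as $(\mu-a)(\mu^2+2a\mu+a^2+4)$ whose quadratic factor has discriminant $-16$, so the unique real root is $\mu=a$.

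Now define $\xi(a)$ as the smallest $p\in[1/2,1]$ such that, for every $p_0\ge p$, the discriminant of $q$ (as a polynomial in $\mu$) is nonpositive; this is the content of Lemma~\ref{lemma_h}. For $a>2$: when $p_0>\xi(a)$, $q$ has a single real root, obtained by continuous perturbation from $\mu=a$ at $p_0=1$, giving (a); when $1/2\le p_0<\xi(a)$, $q$ has three real roots, the outer two being perturbations of $\pm\sqrt{a^2-4}\sim\pm a$ and, by a sign check on $G''$, being the posterior maxima, giving (b). For part (c) with $0<a<2$ and $p_0\in[1/2,1]$, the evaluations $q(0)=-(a^2+4)b\le 0$ and $q(a)=8a(1-p_0)\ge 0$ place a real root in $[0,a]$; uniqueness follows by verifying that $q'(\mu)=3\mu^2+2b\mu+(4-a^2)$ either has no real roots (when $b^2\le 3(4-a^2)$) or that $q$ stays of one sign at both critical points of $q'$ (a direct inequality). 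The boundary asymptotics drop out by substitution: at $a=2,b=0$, $q(\mu)=\mu^3$ has a triple root at $0$, marking the uni/bimodal boundary, so $\xi(2)=1/2$; as $a\to\infty$, a dominant-balance argument in $q$ shows that for every fixed $p_0<1$ a second real root persists near $\mu\approx-a$, hence $\xi(\infty)=1$. Transfer from $G$ back to $\ell_n$ is routine: at each critical point $\mu^*$ of $G$ with $G''(\mu^*)\neq 0$ (which holds off the threshold $p_0=\xi(a)$), uniform $C^1$ convergence $\frac{1}{n}\ell_n'\to G'$ plus the implicit function theorem produces a unique critical point of $\ell_n$ in a small neighborhood of $\mu^*$, of the same type, for all $n$ large. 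The main obstacle I expect is the clean characterization of $\xi(a)$ through the cubic's discriminant and the monotonicity-in-$p_0$ argument that pins down the three-real-roots regime as precisely $[1/2,\xi(a))$; everything else is LLN plus finite cubic algebra.
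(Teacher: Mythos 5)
Your proposal takes essentially the same route as the paper's proof: pass to the almost-sure limit of the per-observation log posterior, obtain its closed form via the Cauchy convolution (the paper works directly with the derivative $h'$, whose numerator is exactly your cubic $q$ up to sign, i.e.\ the paper's $g=-q$), characterize uni/bimodality by the sign of the cubic's discriminant as a function of $p_0$ --- which is precisely how the paper defines $\xi(a)$ --- solve explicitly at $p_0=1/2$ to get the modes $\pm\sqrt{a^2-4}$, and then transfer to finite $n$. The only difference is in that last step, where you sketch a self-contained uniform-SLLN/critical-point-stability argument while the paper decomposes the log posterior as $nh(\mu)+\sqrt{n}\,G_n(\mu)$ and defers the rigorous transfer to Lemmas 2.4--2.12 of Diaconis and Freedman; both are left at a comparable level of detail, so your attempt matches the paper's proof.
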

	
	\begin{figure}
		\centering
		\begin{subfigure}{0.48\textwidth}
			\centering
			\includegraphics[height= 4.9cm]{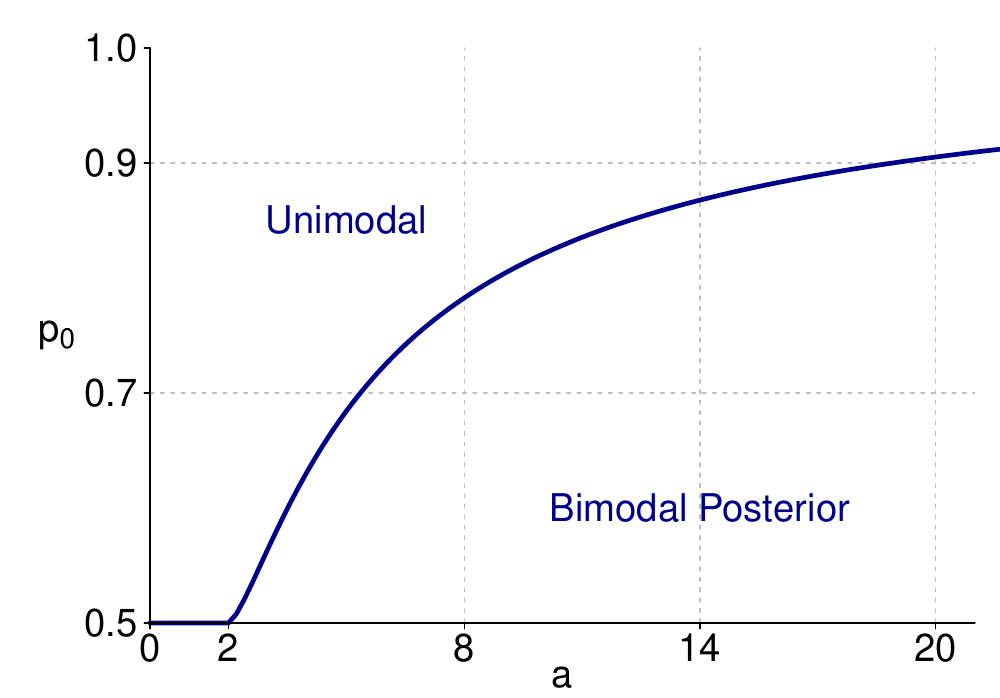}
		\end{subfigure}	
		~ 
		\begin{subfigure}{0.49\textwidth}
			\centering
			\includegraphics[height= 4.9cm]{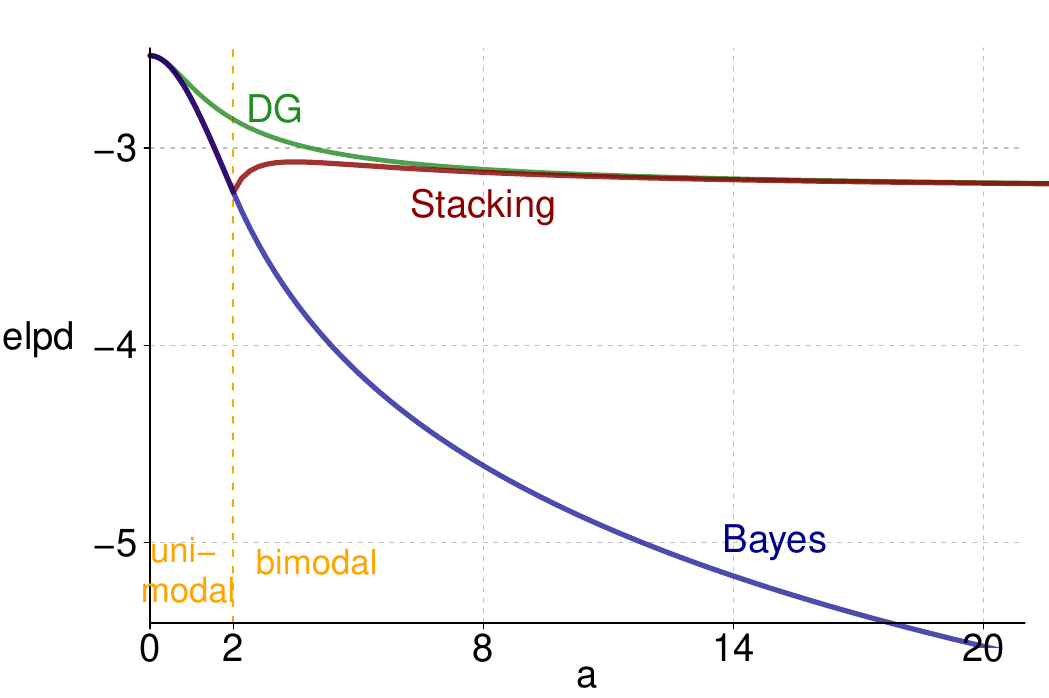}
		\end{subfigure}	
		 	\caption{\em  Left: the deterministic function $\xi(a)$.  For any $a>2$ the posterior is  bimodal with a large $n$ if and only if  $p_0 <\xi(a)$. Right: the elpd of the true data generating process and the asymptotic ($n \to \infty$) elpd  of full Bayes and multi-chain stacking at $p_0=0.5$. When $p_0=0.5$, $a<2$ the posterior is unimodally spiked at 0, and stacking is identical to Bayes.}\label{fig_xi_function}
	\end{figure}

	\begin{theorem}\label{thm_post_converge}
		(a) For any $a>2$, and $p_0>0.5$, the  posterior distribution $p(\theta|y_1, \dots, y_n)$  converges  in distribution to  a point mass $\delta (\gamma)$ as $n\to \infty$, where $\gamma = \gamma(p_0, a)$  depends on $p_0$ and $a$.
		
		(b) For any $a>2$, $p_0=0.5$, a prior that is symmetric $p(\mu)=p(-\mu)$, the posterior distribution $p(\theta|y_1, \dots, y_n)$ is asymptotically only charged at two points  $\pm  \gamma$, with a closed form expression   $\gamma =\sqrt{a^2-4}$.   More precisely,  the posterior distribution $p(\theta|y_1, \dots, y_n)$ is almost surely concentrated at $\pm \sqrt{a^2-4}$ with equal probabilities $1/2$.

		(c) Under the same condition in (b), for any $\eta>0$, almost surely the following limits hold,
		$$ \limsup_{n\to \infty} \Pr\left( \left|\mu - \sqrt{a^2-4}  \right| < \eta  \mid y_1, \dots, y_n  \right) = \limsup_{n\to \infty} \Pr\left( \left|\mu + \sqrt{a^2-4}  \right| < \eta  \mid y_1, \dots, y_n  \right) =1 $$
	\end{theorem}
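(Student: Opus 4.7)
The plan combines a uniform law of large numbers for the rescaled log-likelihood with a Laplace expansion at each global maximizer of $\ell(\mu):=\E_{y\sim p_{\mathrm{true}}}\log f(y\mid\mu)$. The score $\partial_\mu\log f(y\mid\mu)=2(y-\mu)/(1+(y-\mu)^2)$ is bounded by $1$ uniformly in $y$, so the family $\{\log f(\cdot\mid\mu):\mu\in K\}$ is $1$-Lipschitz on any compact $K$; the pointwise SLLN and equicontinuity then give $n^{-1}L_n\to\ell$ uniformly on compacts almost surely, where $L_n(\mu):=\sum_{i=1}^n\log f(y_i\mid\mu)$. Since $\ell(\mu)\sim-2\log|\mu|$ as $|\mu|\to\infty$, the global maximizers of $\ell$ lie in a fixed compact set and posterior mass outside any such compact decays exponentially in $n$. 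Using the closed form $\E_{y\sim\mathrm{Cauchy}(\nu,1)}[2(y-\mu)/(1+(y-\mu)^2)]=2(\nu-\mu)/((\nu-\mu)^2+4)$ from Lemma \ref{lemma_h}, the stationary equation $\ell'(\mu)=0$ reduces to an explicit rational equation, and the global maxima are: a unique $\gamma=\gamma(p_0,a)$ when $a>2$ and $p_0>1/2$ (the two local maxima near $\pm a$ have unequal $\ell$-values because $\ell(+a)-\ell(-a)=(2p_0-1)[F(2a)-F(0)]>0$ with $F(x):=\E_{y\sim\mathrm{Cauchy}(0,1)}\log(1+(y+x)^2)$ strictly increasing on $x\ge 0$); and a symmetric pair $\pm\sqrt{a^2-4}$ when $a>2$ and $p_0=1/2$ (the stationary equation reduces to $\mu[\mu^2-(a^2-4)]=0$), with $\ell''<0$ in each case.

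For each global maximizer $\gamma^*$, a standard Laplace expansion gives
\begin{equation*}
\int_{|\mu-\gamma^*|\le\eta}e^{L_n(\mu)}\,p(\mu)\,d\mu\;=\;c(\gamma^*)\,n^{-1/2}\,e^{L_n(\mu_n^{\gamma^*})}\bigl(1+o_p(1)\bigr),
\end{equation*}
where $\mu_n^{\gamma^*}$ is the empirical mode (within $O_p(n^{-1/2})$ of $\gamma^*$ by the implicit function theorem applied to the score) and $c(\gamma^*)=p(\gamma^*)\sqrt{2\pi/(-\ell''(\gamma^*))}$. Combined with the exponential decay outside a compact, this yields part (a) at once: with a unique global maximizer, all posterior mass concentrates in any $\eta$-neighborhood of $\gamma(p_0,a)$, i.e.\ convergence in distribution to $\delta(\gamma)$. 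For parts (b) and (c), the symmetries $p_0=1/2$ and $p(\mu)=p(-\mu)$ force $c(+\gamma)=c(-\gamma)$, so the log-ratio of posterior masses near the two modes reduces to
\begin{equation*}
\log\frac{A_n^+}{A_n^-}\;=\;S_n+R_n,\qquad S_n:=L_n(+\gamma)-L_n(-\gamma)=\sum_{i=1}^n\log\frac{1+(y_i+\gamma)^2}{1+(y_i-\gamma)^2},
\end{equation*}
where $R_n$ collects the prefactor and the quadratic-shift correction and satisfies $|R_n|=O(\log\log n)$ almost surely (by the LIL applied to $L_n'(\pm\gamma)/\sqrt{n}$). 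The summand is bounded (it vanishes as $|y|\to\infty$), antisymmetric in $y$, and nonconstant for $\gamma>0$, so under $p_0=1/2$ the terms are iid bounded with mean zero and strictly positive variance.

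Part (b) then follows from the exact sign-flip symmetry $y_{1:n}\mapsto-y_{1:n}$, which preserves the law of the data while mapping $\log(A_n^+/A_n^-)$ to its negative, giving $\Pr(A_n^+>A_n^-)=1/2$ for every $n$. For part (c), the law of the iterated logarithm for the bounded iid sum $S_n$ gives $\limsup_n S_n=+\infty$ and $\liminf_n S_n=-\infty$ almost surely; since $R_n=O(\log\log n)$ a.s.\ while $S_n$ reaches order $\sqrt{n\log\log n}$ infinitely often in both directions, we have $\log(A_n^+/A_n^-)\to+\infty$ along one subsequence and $\to-\infty$ along another, almost surely, so the posterior probability of any $\eta$-neighborhood of each mode has $\limsup=1$ a.s. The main technical obstacle is the bookkeeping in the Laplace step—explicit control of the prefactor and the empirical-mode shift so that the $R_n$ error is indeed $O(\log\log n)$ a.s.\ and cancels between $\pm\gamma$ by symmetry—while the remaining ingredients (uniform SLLN, exponential tail domination outside a compact, CLT/LIL for bounded iid sums) are classical.
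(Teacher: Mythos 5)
Your proposal is correct and follows the same overall strategy as the paper: identify the modes of the population log-likelihood $h(\mu)=\E_{p_{\mathrm{true}}}\log f(y\mid\mu)$ in closed form (the paper's Lemmas 2--5, giving the unique/higher right mode for $p_0>1/2$ and the symmetric pair $\pm\sqrt{a^2-4}$ for $p_0=1/2$), use a law-of-large-numbers concentration argument for part (a), and resolve the symmetric case (b)--(c) by analyzing the random fluctuation between the two modes. The difference is one of self-containedness rather than route: the paper decomposes $\log p(\mu\mid y_{1:n})=\log p_0(\mu)+nh(\mu)+\sqrt{n}\,G_n(\mu)-\log C_n$ and then outsources the rigorous fluctuation analysis to Lemmas 2.4--2.12 of Diaconis and Freedman (1986), whereas you reconstruct that step explicitly: Laplace expansions at $\pm\gamma$ with a prefactor/mode-shift error $R_n=O(\log\log n)$ a.s., the sign-flip symmetry of the data law for the ``equal probabilities $1/2$'' claim in (b), and the LIL for the bounded mean-zero walk $S_n=L_n(\gamma)-L_n(-\gamma)$ for the $\limsup=1$ statements in (c); this buys a proof that does not lean on the external citation, at the cost of the Laplace bookkeeping you flag. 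One small repair: to conclude a unique global maximizer in (a) you compare $\ell(+a)$ with $\ell(-a)$, but the comparison must be made at the maximizers themselves; the same antisymmetry computation, $\ell(\mu)-\ell(-\mu)=(2p_0-1)\bigl[\E_{\mathrm{Cauchy}(a,1)}\log f(y\mid\mu)-\E_{\mathrm{Cauchy}(a,1)}\log f(y\mid-\mu)\bigr]>0$ for $\mu>0$, applied at $\gamma^{\pm}$ gives $\ell(\gamma^+)\geq\ell(-\gamma^-)>\ell(\gamma^-)$, which is exactly how the paper argues ($h(\gamma^+)>h(-\gamma^-)>h(\gamma^-)$), so this is an easily fixed imprecision rather than a gap in the approach.
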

	When  $a>2$, if   $0.5 < p_0 \leq  \xi(a)$,  two modes ($\gamma^+, \gamma^-$) exist, but the exact inference will asymptotically concentrate at the right mode $\gamma = \gamma^+ $. Even when  $p_0=0.5$ so that the two centers $\pm a$ are equally important in the data generating process, the exact inference would still pick one mode, with the left and right mode having equal chances of being selected.
	
	\begin{corollary}\label{thm_elpd}
		In all cases in Theorem \ref{thm_post_converge}, the expected log predictive density (elpd) from the  exact Bayesian posterior $p(\mu|y_1, \dots, y_n)$ is 
		\begin{align*}
		\mathrm{elpd}_{\mathrm{Bayes}} =&\int_{\R} p_{\mathrm{true}}(\tilde y| p_0)  \log   \int_{\R} p(\tilde y|\mu) p(\mu|y_1, \dots, y_n)   d\mu  d \tilde y \\
		\xlongrightarrow{ n\to \infty } &  -\left(  p_0 \log  \left(  \pi  (  4+  (\gamma-a)^2 )  \right) + (1-  p_0 )\log  \left(  \pi  (  4+  (\gamma+a)^2 )  \right) \right) \\
		\overset {a \mathrm{~ is~large} }{\approx}    & -  (1- p_0 )\log  \left(1  +  a^2   \right)   - \log  4\pi.  
		\end{align*}
	\end{corollary}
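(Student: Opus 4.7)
The plan is to combine the posterior concentration from Theorem \ref{thm_post_converge} with a closed-form Cauchy log-moment identity. The argument splits cleanly into three steps.

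First, Theorem \ref{thm_post_converge} tells us that $p(\mu|y_1,\dots,y_n)$ converges almost surely (in the data-generating randomness) to a point mass $\delta(\gamma)$, for the appropriate $\gamma$ in each case. Because the likelihood $p(\tilde y|\mu) = [\pi(1+(\tilde y-\mu)^2)]^{-1}$ is bounded uniformly in $\mu$ and continuous in $\mu$, weak convergence of the posterior yields pointwise convergence of the posterior predictive density, $\int p(\tilde y|\mu)\,p(\mu|y_1,\dots,y_n)\,d\mu \to p(\tilde y|\gamma)$, for every fixed $\tilde y \in \R$.

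Second, I would pass this pointwise limit through the outer $p_{\mathrm{true}}$-integration defining $\mathrm{elpd}_{\mathrm{Bayes}}$ via dominated convergence. The log posterior predictive is bounded above by $-\log\pi$. For a lower bound, posterior consistency implies that eventually almost all of the posterior mass lies in a bounded neighborhood of $\gamma$, so the log predictive density is bounded below by an envelope of the form $-\log\bigl(\pi(1+(|\tilde y|+C)^2)\bigr)$ for a constant $C$, which is integrable against the Cauchy mixture $p_{\mathrm{true}}$. Hence
\begin{equation*}
\mathrm{elpd}_{\mathrm{Bayes}} \xrightarrow{n\to\infty} \int_{\R} p_{\mathrm{true}}(\tilde y)\log p(\tilde y|\gamma)\,d\tilde y = -\log\pi - \E_{\tilde y \sim p_{\mathrm{true}}}\bigl[\log\bigl(1+(\tilde y-\gamma)^2\bigr)\bigr].
\end{equation*}

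Third, I would evaluate this expectation componentwise using the identity
\begin{equation*}
\E_{Y \sim \mathrm{Cauchy}(\mu,1)}\bigl[\log(1+(Y-\gamma)^2)\bigr] = \log\bigl(4+(\mu-\gamma)^2\bigr),
\end{equation*}
which can be proved in one line by contour integration: write $\log(1+(y-\gamma)^2) = 2\,\mathrm{Re}\,\log(1-i(y-\gamma))$ and close the standard Cauchy density in the upper half plane, picking up the unique pole at $y=\mu+i$. Applying this identity to the two components of $p_{\mathrm{true}} = p_0\,\mathrm{Cauchy}(a,1) + (1-p_0)\,\mathrm{Cauchy}(-a,1)$ with $\mu = \pm a$ yields the displayed closed form exactly. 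The large-$a$ approximation then follows because $\gamma \approx a$ in both cases covered by Theorem \ref{thm_post_converge}, so $4+(\gamma-a)^2$ stays bounded while $4+(\gamma+a)^2 \sim 4a^2 \sim 4(1+a^2)$; the bounded term absorbs into the constant $\log 4\pi$, leaving the claimed $-(1-p_0)\log(1+a^2)$ as the dominant contribution.

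The main obstacle is the uniformity needed to justify dominated convergence, since pointwise posterior concentration does not by itself control the predictive tails uniformly in $n$; the resolution is that a weakly consistent posterior is eventually supported with overwhelming probability in a bounded neighborhood of $\gamma$, which supplies the required integrable envelope. The Cauchy log-moment identity is then a short residue computation and is the computational heart of the result.
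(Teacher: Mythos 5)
Your proposal is correct and follows essentially the same route as the paper: invoke the posterior concentration at $\gamma$ from Theorem \ref{thm_post_converge}, then evaluate the limiting elpd componentwise via the closed-form Cauchy log-moment, which is exactly equivalent to the Cauchy-versus-Cauchy KL formula $\KL\bigl(\mathrm{Cauchy}(\mu_1,\sigma)\,||\,\mathrm{Cauchy}(\mu_2,\sigma)\bigr)=\log\bigl(1+\tfrac{(\mu_1-\mu_2)^2}{4\sigma^2}\bigr)$ that the paper cites. You simply supply details the paper leaves implicit (the dominated-convergence envelope and the residue computation), and your large-$a$ expansion using $\gamma\approx a$ matches the paper's use of Lemma \ref{lemma_p5}.
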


	When  $a>2$, and  $0.5 \leq p_0 \leq  \xi(a)$,  the two modes ($\gamma^+, \gamma^-$) are detectable from multi-chain MCMC. In this case, stacking behaves better than exact Bayesian inference. Indeed, the next corollary shows that stacking approximates the data generating process in KL divergence.
	\begin{corollary}\label{thm_stacking}
		(a) When $n$ is large, for any $a >2$ and $0.5<p_0< \xi (a)$, both modes  $\gamma^-$ $\gamma^+$ receive asymptotically nonzero weights, and the elpd of the stacking average,
		$$
		\mathrm{elpd}_{\mathrm{stacking}} =\int_{\R}  p_{\mathrm{true}}(\tilde y| p_0)  \log   \int_{\R} p(\tilde y|\mu) p_{\mathrm{stacking}}(\mu|y_1, \dots, y_n)   d\mu  d \tilde y,
		$$
		is strictly larger than 	$\mathrm{elpd}_{\mathrm{Bayes}}$. 	
		
		(b)  When   $a$ is large, stacking weights for $(\gamma^-, \gamma^+)$ are asymptotically close to $1-p_0$ and $p_0$. Consequently, the stacked posterior predictive distribution  approximates the data generating process,
		$$\KL \left( p_{\mathrm{true}}(\cdot),   ~    \int_{\R} p(\cdot|\mu) p_{\mathrm{stacking}}(\mu|y_1, \dots, y_n)   d\mu  \right)   \gtrapprox 0, ~~  \mathrm{when~} n\to \infty, a \mathrm{~is~fixed~and~large}.$$
	\end{corollary}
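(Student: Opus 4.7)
Both parts reduce to analyzing a one-parameter concave objective. By Theorem \ref{thm_post_mode}(b), for $a>2$ and $0.5<p_0<\xi(a)$ the posterior has two isolated modes near $\gamma^\pm$ when $n$ is large; with overdispersed initialization, parallel MCMC partitions into two clusters whose local distributions $p_k(\theta|y)$ contract onto the respective modes by a Laplace-type argument on each basin. Applying Corollary \ref{them_op} with $K=2$, the rescaled stacking objective converges uniformly in $w\in[0,1]$ to
\[
L(w) \;=\; \int p_{\mathrm{true}}(\tilde y)\,\log\!\left[w\,p_+(\tilde y)+(1-w)\,p_-(\tilde y)\right]d\tilde y,
\]
where $p_\pm(\tilde y)=\mathrm{Cauchy}(\tilde y;\gamma^\pm,1)$. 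Because $\gamma^+\neq\gamma^-$, the integrand is strictly concave in $w$ on a set of positive $p_{\mathrm{true}}$-measure, hence $L$ is strictly concave on $[0,1]$.

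For part (a), the plan is to show that the maximizer $w^*$ is interior via the endpoint derivatives. Using the ratio $p_-/p_+ = (1+(\tilde y-\gamma^+)^2)/(1+(\tilde y-\gamma^-)^2)$ and splitting $p_{\mathrm{true}}=p_0\mathrm{Cauchy}(\cdot;a,1)+(1-p_0)\mathrm{Cauchy}(\cdot;-a,1)$, each resulting integral evaluates in closed form either by contour integration or, after the substitution $u=\tilde y+a$, via the standard Cauchy integrals $\int(1+u^2)^{-2}du=\pi/2$ and $\int u^2(1+u^2)^{-2}du=\pi/2$. In the prototypical large-$a$ case $\gamma^\pm\approx\pm a$ this gives $\int p_{\mathrm{true}}\,p_-/p_+ = 1 + 2a^2(1-p_0)$, so $L'(1)=-2a^2(1-p_0)<0$, and by the symmetric computation $L'(0)>0$; a continuity/perturbation argument transfers the inequality to the actual modes $\gamma^\pm$. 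Strict concavity then forces $w^*\in(0,1)$ and $L(w^*)>L(1)$. Combining this with $\mathrm{elpd}_{\mathrm{Bayes}}\to L(1)$ (Corollary \ref{thm_elpd}) and the stacked elpd converging to $L(w^*)$ (Corollary \ref{them_op}) finishes part (a).

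For part (b), in the regime of large $a$ the modes satisfy $\gamma^\pm=\pm a+O(1/a)$; this matches the closed form $\gamma^\pm=\pm\sqrt{a^2-4}$ of Theorem \ref{thm_post_converge}(b) in the symmetric case, and extends to $p_0\in(0.5,1)$ by an implicit-function perturbation of the population score equation $p_0\,(a-\mu)/((a-\mu)^2+4) = (1-p_0)\,(a+\mu)/((a+\mu)^2+4)$. The two Cauchy components are then essentially non-overlapping: for $\tilde y\sim\mathrm{Cauchy}(\pm a,1)$, the opposite-side density is $O(a^{-2})$ relative to the same-side density at typical $\tilde y$. Expanding $\log(w p_+ + (1-w)p_-)$ componentwise gives
\[
L(w) \;=\; p_0\log w \;+\; (1-p_0)\log(1-w) \;+\; C(a) \;+\; O(a^{-2}),
\]
uniformly in $w$ on any subinterval of $(0,1)$, so the maximizer is $w^*=p_0+O(a^{-2})$, yielding the stated asymptotic weights $(1-p_0,p_0)$ on $(\gamma^-,\gamma^+)$. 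Plugging back, the stacked predictive $w^*p_++(1-w^*)p_-$ differs from $p_{\mathrm{true}}=p_0\mathrm{Cauchy}(\cdot;a,1)+(1-p_0)\mathrm{Cauchy}(\cdot;-a,1)$ only through $O(1/a)$ perturbations in weights and locations, and dominated convergence applied to $\log(p_{\mathrm{true}}/p_{\mathrm{stacking}})$ against $p_{\mathrm{true}}$ delivers $\KL(p_{\mathrm{true}},\,p_{\mathrm{stacking}})\to 0$ as $a\to\infty$.

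The main obstacle is locating the modes $\gamma^\pm$ in the asymmetric regime $p_0\in(0.5,\xi(a))$: outside $p_0=0.5$ the mode equation is transcendental, and the implicit-function perturbation must be controlled uniformly over a compact sub-range of $p_0$ to justify the $O(1/a)$ remainder in both parts. A secondary technical point is upgrading Corollary \ref{them_op}'s pointwise-in-$w$ convergence to uniform convergence on $[0,1]$, which legitimizes the passage from the finite-$n$ argmax to the argmax of $L$; this follows for free from concavity of the pre-limit objectives on the compact simplex $[0,1]$, but needs to be invoked explicitly. The Cauchy tail computations themselves are routine because the likelihood ratio $p_-/p_+$ is globally bounded.
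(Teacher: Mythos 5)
Your route is the same as the paper's: use Corollary \ref{them_op} to pass from the finite-$n$ stacking objective to the population problem of minimizing $\KL$ between $p_{\mathrm{true}}$ and the two-component mixture $w\,\mathrm{Cauchy}(\gamma^+,1)+(1-w)\,\mathrm{Cauchy}(\gamma^-,1)$, identify the limiting Bayes inference with the endpoint $w=1$ (Theorem \ref{thm_post_converge}), rule out that endpoint by a first-order condition for part (a), and for part (b) use $\gamma^\pm\to\pm a$ (Lemma \ref{lemma_p5}) to read off weights close to $(1-p_0,p_0)$ and a vanishing $\KL$. Your concavity and uniform-convergence bookkeeping is consistent with what the paper leaves implicit.

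The genuine gap is in part (a). The claim is for \emph{every} $a>2$ and $0.5<p_0<\xi(a)$, but you verify $L'(1)<0$ only after replacing the true modes by $\pm a$, and the proposed ``continuity/perturbation'' transfer only bridges an $O(1/a)$ discrepancy, i.e.\ it works only in the large-$a$ regime. For moderate $a$ (say $a$ slightly above $2$ with $p_0$ near $0.5$) the modes sit near $\pm\sqrt{a^2-4}$, nowhere near $\pm a$, so your value $1+2a^2(1-p_0)$ for $\int p_{\mathrm{true}}\,p_-/p_+$ is not even approximately the relevant quantity and the sign of $L'(1)$ does not follow; you flag this yourself as ``the main obstacle'' but do not resolve it, whereas resolving it is exactly the content of the paper's ``first order conditions'' step. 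The check can be made exact: by residues,
\begin{equation*}
\int_{\R}\mathrm{Cauchy}(\tilde y;c,1)\,\frac{p_{\gamma^-}(\tilde y)}{p_{\gamma^+}(\tilde y)}\,d\tilde y
=\frac{(c-\gamma^+)^2+(\gamma^+-\gamma^-)^2+4}{(c-\gamma^-)^2+4}\eqqcolon J(c),
\end{equation*}
so $L'(1)=1-p_0J(a)-(1-p_0)J(-a)$ is explicit, and negativity over the whole claimed range must then be argued using what is known about the mode locations, namely the stationarity conditions $h'(\gamma^\pm)=0$ from Lemma \ref{lem_h_d} and Lemma \ref{lemma_h}; without that step part (a) is only established for large $a$. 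A secondary, minor point: in part (b) the $O(a^{-2})$ remainder you claim for $L(w)$ is too optimistic for Cauchy components, since with probability of order $1/a$ a draw from the $+a$ component lands where $p_-/p_+$ is of order $a^2$, so the error is more like $O(a^{-1}\log a)$; this still vanishes as $a\to\infty$, so the conclusion of (b) and the asymptotic weights $(1-p_0,p_0)$ are unaffected, only the stated rate is wrong.
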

	
	When $n$ is large, for $a >2, p_0=0.5$, the stacking weights for two modes $\pm \sqrt{a^2-4}$ are asymptotically equally 0.5.  We analytically evaluate the elpd under the true data generating process, the  asymptotic ($n \to \infty$) elpd  of full Bayes, and multi-chain-stacking in the right panel of Figure \ref{fig_xi_function}.  Stacking is predictively superior to the full Bayes. The elpd difference between the data generating process and stacking vanishes for a large $a$, implying the KL divergence between them approaches 0.
	
This Cauchy example at $p_0=0.5$ might remind readers of the one constructed by \citet{diaconis1986inconsistent}. They used a Dirichlet prior with the parameter measure Cauchy$(\mu, 1)$ to fit observations essentially coming from $y \sim 0.5 \delta{(a)} +0.5\delta (-a)$ with $a>1$.  The resulting Bayesian posterior of $\mu$ is concentrated at $\pm \sqrt{a^2-1}$.  However, instead emphasizing the inconsistency of this Bayesian procedure, we use our example to praise stacking: it approximates the \emph{true} data generating process given a \emph{misspecified} model, \emph{inconsistent} Bayesian inference, and \emph{non-mixing} samplers. The posterior multimodality is thereby a blessing rather than a curse under model misspecification.
	
	\section{Examples}\label{sec:example} 
We demonstrate the benefit of stacking by a series of multimodal posterior sampling tasks representing a range of challenging Bayesian computations.
	\subsection{Latent Dirichlet Allocation}\label{sec_lda}
\begin{wrapfigure}[9]{R}{0.6\textwidth}
        \small
        \vspace{-2em}
		\begin{tabular}{cl}
			{\small chain weight}& {\small top words in the topic}    \\ \hline
			0.20                & mr, man, wickham, good, give, young, lydia  \\
			0.18                & mr, man, young, bingley, collins, darcy      \\
			0.13                & mr, lady, catherine, dear, great, young      \\
			0.12                & wickham, elizabeth, mr, darcy, replied, hope \\
			0.09                & elizabeth, darcy, mr, sister, wickham, make 
		\end{tabular} \vspace{-0.5em}
		\caption{\em  Weights of the top 5 chains in the LDA model with $L=5$,
			and  top words in the topic that the first paragraph belongs to computed from these 5 chains.
		}\label{fig_lda_word}
			\end{wrapfigure}

	Latent Dirichlet allocation \citep[LDA,][]{blei2003latent} is a mixed-membership  clustering model widely used in natural language processing, computer vision, and population genetics. In the model, the $j$-th document ($1\leq j \leq J$) is drawn from the $l$-th topic ($1\leq l \leq L$) with probability  $\theta_{jl}$, where the topic is defined by a vector of probability distribution $\phi_l$ over the vocabulary, such that each word in the document from topic $l$ is independently drawn from a multinomial distribution with probability $\phi_l$.
	
Despite its popularity for data exploration, LDA suffers from computational instability, as the inference may not replicate itself from either multiple runs \citep{mantyla2018measuring} or data shuffle \citep{agrawal2018wrong}. This confuses users as a different result is produced from each new run,  and reduces the predictive power of text mining classifiers. Some literature recommends examining and selecting one best fit from multiple unstable inference results subjectively or through cross validation, or manually tuning hyperparameters to get rid of posterior multimodality, which however changed the original model and could further undermine classification efficiency  \citep{tian2009using,carreno2013analysis}. 
	
	We apply an LDA topic model to texts in the novel {\em Pride and Prejudice}. After removing frequent and rare words, the book contains 2025 paragraphs and 32877 words, with a total unique vocabulary size of 1495. We randomly split the words in the data into 70\% training and 30\% test. The dimension of the parameters  $\theta$ and $\phi$ grows as a function of the number of topics $L$ by $2025\times L$ and $L\times 1495$ respectively.  We place independent Dirichlet$(0.1)$ priors on $\theta$ and $\phi$. We vary $L$ from 3 to 15, and for each fixed model we sample with Stan using 30 parallel chains initialized at random starting points with 2000 or 4000 iterations per chain. 
		\begin{figure}
		\centering
		\includegraphics[width=1\linewidth]{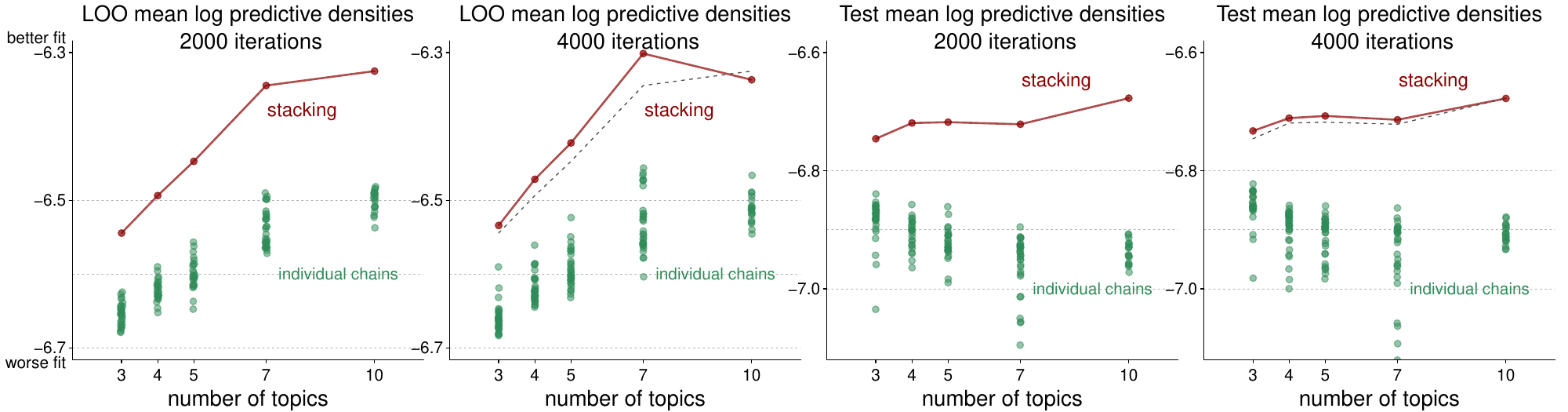}  \caption{\em The mean log predictive densities from 30 randomly initialized chains, and the stacked average of them, evaluated using both leave-one-word-out and independent test data. The number of topics $L$ in the LDA model  varies from 3 to 10, and each chain contains 2000 or 4000 iterations. Individual chains do not mix, and the best of them is invariably worse than stacking.}\label{fig_lda_lpd}
	\end{figure}
	
	Due to the well separated multimodal posterior $p(\phi, \theta|y)$, individual chains do not mix if they are run for more iterations. As represented by green dots in Figure \ref{fig_lda_lpd}, different  chains yield different log predictive densities on test data,
	suggesting the multimodality is more than label-switching. Figure \ref{fig_lda_word} lists, for five runs, the top words in the topic to which the first paragraph belongs.
	
	Following our stacking approach, the 30-chain-stacked average (red line in Figure \ref{fig_lda_lpd}) improves the model fit compared with even the best of individual chains by orders of magnitude, measured in  test data mean log predictive densities.  Indeed, the improvement of  stacking in mean lpd  ($\approx 0.2$) is standardized by sample size and  equivalent to roughly an $\exp(10^5)$ outperforming margin in the scale of Bayes factors.  There is a mismatch between the trend of loo and test lpd, indicating the inconsistency of single-chain loo-selection. This may come from (a) the non-iid nature of textual data, and (b) the parameter size is nearly the same as sample size, such that loo has not reached its consistency territory. But even so, stacking still performs well in test data and can be combined with other predictive metrics such as leave-one-document-out.

		\begin{figure}
	\centering
	\begin{subfigure}[t]{0.35\textwidth}
		\includegraphics[width=\linewidth] {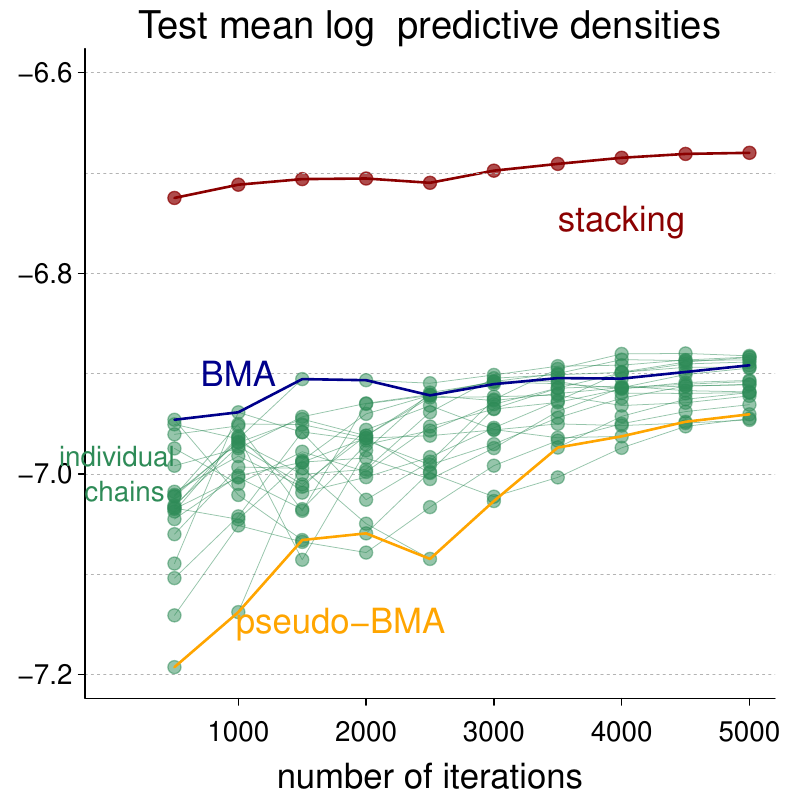} 
	\end{subfigure}
\quad \quad
	\begin{subfigure}[t]{0.35\textwidth}
		\includegraphics[width=\linewidth] {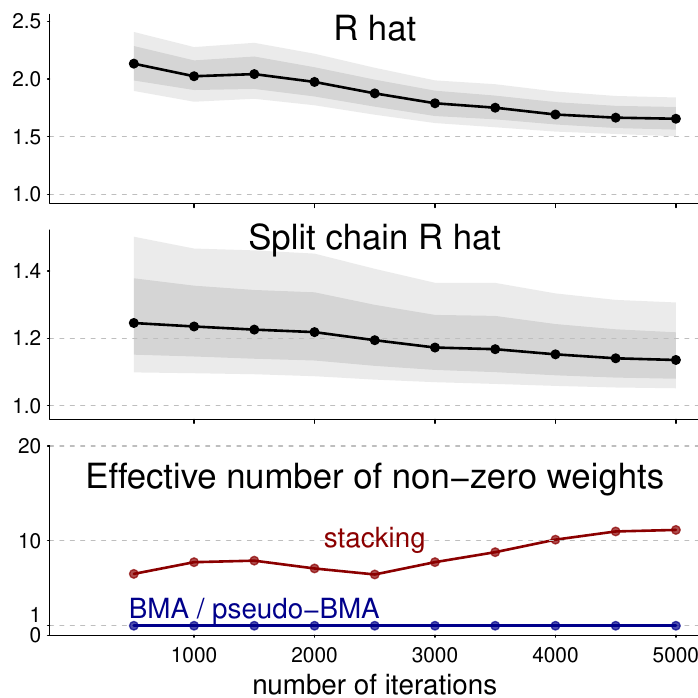} 
			\end{subfigure}
\linespread{1} 	\caption{ \em Stacking benefits from early-stopped MCMC.  We run LDA   with $L=10$ topics on 30 chains. As the number of iterations increase from 500 to 5000, the test  lpd of individual chains increases, while the stacked average has a flatter slope,  indicating we can stop early without losing much predictive power.  Monitoring $\widehat R$ and split-chain $\widehat R$ of all pointwise likelihoods, we find that $\widehat R$ is much bigger than split-$\widehat R$.  The bottom right shows the effective number of nonzero weights. BMA and pseudo-BMA put nearly all weight on one chain.}\label{fig_early_stop}	
	\end{figure}
	The left panel of Figure \ref{fig_early_stop} shows the test data predictive performance using a varying number of iterations from 500 to 5000 (with a fixed number of topics $L=10$).   As the number of iterations increases, test lpd from inferences using individual chains elevates, while the stacked average has a flatter slope, indicating that we can stop earlier and stack chains without losing much predictive power, even though these chains are not completely mixed. The upper and middle right panel show median, 30\% and 50\% central interval of $\widehat R$ and split-chain $\widehat R$ for all pointwise likelihoods. $\widehat R$ is much bigger than split chain $\widehat R$, suggesting that the non-mixing is mostly due to a lack of between-mode transitions. Given that in this problem sampling takes up to 12 hours CPU time per chain per 1000 iterations, such \emph{early stopping of iterations} provides a remarkable opportunity to reduce computation costs.  This is also manifested in Figure \ref{fig_lda_lpd}: for all $L\in [3,10]$,  individual chains perform better when per-chain iterations increase from 2000 to 4000, whereas the stacked average remains nearly unchanged (compare the red and dashed grey lines in the second and fourth panel).
	
	The bottom right panel of \ref{fig_early_stop} shows the effective number of nonzero weights.  In agreement with our theoretical discussion, BMA and pseudo-BMA put nearly all mass onto one chain, and  in fact they often do not even select the optimal chain for the test data (left column). Accordingly, it is no surprise that stacking outperforms BMA and pseudo-BMA.
	
	In addition to the benefit of early stopping of iterations, stacking provides an extra bonus of  \emph{early stopping of topics}.  Usually, the number of topics $L$ involves manual tuning. {\em Stacking effectively expands the model space}. Therefore, we observe in the right two panels of Figure \ref{fig_lda_lpd} that the stacked average is less sensitive to $L$ in test data lpd.  Stacking compensates for the lack of mixture components in the model through additional mixtures of posteriors during chain aggregation.

	\subsection{Gaussian Process Regression}
	
	Consider a regression problem with scalar observations $y_i = f(x_i)+\epsilon _i, i = 1,...,n,$ at input locations $X= \{x_i\}_{i=1}^n$, and $\epsilon _i$ are independent noises.  We place a Gaussian process prior on latent functions $f$ with zero mean and squared exponential covariance.
	In the next two experiments, we apply stacking to remedy bimodality in hyperparameter \emph{optimization}, and slow mixing in \emph{sampling}, respectively.
	
	\paragraph{Combining  modes in hyperparameter optimization.}\label{sec_gp_map}
	In Gaussian process regression, posterior bimodality can occur even with a normal likelihood: 
	\begin{equation}\label{eq_gp}
	y_i = f(x_i)+\epsilon _i, ~  \epsilon _i \sim  \mbox{normal}(0, \sigma),  ~ f(x) \sim  \mathcal{GP} \left( 0, \alpha^2 \exp\left( -\frac{(x-x')^2}{\rho^2} \right) \right).
	\end{equation}
	We use data from \citet{neal1998regression}. The univariate input $x$ is distributed $\mbox{normal}(0,1)$, and the corresponding outcome $y$ is also Gaussian  with standard deviation 0.1.  With probability 0.05, the point is considered an outlier and the standard deviation is inflated to 1. In all cases, the true mean of $y|x$ is
	\begin{equation} \label{eq_gp_f}
	f_{\mathrm{true}}(x) = 0.3 + 0.4 x + 0.5 \sin(2.7x ) + 1.1 / (1+x^2).
	\end{equation}
	
		\begin{figure}
		\centering
		\includegraphics[width= 0.97\linewidth]{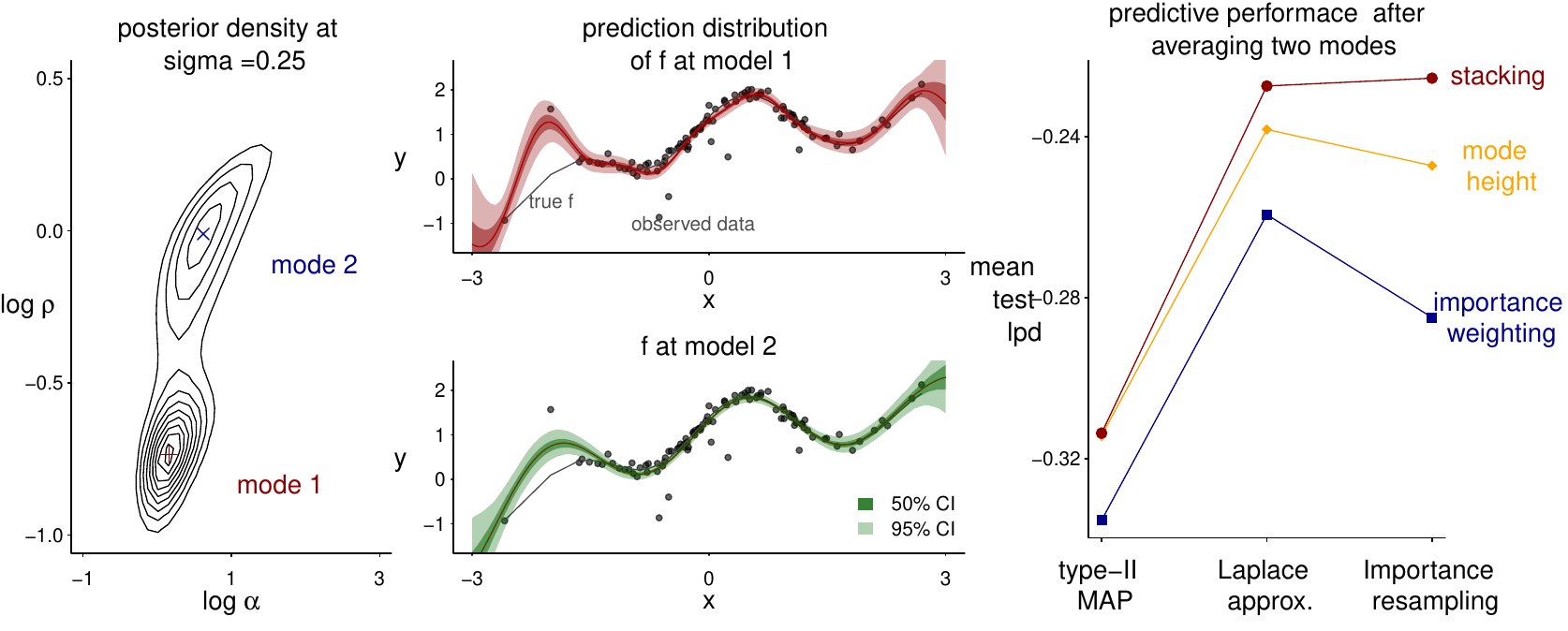}  \caption{\em The posterior distribution of hyperparameters $p(\rho, \alpha, \sigma|y)$ has at least two local modes. The left panel shows contours of the marginal posterior  of  $\rho$  and  $\alpha$ at fixed $\sigma=0.25$.  The middle panel shows draws from the  posterior predictive distribution $f|y$ at the two hyperparameter modes.  We can either pick these two modes as type-II MAP or  locally approximate the posterior of hyperparameters at the modes by Laplace approximation or uniform-grid importance resampling.  Then the resulting modes or local approximation can be combined according to stacking, mode height, or importance weighting. The right panel shows that stacking performs the best on test data log predictive densities for all schemes.}\label{fig_gp_MAP}
	\end{figure}
	
	Model \eqref{eq_gp} requires inference on $f(x_i)$ and all hyperparameters $\theta= (\alpha, \rho, \sigma)$.  
	We integrate out all $f(x_i)$ and obtain the marginal posterior distribution
	\begin{equation} \label{eq_gp_marginal}
	\log p(\theta | y) = - \frac{1}{2}  y^T \left( K(X, X ) + \sigma^2 I \right)  ^{-1} y  - \frac{1}{2} \log | K(X, X ) + \sigma^2I   | + \log p(\theta) + \mathrm{constant},
	\end{equation}
	where $p(\theta)$ is the prior for which we choose an elementwise $\mathrm{Cauchy}^+(0,3)$.
	
	In Neal's dataset with training sample size $n=100$, at least two local maxima of \eqref{eq_gp_marginal} can be found. We visualize the marginal distribution of $p(\rho, \sigma | y)$ at $\sigma =0.25$ on the leftmost of Figure \ref{fig_gp_MAP}.

	Now we consider three standard mode-based approximate inferences of  $\theta | y$:  
	
	\emph{a. Type-II MAP}. The value $\hat \theta$ that maximizes the marginal distribution \eqref{eq_gp_marginal}  is called the type-II MAP estimate.  Using this point estimate of hyperparameters $\theta= \hat \theta $, we further draw $f| \hat \theta, y$. 
	
	\emph{b. Laplace approximation}.  We compute $\Sigma$: the inverse of the negative Hessian matrix of \eqref{eq_gp_marginal} at the local mode $\hat \theta$, draw $z$ from multi-variate-normal$(0, I_{3})$, and use $\theta(z)= \hat \theta + \mathrm{V} \Lambda ^{1/2} z$ as the approximate posterior samples around the mode $\hat \theta$,  where the matrices $\mathrm{V},  \Lambda$ are from the eigendecomposition $\Sigma= \mathrm{V} \Lambda ^{1/2} \mathrm{V}^T$.
	
	\emph{c. Importance resampling}.  Instead of standard Gaussians in the Laplace approximation, we now draw $z$ from uniform$(-4,4)$, and then resample $z$ without replacement with probability proportional to  $p\left( \theta (z) | y\right) $ and use the kept samples of $\theta(z)$ as an approximation of $p(\theta|y)$.
	
	In the existence of two local modes $\hat \theta_1, \hat \theta_2$, we either obtain two MAPs, or two  nearly nonoverlapped draws, $(\theta_{1 s})_{s=1}^S,  (\theta_{2s})_{s=1}^S $. We then evaluate the  predictive distribution of $f$, $p_k(f|y, \theta)= \int\! p(f|y, \theta) q(\theta| \hat \theta_k) d \theta,~  k=1,2,$
	where $q(\theta| \hat \theta_k)$ is a delta function at the mode $\hat \theta_k$, or the draws from the Laplace approximation and importance resampling that is expanded at $\hat \theta_k$. We visualize the  predictive distribution of $f$ using two local MAP estimates in the middle panel of Figure \ref{fig_gp_MAP}. The one with the smaller length scale is more wiggling and passes the training data more closely. 
	
	For each of these three mode-based inferences, we consider three  strategies to combine two modes:
	
	\emph{a. Mode height}. We reweigh the predictive distribution of $f$  according to the height of the marginal posterior density at the mode: $w_k \propto p(\hat \theta_k| y), k=1,2.$  
	
	\emph{b. Importance weighting}.   For approximate posterior draws $(\theta_{1 s})_{s=1}^S,  (\theta_{2s})_{s=1}^S $, we reweigh them  proportional to the mean marginal posterior density $w_k \propto 1/S \sum_{s=1}^S  p(\theta_{ks} | y )$. We choose the importance weights of two MAPs using the ones from importance resampling as it approximates the total posterior mass in the surrounding region near the mode. 
	
	\emph{c. Stacking}.  Our fast approximate loo does not apply to MAP estimation directly. Therefore, we split the data into training $y_\mathrm{train}$ and validation data $y_\mathrm{val}$.  We first obtain either MAPs or approximate hyperparameter draws using training data and optimize their predictions  on validation data. Stacking maximizes  $\sum_{i=1}^{n_\mathrm{val}} { \log \left( \sum_{k=1}^K w_k p(y_{\mathrm{val}, i}| y_\mathrm{train} , \hat \theta_k)\right)  }$ for MAPs or 
	$\sum_{i=1}^{n_\mathrm{val}} { \log\left(  \frac{1}{S}  \sum_{k=1}^K  w_k  \sum_{s=1}^S  p(y_{\mathrm{val}, i}| y_\mathrm{train} ,   \theta_{ks}) \right) }$ for Laplace and importance resampling draws.
	
	In the right panel of Figure \ref{fig_gp_MAP}, we evaluate these three weighting strategies by computing the mean expected log predictive density of the combined posterior distribution on hold-out test data (${n_\mathrm{test}=300}$).  No matter whether we are combining two point-estimates or two distinct Laplace/importance resampling draws near the two modes, the stacking weights provide better predictive performance on test data.
	
	\paragraph{Combining  non-mixed chains  from  Gaussian process regression with a Student-$t$ likelihood.}
	
	\citet{neal1998regression} originally constructed this example in which noise $\epsilon_i$ in \eqref{eq_gp} is modeled by   a $t$ distribution with mean 0, scale $\sigma$ and  degrees of freedom  $\nu$:
	$$p(y_i| f_i, \sigma, \nu)= \frac{\Gamma((\nu+1/)2) }{\Gamma(\nu/2) \sqrt{\nu\pi}\sigma }   \left(  1+ \frac{(y_i-f_i)^2}{\nu \sigma^2} \right)^{-{(\nu+1)/2}}, ~ f \sim  \mathcal{GP} \left( 0, \alpha^2 \exp\left( -\frac{(x-x')^2}{\rho^2} \right) \right).$$
	The   Student-$t$ model is robust to outlying observations but is computationally challenging, because of (a) lack of closed-form expression for $p(f|y)$, and (b) heavy-tailed posterior densities. Approximate methods exist, such as factorizing variational approximation \citep{tipping2005variational}, Laplace approximation \citep{vanhatalo2009gaussian}, and expectation propagation  \citep{jylanki2011robust}, but posterior sampling remains difficult.    
	
	We generate training  data $x_{1:n}$ from uniform$(-3, 3)$, and the outcome $y_i$  has the same mean in  \eqref{eq_gp_f}. $y_i$ either has standard deviation $\sigma_1=0.1$, or  inflated to $\sigma_2>0.1$ with probability  proportional to $\exp\left(  -   \left(C  \frac{  i - 0.4n  } {n }\right) ^2\right),$ where $C>0$ is a concentration factor that decides how the outliers are concentrated with each other in $x$-space.  In the experiment, we vary $\sigma_2$ from $0.1$ to $1$ and $C$ from 1 to 8.    $n_\mathrm{test}=300$ hold-out test data points $(\tilde X_i, \tilde y_i)_{i=1}^{n_\mathrm{test}}$ are generated from the same mechanism.
		
	\begin{figure}
		\centering
		\includegraphics[width=1\linewidth]{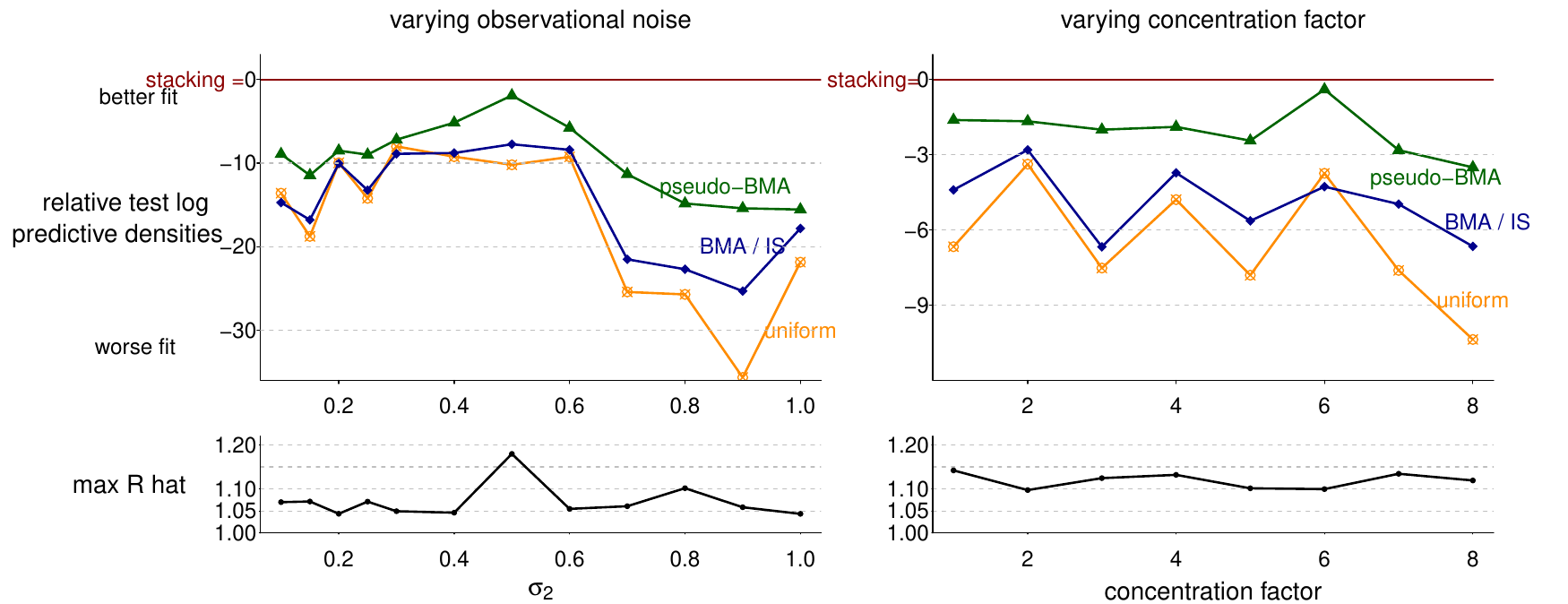} \vspace{-1em}
		   	\caption{\em    Left: We fix the concentration factor $C=5$ and vary the outlier standard deviation $\sigma_2$ from $0.1$ to $1$ in the data generating mechanism. Right: We fix $\sigma=0.3$ and vary  the concentration factor $C$ from 1 to 8. In each setting,  we sample from the posterior distribution using 8 chains with 8000 iterations each, and combine chains using four weighting methods. We  report the test log predictive densities (using $n_\mathrm{test}=300$ independent test data) of three other methods subtracting stacking, which are always negative. The lower row reports the maximum  $\widehat{R}$ among all parameters.}\label{fig_gp}
	\end{figure}

	We fix the degrees of freedom $\nu= 2$ and sample from the full posterior distribution $p(f_1, \dots, f_n, \sigma, \alpha, \rho)$ from $K=8$ parallel chains and 8000 iterations per chain in Stan. We  draw initialization from uniform$(-10,10)$ for unconstrained parameters and set the maximum tree depth to 5 in the  No-U-Turn sampler \citet[NUTS;][]{hoffman2014no}.   In the lower row of Figure \ref{fig_gp}, we report the maximum $\widehat{R}$ of all sampling parameters among 8 chains:  clearly not mixing.
	
	We compare four chain-combination strategies:  BMA, pseudo-BMA, uniform averaging, and stacking.  After each iteration of $(\sigma, \rho, \alpha, f)$, we draw  posterior predictive sample of $\tilde f= f(\tilde X)$, 
	and compute the mean test data log predictive densities. 
	Since test performance changes in orders of magnitude under different data-generating settings, in Figure \ref{fig_gp} we use stacking as a baseline and compare the test log predictive densities of other methods by subtracting stacking ones. In all cases, stacking outperforms the other three approaches. 
    \begin{figure}[!h]
	\centering
	\includegraphics[width=1\linewidth]{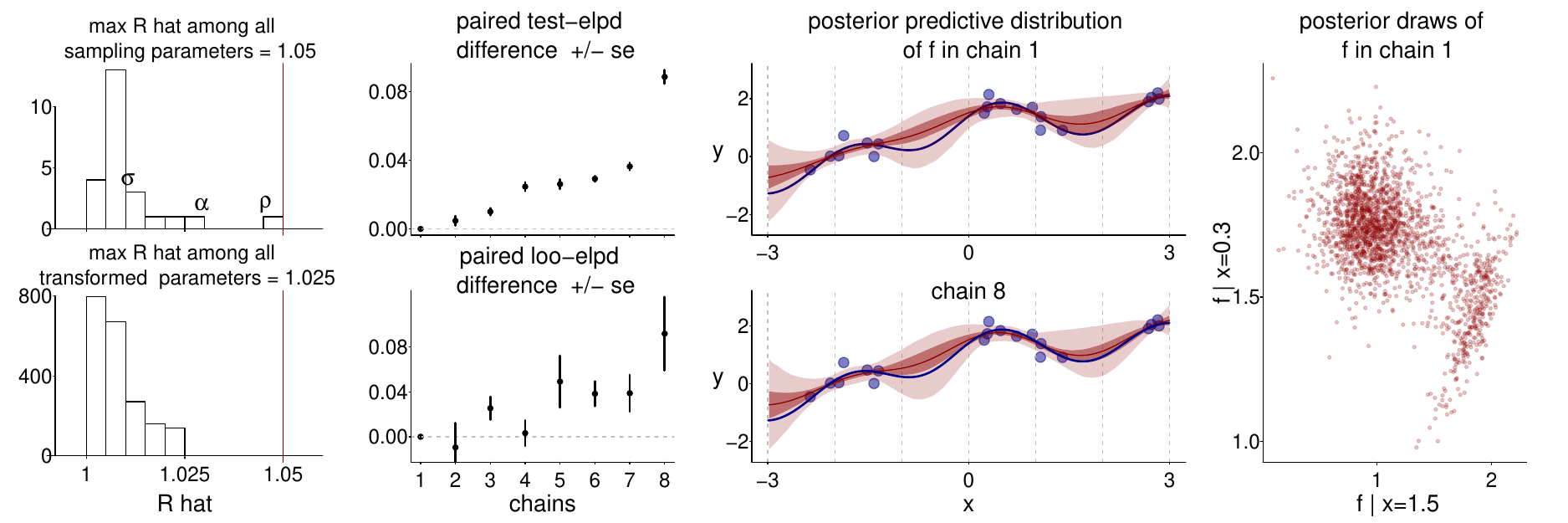} \vspace{-1em}
	\caption{\em    In this experiment with $n=20, \sigma_2=0.6, C=5$, even when $\widehat{R}$ for all  parameters are smaller than 1.05,  the 8  chains exhibit different predictive capabilities. The second column shows the estimated log predictive densities subtracting chain $1$ and the standard error in test data or loo.  Chains have been reordered by test scores. The third column shows the prediction of $f$ in chains 1 and 8. The rightmost column is the joint posterior predictive draw $f$ at $x=1.5$ and 0.3 in chain 1.}\label{fig_gp_20}
\end{figure}

	There are three contributors to the poor mixing in this example. First, chainwise predictions may diverge even when  parameters are nearly mixed. Figure	\ref{fig_gp_20} display sampling results for a dataset with $n=20, \sigma_2=0.6, C=5$. In the leftmost column, all $(\sigma, \rho, \alpha, f)$ and transformed parameters have  $\widehat{R}<1.05$.  But the log predictive densities are different across chains, shown in the second column (chains have   been re-ordered by test lpd). Stacking is a more powerful diagnostics tool in this case.  
	
	Second, the posterior distribution $f|y$ can be multimodal.  The rightmost column of Figure \ref{fig_gp_20} displays the joint bimodal posterior distribution of $f$ conditioning on $x=0.3$ and $1.5$.  In this example, this is not a sampling concern owing to the small between-mode energy barrier, and HMC/NUTS sampler in Stan is able to move between these two modes rapidly.
	\begin{figure}
	\centering
	\includegraphics[width=1\linewidth]{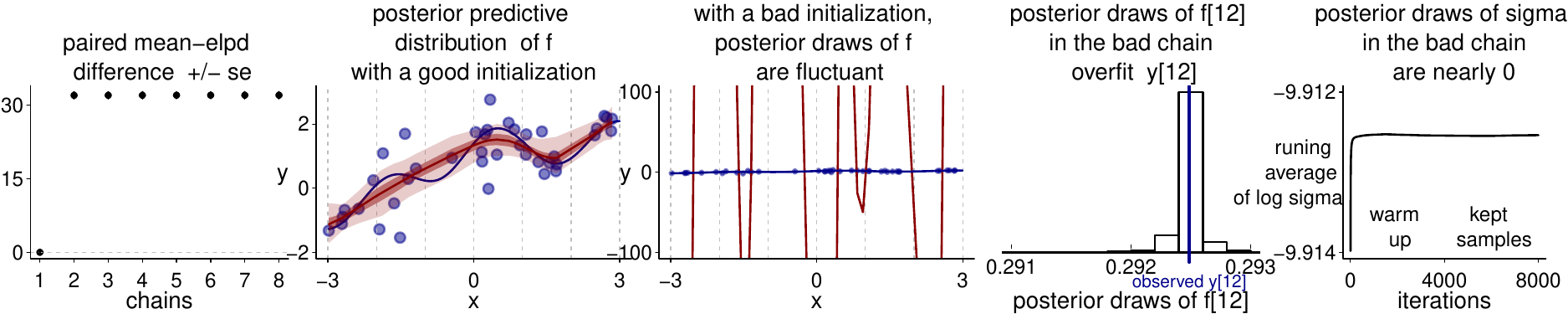}
	\caption{\em  In  an experiment with $n=40, \sigma_2=1, C=5$, chain 1 is trapped in a bad local mode, overfitting the data, and with $\sigma$ is trapped near zero for 8000 iterations.  This chain has a low elpd on both test data and loo, hence is abandoned in stacking.}\label{fig_gp_40}
\end{figure}

	Third, some chains may be trapped in bad local modes. In Figure \ref{fig_gp_40}, we outline the sampling result from another dataset ($n=40, \sigma_2=1, C=5$). Chain 1 is trapped in a local mode with $\sigma \approx 0$ and is unable to escape the local trap after 8000 iterations.  The posterior prediction $f$ fluctuates and overfits the observations:  $f_{12}|y$ is nearly a delta function at $y_{12}$.  The strong overfitting of this chain leads to a low elpd on both test data and leave-one-out cross validation, hence it is abandoned by stacking. 
	
	
	\subsection{Hierarchical Models}
	\paragraph{When the bimodality occurs and when reparameterization helps.}\label{sec_hier_1}
	Consider observations from  $J$ exchangeable groups. For simplicity we assume a balanced one-way design, with data $y_{ij}, i=1, \dots, N$,  from groups $j$. We apply a hierarchical model with  parameters $\left( \theta, \sigma, \mu, \tau\right)$,
	\begin{equation}\label{eq_cp}
	\mathrm{centered}:  \quad  y_{ij}  | \theta, \sigma \sim \mbox{normal}(\theta_j, \sigma), ~ \theta_j |\mu, \tau \sim \mbox{normal}(\mu, \tau), ~  1\leq i\leq N, ~ 1\leq j \leq J.
	\end{equation}
	Sampling in the space of  $\left( \theta, \sigma, \mu, \tau\right)$ is called \emph{centered parameterization}.  When the likelihood is not strongly informative, the prior dependence between $\tau$ and $\theta$ in \eqref{eq_ncp} can produce a funnel-shaped posterior that is non-log-concave, and slow-to-mix near $\tau=0$.
	
	Alternatively, with \emph{non-centered parameterization}, we sample $\left( \xi, \sigma, \mu, \tau\right)$ through  a bijective mapping $ \theta_j =\mu+ \tau \xi_j$, and the model is equivalently reparameterized by 
	\begin{equation}\label{eq_ncp}
	\mathrm{non\!\!-\!\!centered}:  \quad y_{ij} \sim \mbox{normal}(\mu+ \tau \xi_j, \sigma), ~ \xi_j \sim \mbox{normal}(0, 1), ~  1\leq i\leq N, ~ 1\leq j \leq J.
	\end{equation}

	When the likelihood is not strongly informative, the non-centered parameterization is preferred \citep{betancourt2015hamiltonian, gorinova2019automatic}, 
	but when the likelihood is strongly informative, then the non-centered parameterized posterior has a funnel shape.
	The data informativeness can be crudely measured  by 
	the inverse of $F$-statistics (between group variance divided by within group variance). But beyond such heuristics and  limited classes of models where analytic results can be applied, there is no general guidance on which parameterization to  adopt.
	
	Parallel to the slow mixing rate due to the funnel-shaped posterior, the posterior in \eqref{eq_cp} can contain two modes, usually arising when the data indicate a  larger between‐group variance than does the prior.  \citet{liu2003posterior} characterized the bimodality of this model under conjugate priors in closed form. 
	
	To understand how the posterior bimodality  affects sampling efficiency, in the first simulation we generate data from $J=8$ groups and $N=10$ observations  per group.  The true $\tau$ and $\sigma$ vary from $0.1$ to $20$, with a varying amount of $t$-distributed noise added to $\theta$. 
	We place independent conjugate inverse-gamma$(0.1, 0.1)$ priors on $\tau^2$ and $\sigma^2$. For every  realization of data, we sample from the posterior distribution in both centered and non-centered parameterization using 4000 iterations, and analytically determine whether the centered parameterization has two posterior modes.

		\begin{figure}
		\centering
		\includegraphics[width=1\linewidth]{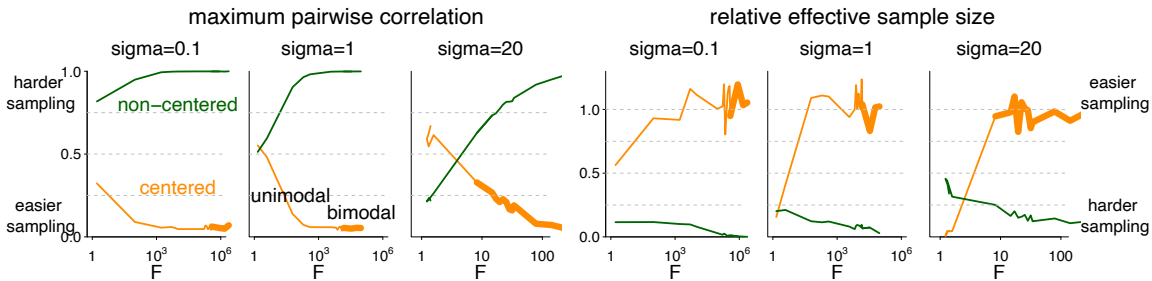} \vspace{-0.5em}
		\caption{\em  We fit the hierarchical model on data simulated from by various generating process.  When the between group variation is large or the within group variation is small, whose ratio is the sample F statistics, the centered parameterization is more efficient, amid less correlated posterior and large effective sample size.  Counterintuitively, this is also when the posterior bimodality occurs.}
		\label{fig:cp_vs_ncp}
	\end{figure}
	\begin{figure}[!ht]
		\centering
		\includegraphics[width=1\linewidth]{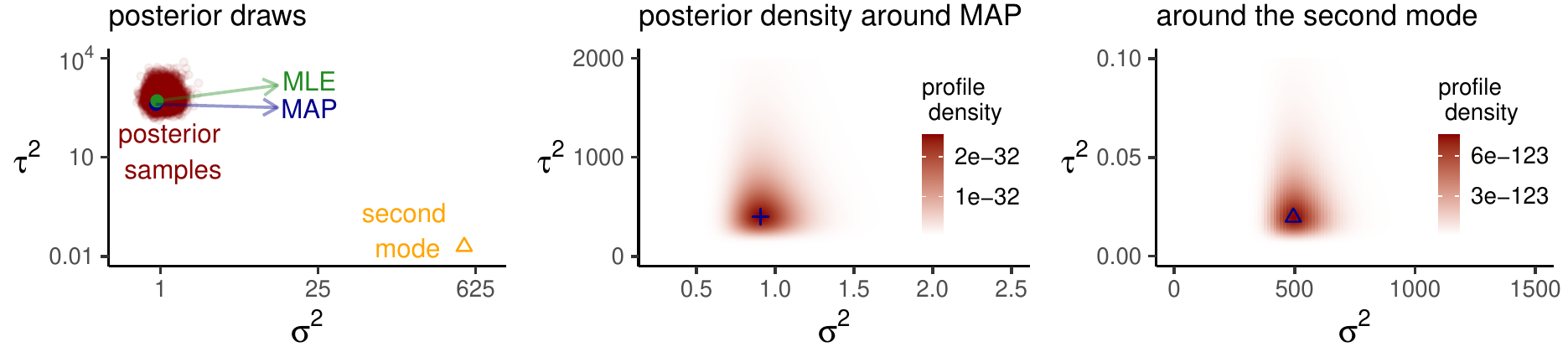} \vspace{-0.5em}
		\caption{\em A grid search finds two posterior modes  when data are generated by $\sigma=1$ and $\tau=25$. The second mode  in density and prediction ability is ignored by posterior sampling.}
		\label{fig:randrom_eff_bimodal}
	\end{figure}
	
	In Figure \ref{fig:cp_vs_ncp}, we assess the maximum absolute parameterwise correlations (left three columns), and the relative effective sample size (ESS divided by total iterations, right three columns) in  posterior samples.  Conforming to our heuristics, when between-group variation is large and within-group variation is small,  the centered parameterization is more efficient, and vice versa.

	Surprisingly, in this example metastability and multimodality evolve in opposite directions.  In Figure \ref{fig:cp_vs_ncp} we visualize the occurrence of posterior bimodality in centered parameterization by thicker line width. When the between-group variation increases, the centered posterior eventually becomes bimodal, but sampling becomes more efficient.

	How is this possible?  Figure  \ref{fig:randrom_eff_bimodal} presents an example where the data are generated by  $\sigma=1$ and $\tau=25$.   Both the MAP and MLE are close to the true value. A second local mode explains all variation by a large $\sigma$ (opposite to Figure \ref{fig_gp_40}), but it is orders of magnitude lower than the first one in posterior densities, hence ignored by  sampling. That's why the centered parameterization runs smoothly in the existence of posterior bimodality. The bad mode also has a low loo elpd, so stacking assigns it zero weight when we combine the modes.

	\paragraph{A stacked parameterization and zero-avoiding priors.}\label{sec_hier_2}
	Section \ref{sec_hier_1} leaves a few open problems:  which parameterization to choose in practice, whether the sample has included all local modes, whether the ignored modes are predictively important, and if we should search for them in the first place. The bimodality analysis of \citet{liu2003posterior} applies to conjugate priors. But  multimodality readily exists in hierarchical models. When the group-level standard deviation $\tau$ has a flat prior, $\tau=0$ is \emph{always} a mode of the joint posterior distribution.  From the modeling perspective, this mode represents complete pooling.   
	
	Given that the centered parameterization behaves like an implicit truncation and  has sampling difficulty in the small $\tau$ region, we propose a stacking-based solution for reparameterizations. We run $K+1$ chains. The first chain is complete pooling: restricting $\tau=0$ and $\theta_j=\theta_1$. The next $K$ parallel chains  are centered parameterization with a zero-avoiding prior  \citep{chung2013nondegenerate} on $\tau$.   Finally, we use stacking to average these $K+1$ chains.  Intuitively, if  $\tau \approx 0$ is predicatively important but missed by the  implicitly left truncated centered parameterization, the first chain fills the hole; when  $\tau \approx 0$ is incompetent, the centered sampling is boosted by circumventing the computationally intensive  region $\tau \approx 0$.
	\begin{figure}
		\centering
		\includegraphics[width=1\linewidth]{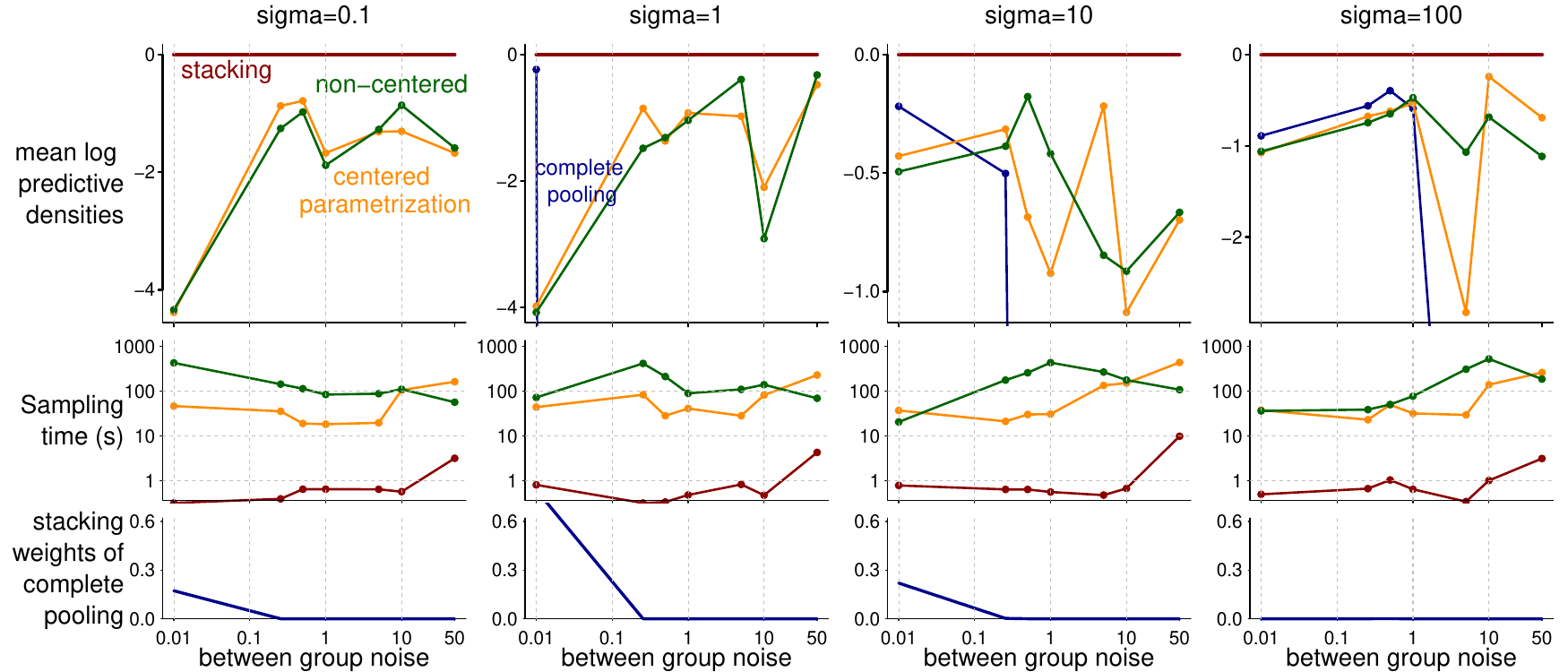}	 \caption{\em  We stack 8 parallel centered-parameterized chains and  1 complete pooling chain. The stacked average always has better test data performance than both centered and non-centered ones in all data configurations. The additional computation cost of stacking is minimal.  Even when the complete pooling chain receives zero weight, stacking still helps remedy slow mixing of remaining chains and achieves better elpd than uniform mixing.}	\label{fig:cp_stacking_full}
	\end{figure}
	
	To validate our proposal,  we simulate data with  dimensions $J=100$ (number of groups) and  $N=20$ (observations per group).   We vary the true within-group standard deviation $\sigma$ from $0.1$ to $100$ and add between-group noises $Bv_j$ to $\theta_j$, where $B$ is a   constant scalar varying from 0 to 50, and each $v_j$ is an independent Student-$t(1)$ noise.  We place  a zero-avoiding prior $\tau^2 \sim$ inv-gamma$(0.1, 0.1)$.
	We sample one chain (3000 iterations) from the complete pooling model, eight chains each from  centered and non-centered parameterization, stack the complete pooling and centered ones, and  evaluate the prediction ability of the posterior inference using mean log predictive densities on $N_\mathrm{test} =300$ independent test data in each group. In the upper row of Figure \ref{fig:cp_stacking_full}, we place the stacking average as the baseline and extract its elpd from other parameterizations.  The complete pooling model almost always has lpd so low that it does not even appear on the graph, and should never be used by itself.  Instead of picking between the centered or and non-centered  parameterization, the stacking estimate (red line) always has a larger log predictive density than the best of them.  Such advantage is achieved at a negligible computation cost compared with sampling time (middle row).  These  patterns are robust under different prior and data configurations, and we have omitted similar outcomes when we tune $J$ from 10 to 500 and for other zero-avoiding priors.
	
	Lastly, in this example, stacking remedies both the incapability to sample in small $\tau$ regions, and between-chain-non-mixing in the centered parameterization. The last row of Figure \ref{fig:cp_stacking_full} monitors  stacking weights for the complete pooling chain. Even when it receives zero weight, the stack-weighted draws from centered parameterization are better than the  uniform mixing of eight chains.
	
	\subsection{Stacking Multi-run Variational Inference in a Horseshoe Regression}\label{sec_example_vi}
	The regularized horseshoe prior \citep{piironen2016hyperprior, piironen2017sparsity} is an effective tool for Bayesian sparse regression. Denoting $y_{1:n}$ as a binary outcome and $x_{n\times D}$ as predictors, the logistic regression with a regularized horseshoe prior is,
	\begin{align*}
	\Pr(y_i=1)= \mathrm{logit}^{-1} ( \beta_{0}+\sum_{d=1}^{D} \beta_{d} x_{id} ), ~i=1,\dots, n, \quad  \beta_d | \tau, \lambda, c\sim  \mbox{normal}\left( 0,     \frac{\tau c\lambda_d}{(c^2 + \tau^2 \lambda_d^2)^{1/2}}\right), \\
	c^2 \sim  \mathrm{Inv\!\!-\!\!Gamma} (\alpha,\beta),   \quad \tau \sim \mathrm{Cauchy}^+(0,1), \quad  \lambda_d \sim \mathrm{Cauchy}^+(0,1),  ~ d=1, \dots, D. 
	\end{align*}
	
	
	Sampling from the exact posterior $p(\beta,\tau, c, \lambda| y)$ is computationally intensive and not scalable to big data.  Unfortunately, mean-field variational inference \citep[VI, ][]{blei2017variational} which optimizes over the best mean-field Gaussian approximation to the joint posterior measured in KL divergence, behaves poorly on horseshoe regression. 
	In particular, VI cannot capture the posterior multimodality \citep[see examples in][]{yao2018yes}, which is a key aspect of the regularized horseshoe, a continuous counterpart of the spike-and-slab prior.  
	
	In general, the optimization problem in variational inference is not convex. Equipped with stochastic gradient descent, multiple runs of variational inference can return entirely different parameters. The common practice is to either select the best run based on the evidence lower bound (elbo) or test data performance.  In the presence of posterior multimodality, the best that a normal approximation can do is to pick one mode, which in particular undermines the advantage of altering between no pooling and complete pooling of horseshoe regressions. 
	
	In the next two experiments, we apply stacking to multiple runs of automatic variational inference \citep[ADVI, ][]{kucukelbir2015automatic}.  In the $k$-th run, $k=1, \dots,K,$ we obtain $S$ posterior approximation draws $\theta_{k1}, \dots, \theta_{kS}$. We treat these as posterior samples, obtain the leave-one-out predictive densities, and use stacking to derive the optimal combination weights of all $K$ runs.

	\paragraph{Synthetic data.} 
	\begin{figure}
		\centering
		\begin{minipage}[b]{.4\textwidth}
			\centering
			\includegraphics[height=5.3cm]{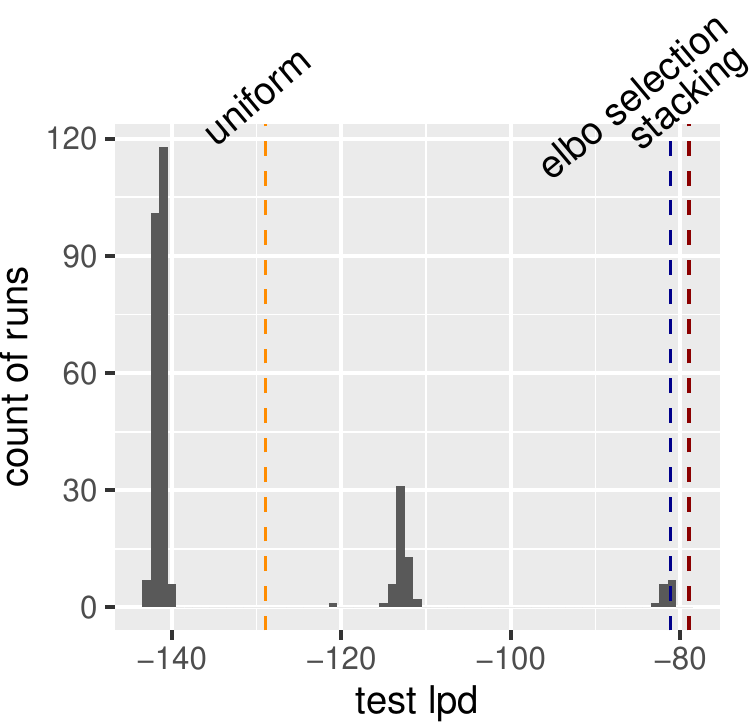} \caption{\em Test data elpd among 300 runs of variational inference using synthetic data. Stacking over 300 runs achieves better prediction than any single run, and also outperforms uniform mixing.}\label{fig:HS_test}
		\end{minipage}
		~
		\begin{minipage}[b]{.51\textwidth}
			\centering
			\includegraphics[height=5.3cm] {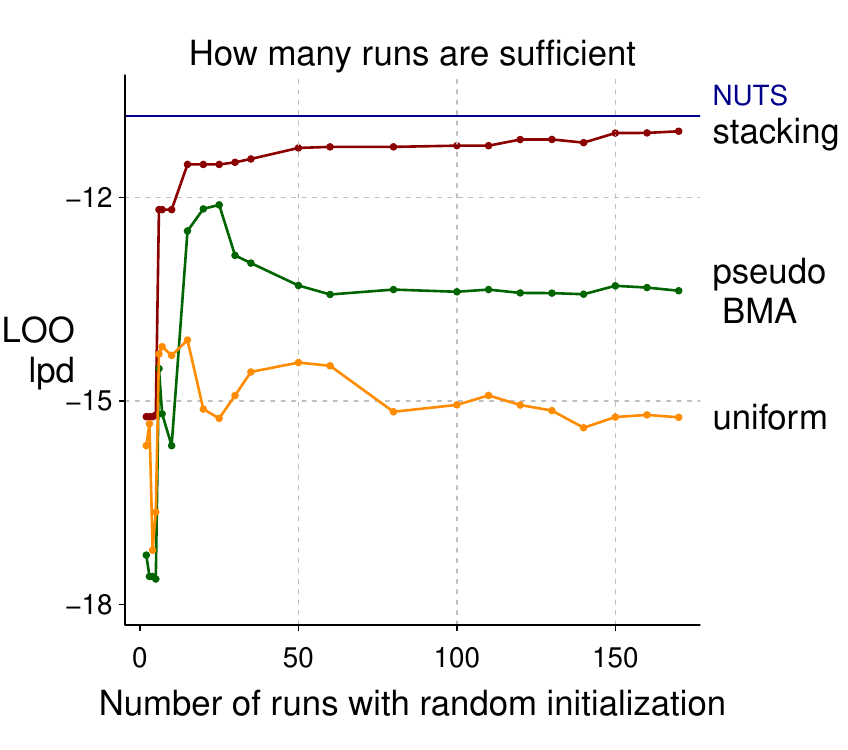}	 \caption{\em Monitoring convergence for the leukemia example.  Pseudo-BMA and uniform weighting have lower loo-lpd with more runs. Stacking is stable after 10 runs and gives a fit close to NUTS while requiring much less computation time.}\label{fig:HS_convergence}
		\end{minipage}
	\end{figure}

	We generate data from the model, $ \mathrm{Pr}(y_i=1)=   \mathrm{logit}^{-1}\! \left(   \sum_{d=1}^{400} \beta_d x_{id} \right)$, $i=1,\dots, n=40.$  The design matrix $X$ is normally distributed with shared featurewise components to increase  linear dependence. Of the 400 predictors, only the first three have nonzero coefficients $\beta_{1,2,3}=(3, 2, 1)$; this is the example discussed in \citet{van2014horseshoe} and \citet{piironen2017sparsity}.  We  assess the model prediction on hold-out test data with size $n_\mathrm{test}=200$. 
	
	
	Figure \ref{fig:HS_test} presents the test data log predictive densities among 300 ADVI runs with $10^5$ stochastic gradient descent iterations each run. Stacking achieves better prediction than any single run and uniform mixing.  Most of the runs have a low lpd, making the uniform reweighing undesired. The elbo selection selects the second-best run (in test data lpd).

	\paragraph{Leukemia classification.} 
	We consider regularized horseshoe logistic regression on the leukemia classification dataset. It contains 72 patients $y_i=0 ~ \mathrm{or}~1, 1\leq i\leq 72$, and a large set of predictors consisting of 7128 gene features $x_{id}, 1\leq d\leq 7128$.
	
	In this section, we view HMC/NUTS sampling in Stan as the gold standard, which is slow (several hours per 1000 iterations) but mixes well in this dataset \citep{piironen2017sparsity}.   We push the limit of variational inference by averaging 200 parallel ADVI  runs with $10^5$ stochastic gradient descent iterations, where each run takes less than one minute, but the approximation from any VI run is inaccurate. 
	
	Figure \ref{fig:HS_convergence} displays the leave-one-out log predictive density of the combined distribution as a function of the number of runs to average, as previously described in \eqref{eq_monitor}.  For stacking, there is a first jump at 5 runs, a second jump at roughly 10 runs, and then almost stable afterward.  For pseudo-BMA and uniform weighting, the loo elpd is worse with  more runs, because  VI is sensitive to initialization, and pseudo-BMA, BMA, and uniform weighting are sensitive to weak but duplicated runs \citep{yao2018using}. Stacking achieves a much better leave-one-out lpd than all individual chains and other weighting methods,  nearly comparable to HMC/NUTS. 
	There is one caveat:  because of the optimization procedure, 
	the loo lpd of stacking likely overestimates its expected lpd.  
	
	\begin{figure}
		\hspace{-1.3cm}
		\begin{subfigure}{0.5\textwidth}
			\centering
			\includegraphics[height= 6.2cm]{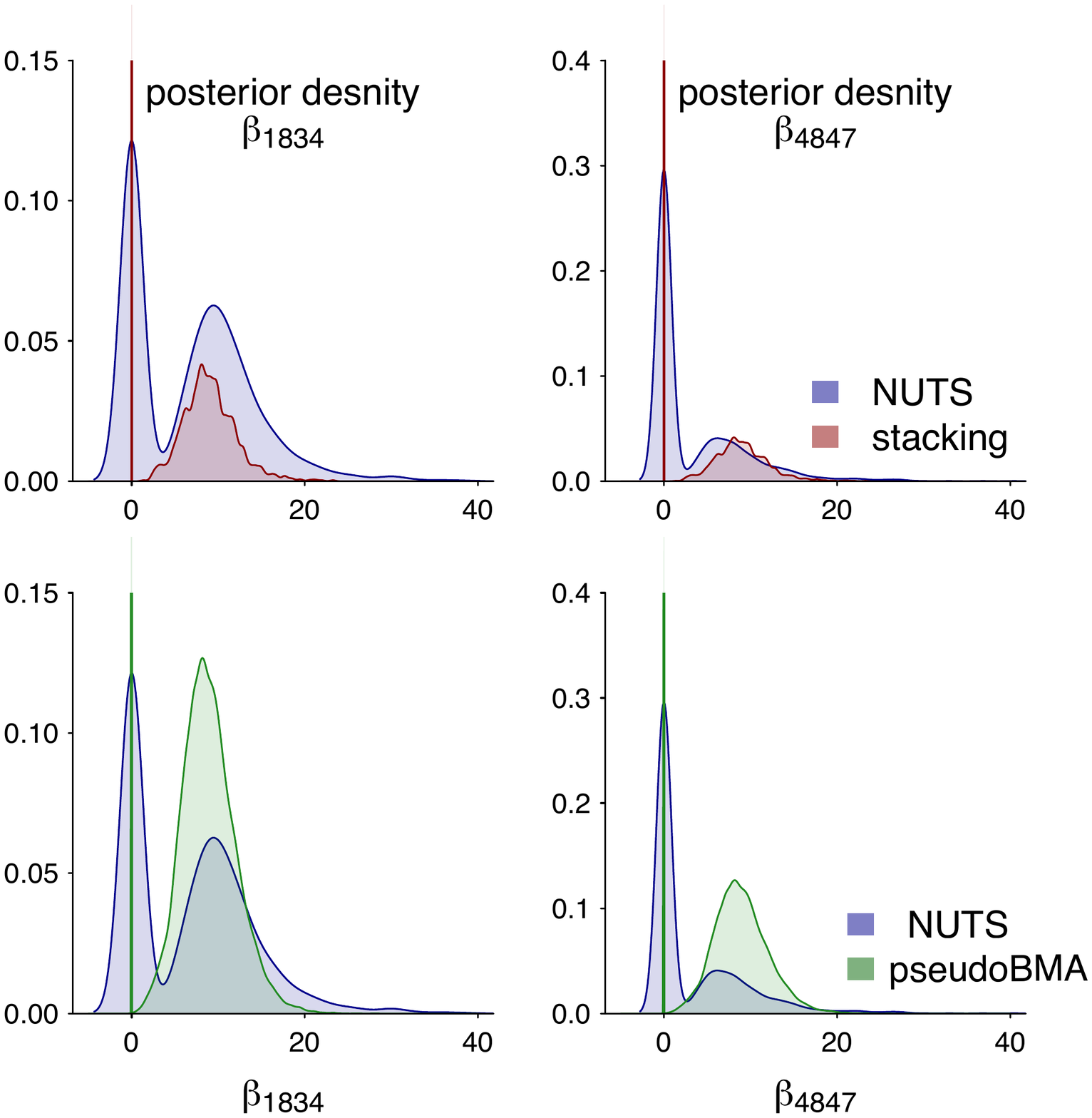}
		\end{subfigure}%
		~ 
		\hspace{-0.5cm}
		\begin{subfigure}{0.58\textwidth}
			\centering
			\includegraphics[height=6.1 cm]{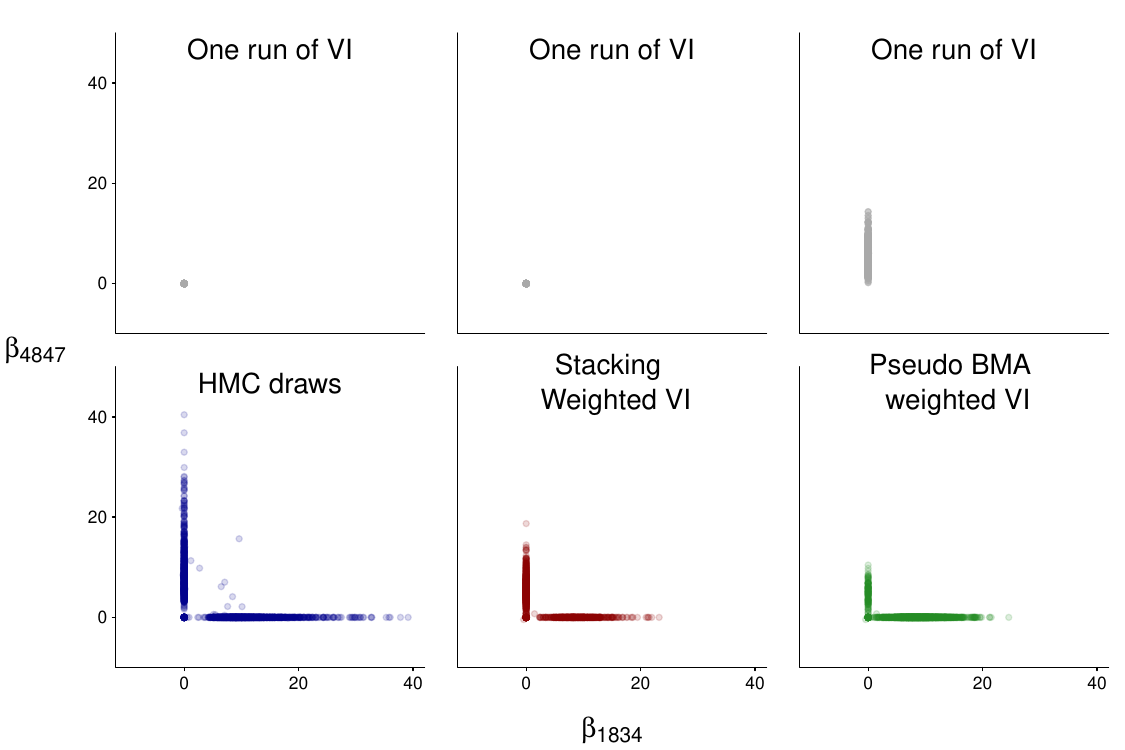}
		\end{subfigure}%
		 	\caption{\em The stacked VI posterior distribution matches HMC/NUTS draws  reasonably well both marginally (left panel) and jointly (right panel) for the leukemia example, although individual runs are inaccurate. The graph displays two parameters  $\beta_{1834}$ and  $\beta_{4847}$ that  have the largest absolute posterior means.}	\label{fig:horseshoe_density}
	\end{figure}
	
	
	To better evaluate how close the final inference is to the exact sampling,  we visualize the stacked posterior VI draws of $\beta_{1834}$ and  $\beta_{4847}$ (we pick these two variables which in our computation had the largest absolute posterior means as estimated with HMC/NUTS) in Figure \ref{fig:horseshoe_density}. Stacked VI approximates the posterior well both marginally (left two columns) and jointly (right three columns). It captures the main shape: a spike concentrated at 0 and a slab part---a true spike in the stacked distribution might be even more appealing for interpretation.  We also plot the joint distributions from three individual runs, all distant from the truth.  Stacking recombines these individual mean-field normal approximations,  the mixture of enough of which can approximate any continuous distribution.
	
	Finally as a caveat, the PSIS-loo approximation is applicable to VI under the assumption that each VI optimum $q_k$ locally matches the exact posterior $p$ (up to a normalization constant $c_k$):
	\begin{equation}\label{eq_VI_assum}
	\exists \Theta_k \subset \Theta, ~     q_k(\Theta_k) \approx 1,  \quad \mathrm{s.t.} ~ \forall \theta \in \Theta_k, ~   q_k(\theta) \approx c_k p(\theta|y),
	\end{equation}
	which can be assessed by diagnostics in  \citet{yao2018yes}.  In this example, it is implausible that \eqref{eq_VI_assum} would exactly hold, but PSIS-loo still yields useful results.  Alternatively, we can circumvent assumption \eqref{eq_VI_assum}, replace loo by a training-validation split, and perform stacking on the validation set, as shown in Section \ref{sec_gp_map}.
	
	\subsection{Bayesian Neural Networks}\label{sec_nn}
	
	The posterior distribution of neural network parameters is well known to be often multimodal. We demonstrate stacking for such an example using the MNIST dataset, a collection of images of handwritten digits that are to be classified into their true labels, 0--9.  We consider a two-layer neural network with tanh activation function:
	$$ \mathrm{Pr}(y_i = k) \propto  \exp(\sum_{j=1}^J  h_{ij}  \beta_{jk}  + \phi_k),   \quad  h_{ij}=  \mathrm{tanh}( \sum_{m=1}^{M}   x_{i m} \alpha_{m j}  ),   \quad  i=1, \dots, n, \ k=0, \dots, 9.$$
	where $n$ is the sample size,  $J$ is the number of hidden nodes, and $M=784$ is the input dimension.  Making scalable Bayesian inference remains  an open computation problem and beyond the scope of this paper. To simplify the problem while keeping the pathological multimodality in the  posterior distribution, we subsample $n=1000$ training data from the labels $y=1$ and 2 and set the number of hidden nodes $J=40$. We use hierarchical priors, $\alpha  \sim $ normal(0,$\sigma_\alpha$), $\beta  \sim $ normal(0,$\sigma_\beta$), $\sigma_\alpha, \sigma_\beta \sim$ normal$^+(0,3)$. Switching the order of hidden nodes does not change the predictive density. We eliminate the combinatoric non-identification in all other experiments in this section  by constraining the order of $\beta$: $\beta_1 \geq \beta_2  \dots \geq \beta_J$. 
	
	\begin{figure}
		\includegraphics[width=1.05\linewidth]{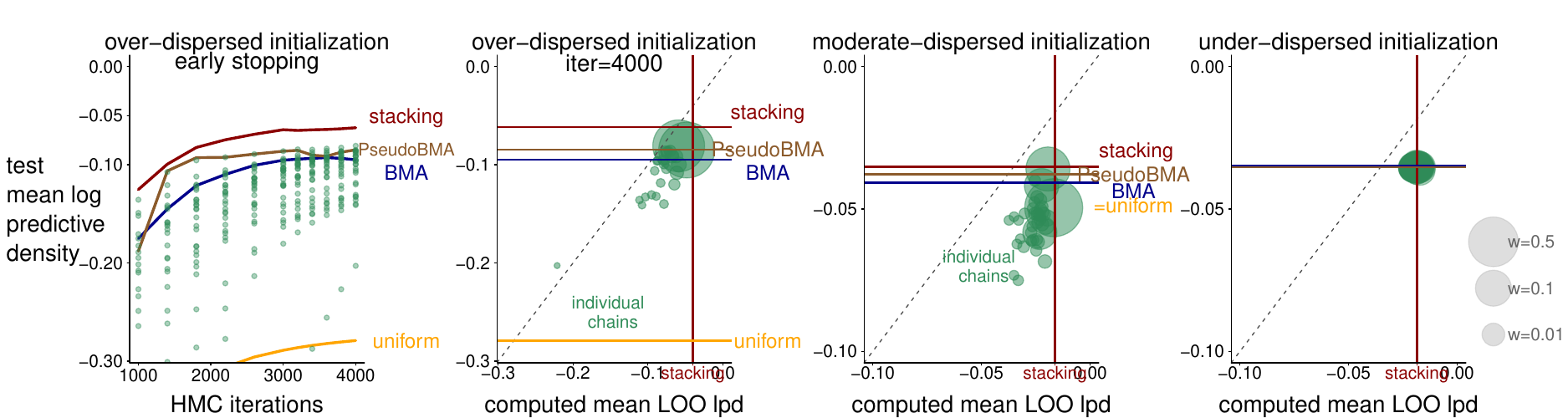}
		 	\caption{\em  (1): The test mean log predictive densities  of early stopped chains.  Stacking performs consistently better than single-chain or other weighting methods.
			(2)--(4):  The mean leave-one-out and test data log predictive densities of 50  individual chains (green dots), their stacking weights (size of the dot), and the test mean lpd from four weighting strategies when fitting a 40-hidden node neural network on MNIST.  There were 4000 iterations per chain, and network parameters  are  initialized  from uniform$(-50,50)$, $(-5,5)$, and $(-0.001,0.001)$, respectively. Some individual changes in the overdispersed setting are out of the lower range.
		}	\label{fig_nn_wo_id}
	\end{figure}
	
	We sample from the posterior distribution $p(\phi, \beta, \alpha | y, x)$ using 50 parallel HMC/NUTS chains in Stan.  The right three panels in Figure \ref{fig_nn_wo_id} show the posterior predictive performance of individual chains and combinations, evaluated by the mean log predictive densities on both leave-one-out data and   test data with $n_\mathrm{test}=2167$.   
	The test score standard deviation is negligible.  The initial values of  unconstrained parameters in panels 2--4 are drawn from uniform$(-50,50)$, $(-5,5)$, and $(-0.001,0.001)$, respectively. Each green dot stands for one chain, and the size of the dot reflects the chain weight in stacking (we rescale the size proportional to $w_\mathrm{stacking}^{1/5}$ to manifest extremely small weights, see the legend on the right). 
	Under an overdispersed initialization, the  posterior inferences considerably diverge, and uniform weighting is jeopardized by ``unlucky'' chains, while stacking is not affected  by a large number of bad chains.  The PSIS-loo approximation does not accurately estimate the test performance (detected by large $\hat k$  diagnostics), but stacking still outperforms all other weighting strategies.  Under the $(-0.001,0.001)$ initialization, all 50 chains are essentially identical, and there is no gain from reweighing.  In this experiment, a carefully tuned underdispersed initialization is the most efficient. However, choosing optimal starting values in general models remains difficult, whereas stacking is less sensitive to the initialization.
	
	Early stopping is a commonly used ad hoc regularization method in neural networks \citep{vehtari2000mcmc}. The leftmost column in Figure \ref{fig_nn_wo_id} demonstrates that we can stack early stopped chains to achieve a prediction-power and computation-cost tradeoff. In the setting of 40 hidden nodes and overdispersed initialization, stacking is strictly better than the best single chain, however early we stop.  Stacking with 1500 HMC iterations is better than the best chain at iteration 4000. BMA and pseudo-BMA effectively choose just a single chain, and they select the wrong chains at times. Uniform weighting is again the worst due to its sensitivity to bad initializations.  
	
	Some literature on neural net ensembles advocates to uniformly average over all ensembles constructed by local MAPs found through  stochastic gradient descent \citep{lakshminarayanan2017simple}, bootstrap resampling  \citep{osband2017deep}, or varying priors  \citep{pearce2018uncertainty}.  Our experimental results show that inference from uniform weighting can be highly sensitive to starting points and can be especially disappointing under an overdispersed initialization. The approximate loo-based stacking sheds light on the benefit of post-inference multi-chain-reweighing in modern deeper neural networks.  The additional optimization cost is tiny compared to the cost of model training.  We leave the           question of scalability to modern Bayesian deep learning models to future investigation.

	\section{Discussion}

\subsection{The Folk Theorem of Statistical Computing} 
\emph{When you have computational problems, often there’s a problem with your model}. This heuristic or ``folk theorem'' \citep{gelman2008folk} can be understood by thinking of a statistical model or family of distributions as a set of possible probabilistic explanations for a dataset.  If the data come from some distribution in the model class, then with identification and reasonable sample size we can expect to distinguish among these explanations, and with a small sample size and continuous model, we would hope to find a continuous range of plausible explanations and thus a well behaved posterior distribution. Indeed,  under correct models and reasonable priors, Bayesian posteriors often attain asymptotic normality and leave little room for distinct and non-vanishing modes. That ensures rapid mixing for random-walk Metropolis, scaling as $\mathcal{O}(d)$ \citep{roberts1997weak,cotter2013mcmc, dwivedi2018log},
and Hamiltonian Monte Carlo, scaling as $\mathcal{O}(d^{1/4})$ \citep{beskos2013optimal, bou2018coupling,  mangoubi2017rapid,  mangoubi2019mixing}.

If the data do not fit the model, so that none of the candidate explanations work, then the posterior distribution represents a mixture of the best of bad choices, and it can have poor geometry in the same way that the seafloor can look rough if the ocean is drained.
Poor data fit, or conflict between the prior and likelihood, do not necessarily lead to awkward computation.  For example, the normal-normal model yields a log-concave posterior density with constant curvature for any data. But if a model is flexible enough to fit different qualitative explanations of data, then poorly fitting data can be interpreted by the model as ambiguity, as indicated by posterior multimodality.

The other way a model can be difficult to fit is if its parameters are only weakly constrained by the posterior.  With a small sample size (or, in a hierarchical model, a small number of groups), uncertainty in the hyperparameters can yield a posterior distribution of widely varying curvature, which leads to slowly mixing MCMC.  In practice, we can often reshape the geometry by putting stronger priors on these hyperparameters. 
However, a strong prior constraint is not always desired---sometimes we are interested in fitting a model that is legitimately difficult to compute, because we want to allow for different possible explanations of the data, and a too strong prior implies an ad hoc selection.  These are settings where the stacking approach discussed in this paper can be useful.

	\subsection{Learn Better Epistemic Uncertainty to Expiate Aleatoric Misspecification}
	Uncertainty comes into inference and prediction through two sources: (a) due to finite amount of data, we learn the \emph{epistemic} uncertainty of  unknown parameter $\theta$ through the posterior distribution $p(\theta|y)$, and (b)  due to either the stochastic nature of real world, even when $\theta$ is known,  we represent the \emph{aleatoric} uncertainty through the probabilistic forecast of next unseen outcome as $p(\tilde y|\theta, y)$. The final probabilistic prediction contains both of them via $p(\tilde y|y)= \int\! p(\tilde y|\theta, y) p(\theta|y) d \theta$.
	
	Given a model, the  epistemic uncertainty is mathematically well-defined through Bayesian inference, but will only  be optimal  under the true model and when averaging over the prior distribution.   By being open-minded  to  model misspecification, the  optimization \eqref{eq_limiting_goal}  searches for the ``best'' probabilistic inference and uncertainty quantification with respect to a given utility function.
	
	Our paper calls attention to  postprocessing and calibrating Bayesian epistemic uncertainty.   Stacking reweighs separated components in the posterior density,  while in general we can consider other  transformations of the posterior draws such as location–scale shift,   mixtures, and convolutions. 
	
	Bayesian inference is known to be poorly calibrated under model misspecification.  In the context of model-selection and averaging, the marginal-likelihood-based ``full-Bayes'' approach produces
	over-confident prediction when none of the models is true \citep{clarke2003comparing, wong2004improvement, clyde2013bayesian, yao2018using, yang2018good,  oelrich2020bayesian}, and therefore is not  Bayes optimal \citep{le2017bayes}.   
	
	The suboptimality of Bayesian posteriors does not mean we think Bayesian inference is wrong,  but it does imply that there are tensions between a reckless application of Bayes rule under the wrong model and the Bayesian decision theory,  and more generally, between Bayesian inference and Bayesian workflow.
	In the words of  \citet{gelman2020holes},  such tensions can only be resolved by considering Bayesian logic as a tool, a way of revealing inevitable misfits and incoherences in our model assumptions, rather than as an end in itself.

	\subsection{Stacking as Part of    Bayesian Workflow}
	We view stacking of parallel chains as sitting on the boundary between black-box inference and a larger Bayesian workflow \citep{ BayesianWorkflow2020}.  
	
	For an automatic inference algorithm, stacking enables accessible inference from non-mixing chains and a free enrichment of predictive distributions, which is especially relevant for repeated tasks where computation time is constrained.
	
	For Bayesian workflow more generally, we recommend stacking in the model exploration phase, where we need to obtain \emph{some} inference.  Parallel computation can be running asynchronously---it may be that only some chains are running slowly---and stopping in the middle frees up computation and human time that can be reallocated to explorations of more models.  In addition, non-uniform stacking weights when used in concert with trace plots and other diagnostic tools can help us understand where to focus that effort in an iterative way.

\acks{We thank the U.S. National Science Foundation, Institute of Education Sciences, Office of Naval Research, Sloan Foundation, and the Academy of Finland Flagship programme: Finnish Center for Artificial Intelligence, FCAI, for partial support of this work.}

	\small
	\bibliography{multimodal}

	\newpage
	\renewcommand\thesection{\Alph{section}}
	\normalfont
	\normalsize
	\section*{Appendices}
	\setcounter{section}{0}
	\section{Proofs for asymptotic theories}
		We sketch the proof for theorems in Section \ref{sec_theory}.
	\subsection{Proofs for Corollary  1}
	
	The proof of Corollary 1 is a direct application of the consistency results in  \citet{vehtari2015pareto} and \citet{le2017bayes}.
	
	Assuming samples from the $k$-th chain ($k=1,2, \dots, K$)  (not necessarily independently) come from a stationary distribution $p_k(\theta)$,we denote $p_{k,  -i}(y_i) =  p_k(y_i| y_{-i}) \coloneqq \int_{\Theta} p(y_i|\theta)p_k(\theta| y_{-i})d\theta $ to be the leave-one-out density. 
	
	First, the importance sampling based approximation is pointwise consistent.
	\begin{theorem} \citep[Theorem 2 and 3 in][]{vehtari2015pareto}  Assuming the stationary distribution $p_k(\theta)$ satisfies  regularity conditions defined therein, the PSIS-based approximate loo is consistent with a large number of posterior draws. For any fixed chain index $k$, and observation index $i$,
		$$  \frac{\sum_{s=1}^S p(y_i|\theta_{ks})r_{iks}}{\sum_{s=1}^S r_{iks}}  - p_{k,  -i}(y_i)  \xlongrightarrow{\text{$L_2$}}    0,  ~ S  \to \infty.$$
	\end{theorem}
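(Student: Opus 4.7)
The plan is to decompose the PSIS self-normalized estimator into a standard self-normalized importance sampling piece plus a stabilization perturbation, and to show each piece converges to the right limit in $L_2$. Write $w_{iks} = 1/p(y_i\mid\theta_{ks})$ for the raw leave-one-out importance ratios, so that by the definition in \eqref{eq_loo} one has the identity
\begin{equation*}
p_{k,-i}(y_i) \;=\; \frac{\int p(y_i\mid\theta)\,w_{iks}(\theta)\,p_k(\theta\mid y)\,d\theta}{\int w_{iks}(\theta)\,p_k(\theta\mid y)\,d\theta},
\end{equation*}
where the constant of proportionality in the numerator and denominator cancel. This is the population target that both the raw and the PSIS-smoothed ratio estimators will be shown to approximate.

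First, I would handle the \emph{raw} self-normalized estimator $\hat p^{\mathrm{raw}}_{k,-i}(y_i) = \sum_s p(y_i\mid\theta_{ks})w_{iks}/\sum_s w_{iks}$. Under the stated regularity conditions (in particular, stationarity and a geometric ergodicity/mixing assumption on the chain producing $\theta_{ks}$, plus integrability of $w_{iks}$ under $p_k$), the numerator and denominator each obey an ergodic strong law, and a continuous mapping argument applied to the ratio gives almost sure convergence to $p_{k,-i}(y_i)$. To upgrade this to $L_2$ I would invoke the usual uniform-integrability strengthening: provided the raw ratio has finite $(1+\delta)$ moments under $p_k$, sample averages are uniformly integrable, which combined with a.s.\ convergence yields $L_2$ convergence of numerator and denominator, and hence of their ratio when the denominator's limit is bounded away from zero.

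Next I would analyze the PSIS perturbation. The PSIS construction sorts $\{w_{iks}\}_{s=1}^S$ and replaces the top $M = M(S)$ (with $M/S \to 0$, e.g.\ $M \asymp S^{3/4}$) by the expected order statistics of a generalized Pareto distribution fitted to those same extreme observations, then right-truncates at the sample maximum. The key lemma to prove is that
\begin{equation*}
\frac{1}{S}\Bigl|\sum_{s=1}^S p(y_i\mid\theta_{ks})(r_{iks}-w_{iks})\Bigr| \xrightarrow{L_2} 0, \qquad \frac{1}{S}\Bigl|\sum_{s=1}^S (r_{iks}-w_{iks})\Bigr| \xrightarrow{L_2} 0.
\end{equation*}
The ingredients are: (i) Pickands-Balkema-de Haan, which says the tail of $w_{iks}$ above a high threshold is approximated by a GPD with some shape $\hat k$; (ii) consistency of the maximum-likelihood (or probability-weighted-moments) estimator of the GPD parameters from the $M$ exceedances as $M \to \infty$; (iii) the fact that replacing the top $M$ values by expected order statistics of the fitted GPD shifts the partial sum by $o_P(S)$ when the shape satisfies $k<1$ (finite mean) and by $o_P(S^{1/2})$ when $k<1/2$ (finite variance), which is enough for the normalized difference to vanish.

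The main obstacle will be the $L_2$ control in the heavy-tailed regime $1/2 \le k < 1$, where the raw ratios have infinite variance and one cannot simply invoke a uniform-integrability argument on $\hat p^{\mathrm{raw}}$. The PSIS truncation is precisely designed to restore bounded second moments, so the cleanest route is to bypass the raw estimator for $L_2$ and work directly with the truncated sums: show that the right-truncation at the sample maximum plus GPD smoothing produces modified weights whose second moment is finite uniformly in $S$, then combine with the almost-sure consistency established above via a Vitali-type argument. Once the perturbation lemma is in hand and the modified numerator and denominator are shown to converge in $L_2$ to the correct integrals, continuous mapping on the ratio (with the denominator bounded away from zero in probability) delivers the stated conclusion.
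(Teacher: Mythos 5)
The first thing to say is that the paper does not prove this statement at all: it is imported as ``Theorems 2 and 3 of \citet{vehtari2015pareto},'' and Appendix A.1 simply cites it as one of the two external ingredients (together with Theorem 2.2 of \citet{le2017bayes}) that are chained to give Corollary 1. There is therefore no in-paper argument to compare yours against; what you have written is a reconstruction of the proof that lives in the cited reference.

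As a reconstruction, your outline follows the natural strategy and is broadly consistent with how the result is actually established: view the PSIS estimator as the raw self-normalized importance-sampling estimator plus a smoothing perturbation, get consistency of the raw piece from an ergodic law of large numbers (which needs only a finite first moment of the ratios, i.e.\ Pareto shape $k<1$), and show that replacing the top $M$ order statistics by fitted-GPD expected order statistics followed by right truncation perturbs both sums by $o_P(S)$. Two remarks. First, you make the $L_2$ step harder than it needs to be: the estimator $\sum_s p(y_i\mid\theta_{ks})r_{iks}\big/\sum_s r_{iks}$ is a convex combination of the values $p(y_i\mid\theta_{ks})$ with nonnegative weights, hence uniformly bounded by $\sup_\theta p(y_i\mid\theta)$ whenever the likelihood is bounded in $\theta$ (part of the regularity conditions). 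Almost-sure or in-probability convergence of the ratio then upgrades to $L_2$ by dominated convergence, with no uniform-integrability or truncated-second-moment argument on the heavy-tailed weights required; your worry about the regime $1/2\le k<1$ concerns the variance of the normalizing sums, not of the final bounded ratio. Second, your perturbation lemma is asserted rather than proved: the claim that swapping the top $M\asymp S^{3/4}$ raw ratios for GPD expected order statistics moves the sums by $o_P(S)$ requires the order-statistics estimate (for shape $k<1$ the top $M$ of $S$ ratios contribute on the order of $S^{k}M^{1-k}=o_P(S)$) together with consistency of the fitted GPD parameters from the exceedances; as written, this is the one genuinely nontrivial step, and it is precisely the content of the theorems being cited rather than something your sketch supplies.
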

	
	In practice, the convergence rate of approximate PSIS-loo with finite posterior draws can be characterized by the $\hat k$ diagnostics \citep{vehtari2015pareto}.

	Second, \citet{le2017bayes} proved that given set of weights $w_1 \dots w_K$ and   when sample size $n \to \infty$, the  leave-one-out  logarithmic predictive density (loo lpd),  converges to the expected log predictive densities (elpd): 
	
	\begin{theorem} 	\citep[Theorem 2.2  in][]{le2017bayes}  Assuming regularity conditions:
		\begin{enumerate}
			\item  For each $k = 1, \dots,K$, there is a function $B_k(\cdot)$ so that $$\sup_{y\in \R^n}  | \log p_k (\tilde y | y)    | \leq  B_k(\tilde y) < \infty,  $$
			where $B_k$ is  independent of other covariates and $\E( g(\tilde y) ) < \infty$  for 
			$$g(\tilde y )   = \max\left\lbrace      \left(        \log  \sum_{k=1}^K  w_k  \exp\left(-B_k (\tilde y)\right)          \right)^4 ,    \left(        \log  \sum_{k=1}^K  w_k  \exp\left(B_k (\tilde y)\right)          \right)^4   \right\rbrace.  $$
			\item  For each $k = 1, \dots,K$, the conditional densities $p_k (y | x,  \theta )$ are equicontinuous in $x$ for each $y$ and $\theta \in \Theta_k$, and the predictive densities $p_k(\cdot| y )$ within the are uniformly equicontinuous in y.
		\end{enumerate}
		
		Then we have 	
		$$
		\frac{1}{n}\sum_{i=1}^n  \log \sum_{k=1}^K  w_k  p_{k,  -i}   (y_i ) -  \mathrm{E}_{\tilde y| y}  \log \sum_{k=1}^K w_k p_k(\tilde y| y )          \xlongrightarrow{\text{$L_2$}}   0,  ~ n\to \infty.
		$$
	\end{theorem}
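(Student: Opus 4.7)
The plan is to prove this $L^2$ convergence by splitting the difference into a ``noise'' piece (an average of mean-zero-type terms with shrinking variance) and a ``bias'' piece (the gap between the leave-one-out and full-data predictive evaluations), and to control each by a different one of the two regularity conditions. Concretely, I would add and subtract the conditional expectations $\mu_i := \E_{\tilde y}\!\left[\log \sum_{k} w_k\, p_{k,-i}(\tilde y) \mid y_{-i}\right]$ and the full-data counterpart $\mu^\star := \E_{\tilde y \mid y}\!\left[\log \sum_{k} w_k\, p_k(\tilde y\mid y)\right]$, writing
\begin{equation*}
\frac{1}{n}\sum_{i=1}^n \log \sum_k w_k p_{k,-i}(y_i) - \mu^\star
 \;=\; \underbrace{\frac{1}{n}\sum_{i=1}^n \bigl(\log \sum_k w_k p_{k,-i}(y_i) - \mu_i\bigr)}_{A_n} \;+\; \underbrace{\frac{1}{n}\sum_{i=1}^n (\mu_i - \mu^\star)}_{B_n}.
\end{equation*}

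For $A_n$, I would exploit that under iid sampling from the true DGP, $y_i$ is independent of $y_{-i}$, so each summand has conditional mean zero given $y_{-i}$; in particular $\E[A_n]=0$. Expanding $\E[A_n^2]$ produces diagonal terms of order $1/n$ and off-diagonal covariances between indices $i\neq j$. Condition~(1) (the $L^4$-integrable dominating function $g(\tilde y)$ built from the $B_k$'s) gives a uniform bound on the variance and fourth moment of each summand, so the diagonal contribution is $O(1/n)$. The off-diagonal covariances vanish because when we integrate first over $y_i$ (given everything else) and then over $y_j$, the remaining dependence between $p_{k,-i}$ and $p_{k,-j}$ acts through the shared $n-2$ observations only; an equicontinuity argument (condition~(2)) bounds this coupling by a quantity that tends to $0$, and $L^4$ domination lets me pass it inside the expectation. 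This gives $\E[A_n^2]\to 0$.

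For $B_n$, I would use condition~(2) directly: since the leave-one-out posterior differs from the full-data posterior only through the inclusion/exclusion of a single observation out of $n$, uniform equicontinuity of $p_k(\cdot\mid y)$ in $y$ forces $p_{k,-i}(\tilde y)\to p_k(\tilde y\mid y)$ uniformly in a suitable sense as $n\to\infty$. Taking logs of the convex combination is $1$-Lipschitz once we have a lower bound, and condition~(1) provides the dominating $g(\tilde y)$ needed to pull the limit through $\E_{\tilde y}$ via dominated convergence. Thus $\mu_i-\mu^\star\to 0$ uniformly in $i$, and the Cesàro average $B_n\to 0$ in $L^2$ (again using the $L^4$ bound to legitimize the interchange of limit and expectation).

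The main obstacle I anticipate is the off-diagonal covariance term in the $L^2$ bound on $A_n$: the $Z_i := \log\sum_k w_k p_{k,-i}(y_i)-\mu_i$ are not independent across $i$ because their defining conditioning sets overlap heavily. A clean handling requires showing that the residual dependence between $Z_i$ and $Z_j$ is of order $o(1)$ uniformly, which is exactly the purpose of the uniform equicontinuity condition~(2); getting a quantitative enough version of this to close the $L^2$ estimate, while only relying on the abstract hypotheses (1)--(2) and not on any parametric structure of the $p_k$, is the delicate step. A fallback would be to prove $A_n\to 0$ in probability first (via Chebyshev on the diagonal plus Cauchy--Schwarz on the off-diagonal part) and then upgrade to $L^2$ using uniform integrability supplied by the $L^4$ domination in condition~(1).
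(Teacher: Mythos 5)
First, a structural note: the paper does not prove this statement at all. It is quoted as Theorem 2.2 of \citet{le2017bayes} and used, together with the PSIS consistency result of \citet{vehtari2015pareto}, as an imported ingredient in the short proof of Corollary 1 in Appendix A. So there is no internal proof to compare you against; what follows assesses your sketch on its own terms.

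Your decomposition into a conditionally centered average $A_n$ and a leave-one-out-versus-full-data bias $B_n$ is the standard skeleton for LOO consistency, and the parts you handle cleanly are indeed clean: conditional centering of $Z_i$ given $y_{-i}$, the $O(1/n)$ diagonal contribution, the Lipschitz bound for the log of the convex combination via the envelope $e^{-B_k(\tilde y)}\le p_k(\tilde y\mid y)\le e^{B_k(\tilde y)}$ supplied by condition (1), and uniform integrability from the fourth-moment dominating function $g$. The genuine gap is the step you yourself flag: the off-diagonal terms $\E[Z_iZ_j]$ do not vanish by conditioning alone, since $Z_j$ depends on $y_i$ through $p_{k,-j}$, and your claim that ``equicontinuity bounds this coupling by a quantity that tends to $0$'' is an assertion rather than an argument. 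The standard repair is to introduce a surrogate $Z_j'$ built from the predictive trained on $y_{-\{i,j\}}$, which is $y_{-i}$-measurable so that $\E[Z_iZ_j']=0$, and then to bound $\|Z_j-Z_j'\|_2$ by a quantitative stability estimate saying that deleting one further observation perturbs $\log\sum_k w_k p_{k,-j}(\cdot)$ by $o(1)$ uniformly; you never specify in what metric on conditioning data sets condition (2)'s equicontinuity is applied, and without tying it to the fact that removing one point is an $O(1/n)$ perturbation of the empirical information, the estimate does not close. The same issue recurs in $B_n$: ``uniform equicontinuity forces $p_{k,-i}(\tilde y)\to p_k(\tilde y\mid y)$ uniformly'' needs exactly that missing quantitative step, uniformly over $i$ and over the sample. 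Supplying it is essentially reproducing the proof in \citet{le2017bayes}, so as written your proposal is a plausible plan with the decisive lemma still open rather than a complete proof.
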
 
	
	Now return to the objective function in stacking (Equation \ref{eq_stacking_objective}):
	$$
	\max_{\w\in \mathbbm{S}(K)} \sum_{i=1}^n \log \sum_{k=1}^K w_k p^S_{k, -i}(y_i) + \log p_{\mathrm{prior}}(\w),$$
	where the leave-one-out distribution is approximated by importance sampling using $S$ posterior draws each chain,
	$$p^S_{k, -i}(y_i)=   \frac{ \sum_{s=1}^{S} p_k(y_i|\theta_{ks}) r_{iks}}{   \sum_{s=1}^{S}  {r_{iks} }}.$$
	
	Combining the previous two consistency results, for a fixed number of chains $K$ and a fixed weight vector $\w$,  when both the sample size of observations $n$ and the number of posterior draws $S$ go to infinity,  under all previous mentioned assumptions, the objective function converges to the elpd of the weighted posterior inference:
	$$ \frac{1}{n} \sum_{i=1}^n \log \sum_{k=1}^K w_k p^S_{k, -i}(y_i)  -  \mathrm{E}_{\tilde y| y}  \log\Bigl( \sum_{k=1}^K w_k p_k(\tilde y| y )  \Bigr)    \xlongrightarrow{\text{$L_2$}}   0,$$
	which proves Theorem \ref{them_op}.
	
	\subsection{Proofs of Theorems 2 and 3}
    First, the unnormalized posterior density of $\mu$  is 
	$$\log p(\mu|y) =  \log p_0(\mu) - \sum_{i=1}^n  \log(1+  (y_i-\mu)^2).    $$
	Define $$h(\mu)=  -   \int_{-\infty}^{\infty} \log(1+  (y-\mu)^2)  \left(\frac{1-p_0}{(a+y)^2+1}+\frac{p_0}{(y-a)^2+1}\right)    d y,$$   which is always a well-defined and finite integral for all $\mu$.   
	\begin{lemma}\label{lem_h_d}
		$\frac{d}{d\mu} h(\mu)$ has a closed form expression
		\begin{align*}
		\frac{d}{d\mu} h(\mu)	&=-\int_{-\infty}^{\infty}      \frac{d}{d\mu}   \log(1+  (y-\mu)^2)  \left(\frac{1-p_0}{(a+y)^2+1}+\frac{p_0}{(y-a)^2+1}\right)    d y   \\
		&=-\frac{\pi  p_0 (\mu-a)}{(a-\mu)^2+4}-\frac{\pi  (1-p_0) (a+\mu)}{(a+\mu)^2+4}\\
		&=\frac{-\pi  \left( 4 a + a^3 - 8 a p - 2 a^3 p + (4 - a^2) u + (-a + 2 a p) u^2 + u^3  \right)}{\left(a^2-2 a \mu+\mu^2+4\right) \left(a^2+2 a \mu+\mu^2+4\right)}.
		\end{align*}
	\end{lemma}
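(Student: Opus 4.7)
The plan is to first apply dominated convergence to bring the $\mu$-derivative inside the integral, and then to evaluate the two rational integrals that result in closed form by partial fractions (or, equivalently, by residues). For the interchange step, fix any compact neighborhood of $\mu$; the integrand $\log(1+(y-\mu)^2)\rho(y)$ has $\mu$-partial derivative $-\frac{2(y-\mu)}{1+(y-\mu)^2}\rho(y)$, and the factor $\bigl|2(y-\mu)/(1+(y-\mu)^2)\bigr|$ is uniformly bounded by $1$, so the derivative is dominated by $\rho(y)$ itself, which is integrable on $\R$ (each Cauchy-shaped summand integrates to $\pi$). This licenses the Leibniz rule and yields the first displayed identity of the lemma.

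The core computation is then the pair of rational integrals obtained after splitting $\rho$: for $b\in\{a,-a\}$ one must compute
$$J(\mu,b) \;=\; \int_{-\infty}^{\infty} \frac{y-\mu}{\bigl(1+(y-\mu)^2\bigr)\bigl(1+(y-b)^2\bigr)}\,dy.$$
I would change variables $u=y-\mu$, $\alpha=b-\mu$ to put the integrand in the form $u/((1+u^2)(1+(u-\alpha)^2))$ and then write the decomposition
$$\frac{u}{(1+u^2)(1+(u-\alpha)^2)} \;=\; \frac{Au+B}{1+u^2} + \frac{C(u-\alpha)+D}{1+(u-\alpha)^2},$$
equate numerators of degree at most three, and solve the resulting $4\times 4$ linear system for $A,B,C,D$; the solution is clean, with $A = -C = 1/(\alpha^2+4)$ and the constants depending on $\alpha$ only through $\alpha^2+4$. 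Integrating term-by-term over $\R$, the $Au$ and $C(u-\alpha)$ pieces contribute zero (the original integral is absolutely convergent, so the cancellation is genuine rather than merely a principal value) and the remaining two pieces contribute $\pi(B+D)$. The same result is obtainable by a short residue calculation, closing the contour in the upper half-plane and picking up the two simple poles at $u=i$ and $u=\alpha+i$.

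Substituting the closed form for $J(\mu,b)$ back with $b=a$ (weight $p_0$) and $b=-a$ (weight $1-p_0$) gives the two-term rational expression in the middle line of the lemma; then combining the two fractions over the common denominator $\bigl((a-\mu)^2+4\bigr)\bigl((a+\mu)^2+4\bigr)$ and collecting powers of $\mu$ produces the polynomial numerator with the asserted structure, i.e.\ the coefficients $(1-2p_0)a^3$, $(1-2p_0)\cdot 4a$, $(4-a^2)\mu$, $(2p_0-1)a\mu^2$, and $\mu^3$. I expect the only real obstacle to be bookkeeping: both the $4\times 4$ partial-fraction system and the final polynomial regrouping are mechanical but sign-sensitive, so I would sanity-check a symmetric special case (for instance $p_0=1/2$, $\mu=0$, where the derivative must vanish by the $y\mapsto -y$ symmetry of $\rho$) before writing up the full calculation. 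No genuine analytic difficulty arises beyond this.
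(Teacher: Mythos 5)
Your proposal is correct and is essentially the paper's own argument: the appendix proves this lemma with the single line ``Calculus and change of variables,'' and your dominated-convergence justification plus the partial-fraction (or residue) evaluation $J(\mu,b)=\int_{-\infty}^{\infty}\frac{y-\mu}{(1+(y-\mu)^2)(1+(y-b)^2)}\,dy=\frac{\pi(b-\mu)}{(b-\mu)^2+4}$ is exactly the omitted computation, carried out more carefully. One caveat: keeping the chain-rule factor $2$ that you correctly wrote down ($\frac{\partial}{\partial\mu}\log(1+(y-\mu)^2)=-\frac{2(y-\mu)}{1+(y-\mu)^2}$), your route yields $h'(\mu)=\frac{2\pi p_0(a-\mu)}{(a-\mu)^2+4}-\frac{2\pi(1-p_0)(a+\mu)}{(a+\mu)^2+4}$, i.e.\ \emph{twice} the lemma's displayed middle line rather than the middle line itself; this mismatch lies in the paper's statement (its own first displayed line evaluates to twice its second, and the $p_0=1/2$ formula used in the later lemma does carry the factor $2$), and it is harmless downstream because only the zeros and sign of $h'$ enter the mode analysis. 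Note also that your proposed sanity check ($p_0=1/2$, $\mu=0$, where symmetry forces the derivative to vanish) cannot detect a multiplicative constant, so it is worth additionally checking against the $p_0=1/2$ derivative displayed later in the appendix or against the closed-form Cauchy KL divergence at $p_0=1$.
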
		
	\begin{proof}	
		Calculus and change of variables.
	\end{proof}	
	We define  $\xi(a)$  as the third largest root of the following forth-order equation (as a function  of $x$):
	$$ u_a(x) = x^4 \left(a^6+4a^4\right)+x^3 \left(-2 a^6-8 a^4\right)+x^2 \left(a^6-8 a^4-44 a^2\right)+x \left(12 a^4+44a^2\right)-4 a^4-8 a^2-4 =0.$$
	$\xi(a)$  is a bijective and increasing mapping from $[2, \infty)$ to $[0.5, 1)$. $\xi(2) =0.5$ and $\lim_{a\to \infty} \xi(a) =1$. 
	We visualize the deterministic function $p_0=\xi(a)$ in Figure \ref{fig_xi_function}.
	\begin{lemma} \label{lemma_h}	
		The number of modes in $h(\mu)$ is determined by the relation between $a$ and $p_0$.  
		
		(a) When $a>2$ and $ p_0 \geq \xi(a)$, $h(\mu)$ only has one global maximum  near $a$. 
		
		(b) When $a>2$ and $p_0 < \xi(a)$, $h(\mu)$ has two local maximum  near $a$ and $-a$ respectively. 
		
		(c) When $a<2$, $h(\mu)$ is unimodal with the global maximum between 0 and a.	
	\end{lemma}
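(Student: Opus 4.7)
The plan is to reduce the mode count of $h$ to a root count for the cubic appearing in Lemma A.2. After clearing the strictly positive denominator, the critical points of $h$ are exactly the zeros of
\begin{equation*}
P(\mu) \;=\; \mu^3 + a(2p_0-1)\mu^2 + (4-a^2)\mu + a(1-2p_0)(4+a^2).
\end{equation*}
Because $h(\mu)\to -\infty$ as $|\mu|\to\infty$ and $h'$ has sign opposite to $P$ (note the $-\pi$ in Lemma A.2), the critical points of $h$ must alternate maximum--minimum--maximum when $P$ has three real roots and collapse to a single maximum when $P$ has only one. Thus the modality of $h$ is determined entirely by the sign of $\Delta(a,p_0)$, the discriminant of $P$.

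For parts (a) and (b), I would compute the cubic discriminant $\Delta(a,p_0) = 18bcd - 4b^3 d + b^2 c^2 - 4c^3 - 27 d^2$ with $b=a(2p_0-1)$, $c=4-a^2$, $d=-a(2p_0-1)(4+a^2)$, and verify that, after stripping a positive factor, $\Delta(a,p_0)$ is a constant multiple of the quartic $u_a(p_0)$ in the statement. This identifies the transition curve $\Delta = 0$ with $u_a(p_0)=0$, and the critical threshold $\xi(a)$ is pinned down as the unique root of $u_a$ in $[1/2,1)$; boundary verification at $(a,p_0)=(2,1/2)$, where $P$ reduces to $\mu^3$ (triple root at $0$), gives $\xi(2)=1/2$, and a limit computation gives $\xi(a)\to 1$ as $a\to\infty$. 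To locate the maxima for large $a$, I would rescale $\mu = a\lambda$ and divide $P$ by $a^3$, which in the limit $a\to\infty$ becomes
\begin{equation*}
\lambda^3 + (2p_0-1)\lambda^2 - \lambda + (1-2p_0) \;=\; (\lambda^2 - 1)(\lambda + 2p_0 - 1),
\end{equation*}
whose roots $\lambda = \pm 1$ correspond to maxima near $\mu = \pm a$ and $\lambda = 1-2p_0$ to the intervening minimum. For $p_0 < \xi(a)$, all three survive (two maxima near $\pm a$); for $p_0 \ge \xi(a)$ only the rightmost root persists, giving the unique maximum near $+a$. Useful sign checks along the way are $P(-a) = -8ap_0\le 0$ and $P(a) = 8a(1-p_0)\ge 0$, which always place at least one root in $[-a,a]$.

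For part (c), when $0<a<2$, I would show $\Delta(a,p_0)<0$ throughout $p_0\in[1/2,1]$, so $P$ has only one real root. Combined with $P(0)=a(1-2p_0)(4+a^2)\le 0$ for $p_0\ge 1/2$ and $P(a)=8a(1-p_0)\ge 0$, the intermediate value theorem puts that unique root in $[0,a]$; at $p_0=1/2$ the evenness of the integrand forces the root to be exactly $\mu=0$.

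The main obstacle is the algebraic identity $\Delta(a,p_0) \propto u_a(p_0)$: expanding the discriminant formula with the expressions above and grouping in $p_0$ produces a quartic in $p_0$ whose coefficients must be matched term by term against $u_a$ --- a routine but heavy symbolic verification that I would delegate to a computer algebra check. A secondary care at the boundary $(a,p_0)=(2,1/2)$, where $P$ degenerates to $\mu^3$ and $u_a$ vanishes simultaneously, confirms that the bimodal and unimodal pictures glue consistently across the transition curve $\xi$.
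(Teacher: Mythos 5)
Your proposal is correct and follows essentially the same route as the paper: it reduces the modality of $h$ to counting real roots of the cubic numerator of $h'$ from Lemma \ref{lem_h_d}, identifies the transition via the cubic discriminant, which is a quartic in $p_0$ proportional to $u_a(p_0)$ (the paper's $\Delta(a,p_0)=64\,u_a(p_0)$), and thereby recovers $\xi(a)$ and the case split; the sign of $\Delta$ then settles (a)--(c) exactly as in the paper. Your only deviations are tactical and harmless: you locate the maxima near $\pm a$ by the rescaling $\mu=a\lambda$ and factoring the limiting cubic, where the paper evaluates $h'$ (and $h''$ in Lemma \ref{lemma_p5}) at $\mu=\pm a$, and your explicit sign checks $P(\pm a)$, $P(0)$ for part (c) are a slightly cleaner way to place the unique maximum in $[0,a]$ than the paper provides.
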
		
	
	\begin{proof}	
		
		The denominator in $\frac{d}{d\mu} h(\mu)$ is always positive.     Let $g(\mu)=-4 a - a^3 + 8 a p + 2 a^3 p + (-4 + a^2) u + (a - 2 a p) u^2 - u^3$.  It is a cubic polynomial on $\mu$ and has the discriminant:
		\begin{align*}
		\Delta(a, p_0)=& \left(64 a^6+256 a^4\right) p_0^4+\left(-128 a^6-512 a^4\right) p_0^3+ \left(64 a^6-512 a^4-2816 a^2\right) p_0^2\\ &+\left(768 a^4+2816 a^2\right) p_0-256 a^4-512 a^2-256.
		\end{align*}
		Solving 	$ \Delta(a, p_0)=0$ has and only has one root on $a>2$ and $0.5<p_0<1$:
		$p_0=\xi(a)$, where the function $\xi(a)$ is defined in the lemma.
		
		Further, when $p_0 \geq \xi(a)$,  $\Delta(a, p_0)\leq   0$, and therefore $g(\mu)$ only has one cross-zero-root.
		Since  $h'(\mu)=g(\mu)$ and  $h(\pm \infty)=-\infty$, this unique root is the global maximum  of $h(\mu)$. We denote this unique mode by $\gamma(a, p_0)$.
		
		For a large $a$, using the second expression in Lemma \ref{lem_h_d}, 	$\frac{d}{d\mu} h(\mu)|_{\mu=a}  =  - \frac{\pi  a(1-p_0) }{2a^2+2}\to 0^-$. Therefore the mode $\gamma(a, p_0) \to a^-,$ as $a\to \infty$.
		
		In situation (b), when $p_0 < \xi(a)$,  $\Delta(a, p_0) >   0$. $g(\mu)$ only has three cross-zero roots. This implies $h(\mu)$ has two local maxima  $\gamma^+$ and $\gamma^-$, near but not identical to $\pm a$, and a local minimal (near 0). 
		
		Using the second line in Lemma \ref{lem_h_d}, for any $\mu<0$, $h(-\mu)> h(\mu)$, therefore $h(\gamma^+)>   h(-\gamma^-)>  h(\gamma^-)$; that is, the right mode is higher than the left mode for $p_0> 0.5$.
		
		In situation (c), when $0<a<2$, $\Delta(a, p_0) <  0$ and   therefore $g(\mu)$ only has a cross-zero-root, which is the first root in the following cubic function: 
		$$u(x)= x^3 +x^2 (a p_0- a) + x \left(4 - a^2\right)- 2a^3 p_0+a^3-8ap_0+4a =0. $$
		In particular if $p_0=0.5$, this root is at $\mu=0$.
	\end{proof}

	\begin{lemma} \label{lemma_p5}
		For a fixed $p_0$ and $a\to \infty$, the two local modes  $\left( \gamma^+ (a, p_0), \gamma^-(a, p_0)\right)  \to (a, -a)$.
	\end{lemma}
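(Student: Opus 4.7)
The plan is to build directly on Lemma \ref{lem_h_d} and Lemma \ref{lemma_h}. By Lemma \ref{lemma_h}, for $a>2$ and $p_0<\xi(a)$ the two local maxima $\gamma^+,\gamma^-$ are distinct real roots of the cubic $g(\mu) = 0$, one in a neighborhood of $a$ and one in a neighborhood of $-a$. Since $\xi(a)\to 1$ as $a\to\infty$, for any fixed $p_0\in[0.5,1)$ the bimodality condition holds for all sufficiently large $a$, so the objects $\gamma^\pm(a,p_0)$ are well defined in the regime of interest. Using the factored form of the derivative in Lemma \ref{lem_h_d}, the first-order condition for a critical point of $h$ is
\begin{equation*}
p_0(\mu-a)\bigl[(a+\mu)^2+4\bigr] + (1-p_0)(a+\mu)\bigl[(a-\mu)^2+4\bigr] = 0.
\end{equation*}

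For the right mode, I would write $\gamma^+ = a-\epsilon(a)$ with $\epsilon(a)>0$ (positivity follows from the sign calculation at $\mu=a$ in the proof of Lemma \ref{lemma_h}: $h'(a)=-\pi a(1-p_0)/(2a^2+2)<0$, so the root lies to the left of $a$). Substituting gives the identity
\begin{equation*}
p_0\,\epsilon\bigl[(2a-\epsilon)^2+4\bigr] = (1-p_0)(2a-\epsilon)\bigl[\epsilon^2+4\bigr].
\end{equation*}
The strategy is a short bootstrapping argument. First I would show $\epsilon(a)$ is bounded: if $\epsilon$ were $\geq a/2$, the left-hand side would be at least of order $p_0 a \cdot a^2$ while the right-hand side is at most of order $(1-p_0)a\cdot \epsilon^2$, which fails for large $a$ unless $\epsilon \gtrsim a$, contradicting the fact that $\gamma^+$ is a local maximum of $h$ (whose derivative changes sign once near $a$ by Lemma \ref{lemma_h}). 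Given $\epsilon$ bounded, dividing both sides by $2a-\epsilon$ and extracting the leading order in $1/a$ yields
\begin{equation*}
\epsilon(a) = \frac{(1-p_0)\bigl[\epsilon(a)^2+4\bigr]}{p_0}\cdot\frac{1}{2a-\epsilon(a)}\cdot\frac{(2a-\epsilon(a))^2}{(2a-\epsilon(a))^2+4},
\end{equation*}
so $\epsilon(a) = O(1/a)$, and in particular $\epsilon(a)\to 0$, establishing $\gamma^+(a,p_0)\to a$.

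For the left mode I would parameterize $\gamma^- = -a+\delta(a)$ with $\delta(a)>0$ (again using the sign of $h'(-a) = \pi a p_0/(2a^2+2)>0$ to locate the root to the right of $-a$) and repeat the same substitution, which yields the analogous identity
\begin{equation*}
(1-p_0)\,\delta\bigl[(2a-\delta)^2+4\bigr] = p_0(2a-\delta)\bigl[\delta^2+4\bigr],
\end{equation*}
obtained from the previous display by the swap $p_0\leftrightarrow 1-p_0$ and $\epsilon\leftrightarrow\delta$. The same bootstrapping gives $\delta(a)=O(1/a)\to 0$, hence $\gamma^-(a,p_0)\to -a$.

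I do not expect any deep obstacle. The one item requiring care is the bootstrap step: I need to rule out an alternative scenario in which $\epsilon$ (or $\delta$) grows with $a$ and still satisfies the cubic equation. The cleanest way is to appeal to Lemma \ref{lemma_h}, which already identifies the two local maxima as the outermost roots of the cubic $g$, and then observe that the leading-order asymptotics of $g(\mu)$ for $|\mu|\ll a$ force any root near the positive side to lie within $O(1/a)$ of $a$. Everything else is a routine leading-order calculation in the small parameter $1/a$.
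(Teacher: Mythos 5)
Your proof is correct in substance, but it takes a more computational route than the paper. The paper's own proof is a two-line curvature argument at the point $\mu=a$: from the second expression in Lemma \ref{lem_h_d}, $h'(a)=-\pi a(1-p_0)/(2a^2+2)\to 0^-$ while $h''(a)\to -\pi p_0/4$, which is negative and bounded away from zero, so the critical point adjacent to $a$ must converge to $a^-$ (and symmetrically for $-a^+$). You instead manipulate the exact critical-point equation, write $\gamma^+=a-\epsilon$, $\gamma^-=-a+\delta$, and extract the leading order in $1/a$; both of your algebraic identities and the sign evaluations $h'(a)<0$, $h'(-a)>0$ check out, and your approach buys something the paper's statement does not: the quantitative rate $\epsilon,\delta=O(1/a)$, consistent with the exactly solvable case $p_0=1/2$ where $a-\sqrt{a^2-4}\sim 2/a$.

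One step needs tightening. Your crude comparison for $\epsilon\ge a/2$ does not close the argument for all $p_0$: when $\epsilon\asymp a$ both sides of the identity are of order $a^3$, so no contradiction arises unless $p_0$ is close to $1$; and ruling out $\epsilon\asymp a$ by citing the ``near $\pm a$'' phrasing of Lemma \ref{lemma_h} is borderline circular, since that lemma's proof never quantifies ``near'' in the bimodal case---Lemma \ref{lemma_p5} is precisely the statement that does. Your fallback idea is the right fix, but it should be run at the scale $\mu=ta$ with $t\in(-1,1)$ fixed rather than $|\mu|\ll a$: the critical-point polynomial equals $a^3(1-t^2)\bigl((1-2p_0)-t\bigr)+O(a)$, so for large $a$ all roots lie within $o(a)$ of $\{a,\,-a,\,(1-2p_0)a\}$ (note the interior root sits near $(1-2p_0)a$, not near $0$, when $p_0>1/2$), and the ordering of the three critical points identifies $\gamma^+$ with the root near $+a$; your bootstrap then yields $\epsilon=O(1/a)$. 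An even shorter patch, close in spirit to the paper's, is to note that for any fixed $\eta>0$, $h'(a-\eta)\to\pi p_0\eta/(\eta^2+4)>0$ while $h'(a)<0$, so the largest critical point lies in $(a-\eta,a)$ for all large $a$, and $\eta$ is arbitrary.
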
		
	\begin{proof}		
		Using the second expression in Lemma \ref{lem_h_d}, 	$$\frac{d}{d\mu} h(\mu)|_{\mu=a}  =  - \frac{\pi  a(1-p_0) }{2a^2+2}\to 0^-, ~\mathrm{as}~ a \to \infty,$$ while $$\frac{d^2}{d\mu^2} h(\mu)|_{\mu=a} = \frac{\pi  a (8 a p_0-8 a)}{\left(4 a^2+4\right)^2}-\frac{\pi  \left(-4 a^2 p_0-4\right)}{4 \left(4 a^2+4\right)} \to- \frac{\pi p_0}{4}= O(1).$$ Hence when $a\to \infty$. the mode $\gamma^+(a, p_0) \to a^-$, and likewise $\gamma^-(a, p_0) \to -a^+$.
	\end{proof}
	
	The approximation using Lemma \ref{lemma_p5} is accurate for a moderately large $a$.
	For example, when $p_0 =0.6$, and $a=8$, the  right and left modes in $h$ are $(\gamma^+ (a, p_0), \gamma^-(a, p_0))=(7.8, -7.3)$, and at $a$=10 they are $(9.9, -9.7)$.

	\begin{lemma}   When $a>2,  p_0 = 0.5 $,   $h(\mu)$  has two equally high modes at $\pm\, \sqrt{a^2-4}$.
	\end{lemma}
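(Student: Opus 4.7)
The strategy is to specialize the closed form of $\frac{d}{d\mu}h(\mu)$ from Lemma \ref{lem_h_d} to $p_0=1/2$, solve $h'(\mu)=0$ in closed form, and then use a symmetry argument to conclude that the two nonzero roots are equally high local maxima.

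\textbf{Step 1: critical points.} Setting $p_0=1/2$ in the first line of Lemma \ref{lem_h_d},
\begin{equation*}
\frac{d}{d\mu}h(\mu) = -\frac{\pi}{2}\left(\frac{\mu-a}{(\mu-a)^2+4} + \frac{\mu+a}{(\mu+a)^2+4}\right).
\end{equation*}
Writing $u=\mu-a$, $v=\mu+a$ and clearing denominators, the zero set is governed by
\begin{equation*}
u(v^2+4)+v(u^2+4) = (u+v)(uv+4) = 0.
\end{equation*}
Hence the critical points are exactly $\mu=0$ (from $u+v=0$) and the solutions of $\mu^2-a^2+4=0$ (from $uv+4=0$), namely $\mu=\pm\sqrt{a^2-4}$. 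Since $a>2$, the latter two roots are real and distinct from $0$, giving three distinct critical points.

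\textbf{Step 2: classify them.} I would invoke Lemma \ref{lemma_h}(b), which guarantees that under the hypotheses $a>2$ and $p_0<\xi(a)$ the function $h$ has exactly two local maxima and one local minimum. Since $\xi(2)=1/2$ and $\xi$ is strictly increasing on $[2,\infty)$, for any $a>2$ we have $\xi(a)>1/2=p_0$, so this case applies. Combined with $h(\pm\infty)=-\infty$, the two local maxima must be the outer critical points $\pm\sqrt{a^2-4}$ and the local minimum must be at $\mu=0$.

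\textbf{Step 3: equal heights.} The equality $h(\sqrt{a^2-4})=h(-\sqrt{a^2-4})$ follows from the symmetry of $h$ at $p_0=1/2$. Indeed, at $p_0=1/2$ the weight kernel
\begin{equation*}
\frac{1}{2}\frac{1}{(a+y)^2+1}+\frac{1}{2}\frac{1}{(y-a)^2+1}
\end{equation*}
is invariant under $y\mapsto -y$, so substituting $y\to -y$ in the integral defining $h$ gives $h(-\mu)=h(\mu)$.

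\textbf{Expected difficulty.} The algebra is light and the symmetry is immediate, so there is no real obstacle; the only subtlety worth checking carefully is the inequality $\xi(a)>1/2$ for $a>2$ (needed to apply Lemma \ref{lemma_h}(b) rather than case (a)), which follows from $\xi(2)=1/2$ together with the monotonicity of $\xi$ established in Lemma \ref{lemma_h}. If one preferred to avoid that appeal, a direct second-derivative check at $\mu=\pm\sqrt{a^2-4}$ would also work and is purely computational.
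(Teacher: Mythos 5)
Your proposal is correct and follows essentially the same route as the paper: specialize the closed form of $h'$ to $p_0=1/2$, identify the three critical points $0$ and $\pm\sqrt{a^2-4}$, classify them, and use the symmetry $h(-\mu)=h(\mu)$ for equal heights. The only cosmetic difference is that the paper classifies the critical points by checking $h''(0)>0$ and $h''(\pm\sqrt{a^2-4})<0$ directly, whereas you appeal to the earlier mode-counting lemma (using $\xi(a)>1/2$ for $a>2$) — the second-derivative check you mention as a fallback is exactly what the paper does.
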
		
	
	\begin{proof}
		This is a special case of the previous lemma in which we can  solve   $h'(\mu)=0$ explicitly. 
		$$  	\frac{d}{d\mu} h(\mu |a, p_0 = 0.5)=-\frac{2 \pi  \mu \left(-a^2+\mu ^2+4\right)}{-2 a^2 \left(\mu^2-4 \right)+a^4+\left(\mu^2+4\right)^2}$$
		has three zeros,  0 and $\pm \mu_0$, where $\mu_0=\sqrt{a^2-4}$.  Furthermore we can check $h''(0)>0$, and $h''(\pm \mu_0) <0$.  Hence $h(\mu)$ has one local minimal at $\mu=0$ and two global maximum at  $\pm \mu_0.$   $h(\mu_0)= h(-\mu_0)$ due to symmetry.
	\end{proof}
	
	When $a>2,  p_0 >0.5$,  $h(\mu)$ either has a unique mode ((a) in Lemma \ref{lemma_h}), $\gamma^+>0$, or two local modes  ((b) in Lemma \ref{lemma_h}) with unequal heights $h(\gamma^+)> h( \gamma^-)$. The convergence to the right mode is a straightforward application of any usual Bayes consistency result (under model misspecification).
	
	\begin{lemma}   When $a>2,  p_0 >0.5 $,    the posterior $p(\mu | y_1, \dots, y_n)$ is asymptotically concentrated at the point mass $\gamma_+$. That is, for any $\eta>0$,   when $n\to \infty$,  $$\Pr(   \left| \mu - \gamma^+   \right | < \eta   |y_1,\dots, y_n)    \to  1,  a.s. $$
	\end{lemma}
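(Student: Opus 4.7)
The plan is to reduce the statement to a standard misspecified-Bayes consistency result. The key observation, combining case~(b) of Lemma~\ref{lemma_h} with the symmetry-based ordering $h(\gamma^+) > h(\gamma^-)$ noted immediately after it, is that under the hypotheses $a > 2$ and $p_0 > 0.5$ the function $h(\mu)$ has a \emph{unique global} maximizer $\gamma^+$ (whether case~(a) or case~(b) of Lemma~\ref{lemma_h} applies). Since $-\KL(p_{\mathrm{true}} \,\|\, f(\cdot|\mu))$ coincides with $h(\mu)$ up to a constant, the set $\mathcal{A}$ from \eqref{eq_limiting} degenerates to the singleton $\{\gamma^+\}$, placing us in the textbook setting of \citet{berk1966limiting}.

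Concretely, I would write the posterior probability of the complement of $U_\eta = \{|\mu - \gamma^+| < \eta\}$ as the Laplace-type ratio
\[
\Pr(\mu \notin U_\eta \mid y_{1:n}) = \frac{\int_{U_\eta^c} \exp\bigl(\ell_n(\mu) - \ell_n(\gamma^+)\bigr)\, p_0(\mu)\, d\mu}{\int_{\R} \exp\bigl(\ell_n(\mu) - \ell_n(\gamma^+)\bigr)\, p_0(\mu)\, d\mu},
\]
where $\ell_n(\mu) = \sum_{i=1}^n \log f(y_i|\mu)$ and $f(y|\mu) = [\pi(1+(y-\mu)^2)]^{-1}$. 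By the strong law of large numbers, $n^{-1}[\ell_n(\mu) - \ell_n(\gamma^+)] \to h(\mu) - h(\gamma^+)$ almost surely pointwise, and the envelope $|\log f(y|\mu_1) - \log f(y|\mu_2)| \leq C(\mu_1,\mu_2)\log(1+y^2)$ upgrades this to uniform convergence on compacts. The denominator then admits a lower bound $\exp(-n\varepsilon)$ for arbitrarily small $\varepsilon > 0$, using only that $p_0$ is positive and continuous at $\gamma^+$. On the compact piece $U_\eta^c \cap [-R,R]$, uniqueness of the global maximum yields $\sup(h - h(\gamma^+)) \leq -c(\eta,R) < 0$, producing an exponentially small bound that dominates the denominator.

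The main obstacle will be controlling the tail integral over $\{|\mu| > R\}$, since neither the Cauchy likelihood nor its mixture is log-concave or subgaussian and a direct Laplace argument does not apply. My plan is to exploit the explicit asymptotics $h(\mu) \sim -2\log|\mu|$ obtainable from the closed form in Lemma~\ref{lem_h_d}, combined with a uniform bound $n^{-1}\ell_n(\mu) \leq -2\log|\mu| + o_{a.s.}(1)$ for large $|\mu|$, derived by a Glivenko--Cantelli argument on the monotone family $\mu \mapsto \log(1+(y-\mu)^2)$. Under any proper prior (or any improper prior with $\int (1+\mu^2)^{-1} p_0(\mu)\, d\mu < \infty$), the tail contribution is then negligible, and collecting the three pieces gives $\Pr(\mu \notin U_\eta \mid y_{1:n}) \to 0$ almost surely. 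An even shorter route is to cite \citet{berk1966limiting} directly, whose regularity conditions are verified by the explicit form of $h$; I would include that reference but carry out the Laplace-style argument to keep the proof self-contained and to parallel the forthcoming analysis at $p_0 = 0.5$, where the two modes instead carry equal limiting mass.
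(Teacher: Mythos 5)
Your proposal is correct and follows essentially the same route as the paper's proof: identify $\gamma^+$ as the unique global maximizer of $h$ when $a>2$, $p_0>0.5$ (using the ordering $h(\gamma^+)>h(\gamma^-)$ from the proof of Lemma~\ref{lemma_h}), invoke the law of large numbers for $n^{-1}$ times the log posterior converging to $h$, and conclude concentration by a gap argument between the value of $h$ near $\gamma^+$ and its supremum elsewhere. The extra care you take with uniform convergence on compacts, the denominator lower bound, and the heavy-tail region $|\mu|>R$ simply makes explicit the steps the paper compresses into its final implication and otherwise delegates to the consistency lemmas of \citet{diaconis1986inconsistent} and \citet{berk1966limiting}.
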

	
	\begin{proof}
		The weak law of large numbers implies 
		$$\frac{1}{n} \log C_n p(\mu|y_1,\dots, y_n)    \to  h(\mu), $$
		where $C_n$ is the normalization constant.
		Since $h$ is $C^{\infty}$ smooth, we can choose $\delta = \frac{1}{2}\left(  h(\gamma^+)- h( \gamma^-)\right)>0 $, and there exists an $\epsilon$ neighborhood of $\gamma_+$ such that,   $$\inf_{\gamma:|\gamma- \gamma^+|<\epsilon}  h(\gamma)> h(\gamma^+)  -\delta > \sup_{\gamma: |\gamma- \gamma^+|>\epsilon} h(\gamma),$$  
		which implies  $$\Pr(\mu \in   (\gamma^+ - \epsilon ,   \gamma^+ + \epsilon)    |y_1,\dots, y_n)    \to  1$$
		
	\end{proof}	
	
	Now express the log posterior density of $\mu$ as
	
	\begin{align*}
	\log p(\mu|y_{1, \dots, n}  ) &=   \log  p_0 (\mu)+  \sum_{i=1}^n   -\log(1+  (y_i-\mu)^2)  - \log C_n   \\ 
	&= \log p_0 (\mu)+  n h(\mu) + \sqrt n G_n (\mu) -  \log C_n,
	\end{align*}
	where   $\log C_n$ is the log normalization constant, and 
	$$G_n (\mu) =  n^{-1/2} \sum_{i=1}^n  \left(     -\log(1+  (y_i-\mu)^2)  - h(\mu)       \right), $$
	which can also be written as
	$$G_n (\mu) = \int\! -\log (1-(\mu-y)^2) dB_n (y),  \quad B_n (y)= \sqrt n   ( {F}_n - {F}  ). $$
	where   ${F}_n$ and  ${F} $ are the empirical distribution of $y_1, \dots, y_n$ and the distribution function of the data generating process, respectively.
	
	The remaining argument transfers the results from $h(\mu)$ to the posterior. Loosely speaking, the remaining term $G_n (\mu)$ is asymptotically a Gaussian process and bounded by $o(n^{1/2})$, while the main term $nh(\mu)$ outside the neighborhood of the mode of $h(\mu)$ vanishes  $O(n)$ quicker than the inside. Therefore, the posterior $p(\mu| y_{1:n})$ will  asymptotically carry a mode around the mode in $h(\mu)$. That is Theorem \ref{thm_post_mode}.  A rigorous proof of Theorem \ref{thm_post_converge} follows from all previous lemmas and Lemma 2.4-2.12 in \citet{diaconis1986inconsistent}.
	
	\subsection{Proofs for Corollaries 4 and 5}
	Corollary 4 follows directly from Theorem 3.  
	In specific, for big $a$, we can further approximate the left and right mode near $\pm a$ using Lemma \ref{lemma_p5}. Then the Bayesian posterior is closed to a point mass that is spiked at $a$ for $ 0.5< p_0 < \xi(a) $, so the resulting KL divergence is always non-vanishing. The KL divergence between two Cauchy densities  $\mathrm{Cauchy} (\mu_1, \sigma)$ and  $\mathrm{Cauchy} (\mu_2, \sigma)$ has a closed form expression:
	$\KL\bigl( \mathrm{Cauchy} (\mu_1, \sigma)  ~||~ \mathrm{Cauchy} (\mu_2, \sigma)\bigr)  = \log \left(1 + \frac{(\mu_1-\mu_2)}{4\sigma^2}  \right).$
	
	In Corollary 5, we assume the parallel evaluation has captured both modes $\gamma^-$  and  $\gamma^+$ and we have classified them into two clusters. Using Corollary 1 , for any $0.5 < p_0 < \xi(a)$,  stacking solves 
	$$\min_{w \in \mathbbm{S}(2)} \KL\Bigl( (1-p_0)\, \mathrm{Cauchy} (a, 1)  +  p_0 \,\mathrm{Cauchy} (-a, 1)  ~|| ~w_1\, \mathrm{Cauchy} (\gamma^-, 1) + w_2 \, \mathrm{Cauchy} (\gamma^+, 1) \Bigr)  .$$
	The limiting Bayesian inference is a stacking solution corresponding to a weight of 1 on the right mode.  It is easy to check that $w=(0,1)$ is not the optimum by first order conditions. Using Corollary 1 we see the stacking weights yields a higher elpd.
	
	When $p_0=0.5,~ a>2$, in the $n \to \infty$ limit in Corollary \ref{them_op}, the stacking solution optimizes
$\min_{w \in \mathbbm{S}(2)} \KL\bigl( 0.5\, \mathrm{Cauchy} (a, 1)  +  0.5\, \mathrm{Cauchy} (-a, 1)~ ||~  w_1 \,\mathrm{Cauchy} (\sqrt{a^2-4}, 1) + w_2\,  \mathrm{Cauchy} (-\sqrt{a^2-4}, 1) \bigr), $
	which is attained at $w_1=w_2=0.5$.
	Direct computation shows tjat the KL divergence above at the optimal  $w_1=w_2=0.5$ approaches $0$ for big $a$. See Figure  \ref{fig_xi_function} for numerical evaluations.
	
	\section{Implementation in \texttt{Stan} and \texttt{R} package \texttt{loo}}
	We demonstrate the implementation of multiple-chain stacking in the general-purpose Bayesian inference engine Stan \citep{stan2020}.	We use the Cauchy mixture model as an example.  First save the following Stan file to \texttt{cauchy.stan}.
\begin{verbatim}
data {
  int n;
  vector[n] y;
}
parameters {
  real mu;
}
model {
  y ~ cauchy(mu, 1); 
}
generated quantities {   
  vector[n] log_lik;
  for (i in 1:n)
  log_lik[i] = cauchy_lpdf(y[i]| mu, 1);
}
\end{verbatim}
In the \texttt{generated quantities} block, we save \texttt{log\_lik}: the log likelihood of each data point at each posterior draw.   We generate data from a Cauchy mixture according to example (iii) in Figure \ref{fig_mode1}, and sample from its posterior densities.  Here is the R code:
	\begin{verbatim}
	library(rstan)
	library(loo)
	set.seed(100) 
	mu = c(-10,10)
	n = 100
	y = rep(NA, n)
	p = 0.5
	y[1:(n*p)] = rcauchy(n*(p),mu[1], 1)
	y[(n*(p)+1):n] = rcauchy(n*(1-p),mu[2], 1)
	K = 8
	# Fit the model in stan
	set.seed(100)
	stan_fit = stan("cauchy.stan", data=list(n=n, y=y), chains=K, seed=100)
	mu_sample = extract(stan_fit, permuted=FALSE, pars="mu")[,,"mu"]
	print(Rhat(mu_sample))
	\end{verbatim}
	 We are using eight parallel chains, and the resulting  $\widehat R$ is 1.6, indicating clear problems with mixing.
	
The R function \texttt{chain\_stack()} combines multiple  chains in a Stan fit object, returned by \texttt{stan()}. It only require the whole model fit once, and save the point wise log likelihood in each iteration,  called via \texttt{log\_lik} here. The \texttt{chain\_stack()} function uses the Stan optimizer (the default is L-BFGS), and its first time compiling  takes up to a few minutes. The tuning parameter \texttt{lambda} controls the Dirichlet prior on stacking weights.
	\begin{verbatim}
	> library(devtools)
	> source_url("https://github.com/yao-yl/Multimodal-stacking-code
	/blob/master/chain_stacking.R?raw=TRUE")
	> stan_model_object = stan_model("stacking_opt.stan")
	> stack_obj=chain_stack(fits=stan_fit,lambda=1.0001,log_lik_char="log_lik")
	
	Output:  Stacking 8 chains, with 100 data points and 1000 posterior draws;
	using stan optimizer, max iterations = 1e+05 
	...done.
	Total elapsed time for approximate LOO and stacking = 0.87 s 
	\end{verbatim}
We can assess the reliability of the approximate leave-one-out using the $\hat k$ diagnostics. In this example, all pointwise  $\hat k$ estimates (100 observations $\times$ 8 chains = 800 in total) are smaller than 0.5, indicating that the loo approximation is accurate in this example.
	\begin{verbatim}
	>  print_k(stack_obj)
	
	Output:                Count Proportion
	(-Inf, 0.5] (good)     800   1         
	(0.5, 0.7]  (ok)       0     0         
	(0.7, 1]    (bad)      0     0         
	(1, Inf)    (very bad) 0     0 
	\end{verbatim}
	We  access the chain wights using
	\begin{verbatim}
	> chain_weights = stack_obj$chain_weights
	\end{verbatim}
Finally, we can use  the weighted samples to calculate any posterior integral  $\E_{\mathrm{stacking}}(h(\mu) | y )$ as in  \eqref{eq_MC_final}. Here we compute $\Pr(\mu>0 | y)$:  the total  mass  of positive values in the stacked inference.
	\begin{verbatim}
	> h = function(mu){mu>0}
	> round(chain_weights %*% apply(h(mu_sample), 2, mean), digits=3)
	[1] 0.523
	\end{verbatim}
Alternatively, we provide a quasi Monte Carlo based importance resampling function  \texttt{mixture\_draws()}  that  draws posterior samples form the stacked inference. This enables us to compute the same integral $\E_{\mathrm{stacking}}[h(\mu) \mid y]$ using usual Monte Carlo methods:
	\begin{verbatim}
	> resampling=mixture_draws(individual_draws=mu_sample,weight=chain_weights)
	> mean(h(resampling)) 
	[1] 0.523
	\end{verbatim}

	\section{ Reproducible code and experiment details}
	Data and code for this paper are available at \url{https://github.com/yao-yl/Multimodal-stacking-code}.
	\paragraph{LDA topic models.}
	In Section \ref{sec_lda}, the text data are all words in the novel Pride and Prejudice. We preprocess the data by removing stop words and rare words.  The cleaned data are stored in  the posterior database (\url{https://github.com/MansMeg/posteriordb}), also uploaded as \texttt{staninput.RData}.
	We use the Stan implementation of LDA models (\url{https://mc-stan.org/docs/2_22/stan-users-guide/latent-dirichlet-allocation.html}) with little modification, as in the file \texttt{lda.stan}.
	
	In all experiments, We run parallel  inference on Columbia University's shared HPC Terremoto with one chain per core (CPU: Intel Xeon Gold 6126, 2.6 Ghz). When there is no further specification, we use the default starting values: draw all unconstrained parameters from uniform$(-2,2)$ randomly in each chain. 
	
	We pre-specify the maximum running time for 2000 iterations to be 24 hours and 4000  iterations to be 48 hours in all LDA models, and all running-out-of-time chains are discarded.
	
	\paragraph{Gaussian process regression.}
	The original data of \citet{neal1998regression} can be found in file \texttt{odata.txt}. In the first experiment, we use the first half as training data. In the second experiment, we simulate data with varying sample size according to his data generating process.  
	For hyper-parameter optimization, we found  two modes  by using initialization $(\log \rho, \log \alpha, \log \sigma)= (1, 0.7, 0.1)$ and $(-1, -5, 2)$, respectively. We approximate the posterior by MAP or Laplace approximation and importance resampling around two local mode.  The approximate samples have little overlap.
	
	In the full sampling for the $t$ regression, we compare four chain-combination strategies:  BMA, pseudo-BMA, uniform averaging, and stacking.  After each iteration of $(\sigma, \rho, \alpha, f)$, we draw  posterior predictive sample of $\tilde f= f(\tilde X)$,  from  $$\tilde f | \tilde X, X, f  ~ \sim \mbox{MVN}\left(K(\tilde X, X) K(X, X)^{-1} f, K(\tilde X, \tilde{X})- K(\tilde X, X)K(X, X^{-1})K(X, \tilde X)   \right), $$
	and compute the mean test data log predictive densities,
	$$1/n_\mathrm{test} \sum_{i=1}^{n_\mathrm{test}} \log p(\tilde y_i | \tilde f_i, \sigma) p(\tilde f_i, \sigma | X, y)d \tilde f_i d \sigma.$$
	The full-model specification is in \texttt{treg.stan}.
	

	\paragraph{Balanced one-way hierarchical model.}
	There can be entropic barriers in the non-centered parameterization too. The likelihood in \eqref{eq_ncp} is equivalent to $\xi_i \vert \tau, \mu, y  \sim \mbox{normal}(\frac{1}{\tau}(\bar y_{.j} - \mu) ,  \sigma  \tau^{-1} J^{-1/2} )$, where $\bar y_{.j}$ is the sample mean of group $j$.  Replacing $\tau$  and $\theta_j$ by plug-in estimates, we derive the conditional variance in the likelihood as $\mathrm{Var}(\xi_i | \mu, \sigma, y) \approx  \left( N^{-1} J \sigma^2\right) /  \sum_j (\bar{y_{.j}}-\mu)^2,$  which forms a funnel between $\mu$ and $\xi$.
	
	In the experiment, the true $\tau$ and $\sigma$ vary from $0.1$ to $20$.  In order to achieve a higher F-statistics so as to manifest posterior bimodality, we   additionally add  some student $t$-distributed noise added to  group mean in the unknown data generating process.
	$\theta_i  := \theta_i +   B  z_i    $, where $z_i$ is iid $t(1)$ distributed noise, and B varies from 0 to 50. The complete pooling, centered, and non-centered parameterizations are coded in the Stan files \texttt{random-effect-zero.stan},  \texttt{random-effect.stan} and \texttt{random-effect-ncp.stan}.
	
	\paragraph{Neural networks for MNIST.}
	\begin{wrapfigure}{R}{0.5\textwidth}
		\centering
		\includegraphics[width= 1 \linewidth]{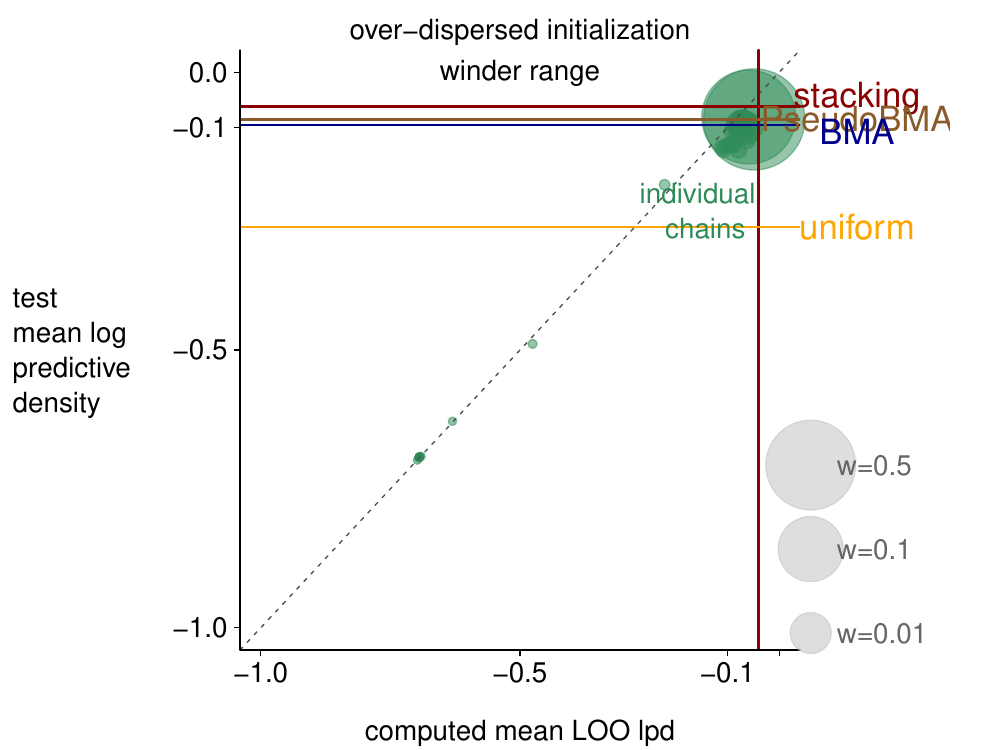}
		 	\caption{\em   Some individual changes in the overdispersed setting are out of lower-range and not shown in Figure \ref{fig_nn_wo_id}. This is the same graph with wider ranges.	} 
	\end{wrapfigure}
	We subsample 1000 data points from MNIST as training data, with subsampling details in \texttt{readmnist.R} and the saved test and training data in \texttt{input.RData}. The model is adapted from  Bob Carpenter's Stan code \url{https://github.com/stan-dev/example-models/blob/master/knitr/neural-nets/nn-simple.stan} with a few modifications as in \texttt{2classnn.stan}.
	
	In the experiment, we considered two choices of priors: (a) a fixed-scale elementwise normal$(0,3)$ prior on all unknown parameters $\phi \in \R, \beta \in \R^{40}$, and  $\alpha\in \R^{784\times 40}$; and (b)   $\alpha  \sim \mbox{normal}(0,\sigma_\alpha)$, $\beta  \sim \mbox{normal}(0,\sigma_\beta)$, $\sigma_\alpha, \sigma_\beta \sim \mbox{normal}^+(0,3)$.  For the experiment we are running, these two sets of priors yield nearly identical posterior sampling results and the same results after chain averaging.

\end{document}